\documentclass{article}
\usepackage{hyperref}

\usepackage{amssymb}
\usepackage{latexsym}
\usepackage{graphics}
\usepackage{amsmath}
\usepackage[thmmarks,amsmath,amsthm,hyperref]{ntheorem}

\usepackage{tikz}
  % styles for figures
  % \tikzstyle{point}=[circle,fill,inner sep=1pt, outer sep=0pt]
  % \tikzstyle{every path}=[thick,>=stealth]

% -----------------------------------------------------------
% Input paper-specific macros
% -----------------------------------------------------------
% xfig figures ------------------------------------------%
\def\hhh{\vphantom{$\setminus,{}^{-1}$}}

\theoremstyle{plain} %default
\newtheorem{theorem}{Theorem}[section]
\newtheorem{lemma}[theorem]{Lemma}
\newtheorem{proposition}[theorem]{Proposition}

\theoremstyle{definition}
\newtheorem{definition}[theorem]{Definition}

\theoremstyle{remark}
\newtheorem{remark}[theorem]{Remark}
\newtheorem{example}[theorem]{Example}

% languages----------------------------------------------------%
\newcommand{\BL}{\ensuremath{\mathcal{N}}}

\newcommand{\BLS}{\ensuremath{\BL_\Lambda}}

\newcommand{\fo}{\ensuremath{\mathrm{FO}}}

\newcommand{\fbar}{\ensuremath{\overline F}}
\newcommand{\ftilde}{\ensuremath{\widetilde{F}}}
\newcommand{\clbot}{\ensuremath{\mathcal{C}}}
\newcommand{\clcap}{\ensuremath{\mathcal{C}[\cap]}}

\newcommand{\alang}[1]{\ensuremath \overline{#1}}
\newcommand{\elang}[1]{\ensuremath \BL(#1)}
\newcommand{\zlang}[1]{\ensuremath\widehat{#1}}

\newcommand{\rel}{\ensuremath{R}}
\newcommand{\ler}{\ensuremath{\rel^{-1}}}
\newcommand{\id}{\ensuremath{\mathit{id}}}
\renewcommand{\div}{\ensuremath{\mathit{di}}}
\newcommand{\di}{\div}

\newcommand{\cpi}{\ensuremath{\overline\pi}}

\DeclareMathOperator{\adom}{adom}

\newcommand{\ltpath}{\leq^{\mathrm{path}}} % less than for path query level
\newcommand{\pathq}{\ltpath}
\newcommand{\ltbool}{\leq^{\mathrm{bool}}} 
\newcommand{\boolq}{\ltbool}

\newcommand{\nltstrong}{\not\ltbool_{\mathrm{strong}}}
\newcommand{\nltstrongpath}{\not\ltpath_{\mathrm{strong}}}

% order relation for unlabeled graphs (structures on a single relation)
 % less than for UnLabeled graphs on boolean level

\newcommand{\mG}{\ensuremath{\mathbf{G}}} % marked graph

% graphs----------------------------------------------------%

\newcommand{\lab}{\ensuremath{\Lambda}}

\newcommand{\conv}[1]{\ensuremath{{#1}^{-1}}}%

%\newcommand{\pathto}{\ensuremath{\rightsquigarrow}}
%newcommand{\npathto}{\ensuremath{\not\rightsquigarrow}}

\newcommand{\ignore}[1]{}

%excessive white space---------------------------------------------
\def\clap#1{\hbox to 0pt{\hss#1\hss}}

% tikz stuff
\tikzstyle{vertex}=[circle,fill=black,minimum size=4pt,inner sep=0pt]

\usetikzlibrary{decorations.pathmorphing}
\usetikzlibrary{decorations.markings}
\usetikzlibrary{arrows,calc,shapes}

%bisim stuff

% true and false
\newcommand{\ttrue}{\textit{true}}
\newcommand{\ffalse}{\textit{false}}
%%% Local Variables: 
%%% mode: latex
%%% TeX-master: "ra-techreport"
%%% End: 
\renewcommand{\setminus}{-}
\makeatletter
\gdef\SetFigFont#1#2#3#4#5{%
\reset@font\fontsize{#1}{#2pt}%
\fontfamily{#3}\fontseries{#4}\fontshape{#5}%
\selectfont}%
\makeatother

\begin{document}

\title{Relative Expressive Power of \\ Navigational Querying on
Graphs\thanks{An extended abstract announcing the results of
this paper was presented at the 14th International Conference on
Database Theory, Uppsala, Sweden, March 2011.}}
\author{George H.L. Fletcher \and Marc Gyssens \and Dirk Leinders
\and Dimitri Surinx \and Jan Van den Bussche \and Dirk Van Gucht
\and Stijn Vansummeren \and Yuqing Wu}

\maketitle

\begin{abstract}
  Motivated by both established and new applications, we study
  navigational query languages for graphs (binary relations).  The
  simplest language has only the two operators union and composition,
  together with the identity relation.  We make more powerful
  languages by adding any of the following operators: intersection;
  set difference; projection; coprojection; converse; and the
  diversity relation.  All these operators map binary relations to
  binary relations.  We compare the expressive power of all resulting
  languages.  We do this not only for general path queries (queries
  where the result may be any binary relation) but also for boolean or
  yes/no queries (expressed by the nonemptiness of an expression).
  For both cases, we present the complete Hasse diagram of relative
  expressiveness.  In particular the Hasse diagram for boolean
  queries contains some nontrivial separations and a few surprising
  collapses.
\end{abstract}

\section{Introduction}
\label{sec-introduction}
Graph databases, and the design and analysis of query languages
appropriate for graph data, have a rich history in database
systems and theory research \cite{gutierrez_survey}.  Originally
investigated from the perspective of object-oriented databases,
interest in graph databases research has been continually
renewed, motivated by data on the Web \cite{abs_book,flm_dbwebsurvey}
and new applications such as dataspaces \cite{halevy_dataspaces},
Linked Data \cite{bhb-l_linkedata}, and RDF \cite{rdfprimer}.

Typical of access to graph-structured data is its navigational
nature.  Indeed, in restriction to trees, there is a standard
navigational query language, called XPath, whose expressive power
has been intensively studied
\cite{benediktfankuper_xpath,marx_conditionalxpath}.
XPath has been formalized in terms of a number of basic operators on
binary relations \cite{marxrijke_xpath}.  Hence a natural
approach \cite{nsparql,gxpath,angles_pdl} is to take this same
set of operators but now evaluate them over graphs instead of
over trees.  Our goal in this paper is to understand the relative
importance of the different operators in this setting.

Concretely, in the present paper, we consider a number of natural
operators on binary relations (graphs): union; composition;
intersection; set difference; projection; coprojection; converse;
and the identity and diversity relations.  While some of these
operators also appear in XPath, they are there evaluated on
trees.  The largest language that we consider has all operators,
while the smallest language has only union, composition, and the
identity relation.  When a language has set difference, it also
has intersection, by $R \cap S = R - (R-S)$.  Interestingly, the
ensemble of all operators except intersection and set difference
precisely characterizes the first-order queries safe for
bisimulation \cite{benthem_safe,marxrijke_xpath}.  This logical
grouping of operators is also present in our research, where we
often have to treat the case without intersection separately from
the case with intersection.\footnote{Strictly speaking, van
Benthem's discussion \cite{benthem_safe} does not include the
converse operator nor the identity and diversity relations.}

Just as in the relational algebra,
expressions are built up from input relation names using these
operators.  Since each operator maps binary relations to binary
relations, these query languages express queries from binary
relations to binary relations: we call such queries \emph{path
queries}.  By identifying nonemptiness with the boolean value
`true' and emptiness with `false', as is standard in database
theory \cite{ahv_book}, we can also express yes/no queries
within this framework. To distinguish them from general path
queries, we shall refer to the latter as \emph{boolean queries}.

The contribution of the present paper is providing a complete
comparison of the expressiveness of all resulting languages, and this
both for general path queries and boolean queries.  While establishing
the relative expressiveness for general path queries did not yield
particularly surprising results, the task for the case of boolean
queries proved much more challenging.  For example, consider the
converse operator $R^{-1} = \{(y,x) \mid (x,y) \in R\}$.  On the one
hand, adding converse to a language not yet containing this feature
sometimes adds boolean query power. This is, e.g., the case for the
language containing all other features.  The proof, however, is
nontrivial and involves a specialized application of invariance under
bisimulation known from arrow logics. On the other hand, adding
converse to a language containing projection but not containing
intersection does not add any boolean query power.  We thus obtain a
result mirroring similar results known for XPath on trees
\cite{benediktfankuper_xpath,olteanu_forward,wu_pospathfragment},
where, e.g., downward XPath is known
to be as powerful as full XPath for queries evaluated at the root.

Let us briefly discuss some of the
methods we use.  In many cases where we separate a language
$\mathcal{L}_1$ from a language $\mathcal{L}_2$, we can do this in a
strong sense: we are able to give a single counterexample, consisting
of a pair $(A,B)$ of finite binary relations such that $A$ and $B$ are
distinguishable by an expression from $\mathcal{L}_1$ but
indistinguishable by any expression from $\mathcal{L}_2$.  Notice that
in general, separation is established by providing an infinite
sequence of relation pairs such that some expression from
$\mathcal{L}_1$ distinguishes all pairs but no expression of
$\mathcal{L}_2$ distinguishes all pairs. Existence of a single
counterexample pair is therefore nonobvious, and we do not really know
whether there is a deeper reason why in our setting this strong form
of separation can often be established. Strong separation is desirable
as it immediately implies separation of $\mathcal{L}_1$ not only from $\mathcal{L}_2$ but also from the infinitary variant of $\mathcal{L}_2$ (which allows infinite unions, as in infinitary logic \cite{ef_fmt2}). Note that indistinguishability of a pair of finite binary relations can in
principle be checked by computer, as the number of possible binary
relations on a finite domain is finite.
Indeed, in many cases we have used this ``brute-force approach'' to
verify indistinguishability.
In some cases, however, this
approach is not feasible within a reasonable time.
Fortunately, by applying invariance under bisimulation for arrow
logics \cite{marxvenema_multi},
we can alternatively check a sufficient condition
for indistinguishability in polynomial time.  We have applied this
alternative approach in our computer checks. Finally, the cases where
we could not establish strong separation fall in the class of
conjunctive queries \cite{ahv_book}.
We developed a method based on homomorphism
techniques to establish ordinary separation for these cases.

The languages considered here are very natural and date all the
way back to the ``calculus of relations'' created by Peirce and
Schr\"oder, and popularized and greatly developed by Tarski and
his collaborators \cite{tarski_relcalc,tarskigivant}.  The full
language actually has the same expressive power as 3-variable
first-order logic (FO$^3$) under the active-domain semantics, for
path queries as well as for boolean queries.  Due to the
naturalness of the languages, they appear in many other fields
where binary relations are important, such as description logics,
dynamic logics, arrow logics, and relation algebras
\cite{dlhandbook,dynamiclogicbook,marxvenema_multi,handbookmodallogic,maddux_book,hh_relalgames}.
Thus, our results also yield some new insight into these fields.
The investigation of expressive power as in the present paper is
very natural from a database theory perspective.  In the
above-mentioned fields, however, one is primarily interested in
other questions, such as computational complexity of model
checking, decidability of satisfiability, and axiomatizability of
equivalence.  The expressiveness issues investigated in this
paper have not been investigated before.\footnote{Strictly
speaking, one may argue that the ``calculus of relations ''
refers to a set of equational axioms now known as the axioms for
relation algebras (see the references above).  However, the
original and natural interpretation of the operations of the
calculus of relations is clearly that of operations on binary
relations \cite{tarskigivant,pratt_relcalc}.  In modern
terminology this interpretation corresponds to `representable'
relation algebras.  We stress that the present paper
focuses on the expressive power of the various operations and not on
axiomatizability, completeness of equations, or representability
of abstract relation algebras}.

At this point we must repeat that
also in the database field, graph query languages have been
investigated intensively.  There is, for example, the vast body
of work on conjunctive regular path queries (CRPQs)
\cite{barcelo_crpq_pods}.  As a matter of fact, CRPQs are
subsumed in the calculus of relations, with the exception of the
Kleene star (transitive closure) operator.  Indeed, the results
reported in this journal article have been extended to the setting where
transitive closure is present, as originally announced in our
conference paper \cite{rafragments}.  This extension will be
elaborated in a companion journal article \cite{nav_with_tc};
additional results on the special case of a single relation name
have been published in a third journal article \cite{amai_dipi}.

This paper is further organized as follows. In Section\nobreakspace \ref {sec-prelim}, we
define the class of languages studied in the paper. In
Section\nobreakspace \ref {sec:resol-expr}, we describe the techniques we use to
separate one language from another.  In section~\ref{sec:technical-results} we present our two main technical results in a self-contained manner: first, the added power of projection in expressing boolean queries, compared to the language without intersection and coprojection; second, the elimination of converse in languages with projection, but without intersection.
Then we establish the
complete Hasse diagram of relative expressiveness. We do so for
path queries in Section\nobreakspace \ref {sec:path-queries}, and for boolean queries
in Section\nobreakspace \ref {sec:boolean-queries}.  Finally, we discuss future
research directions in Section\nobreakspace \ref {sec:further-research}.

\section{Preliminaries}
\label{sec-prelim}

In this paper, we are interested in navigating over graphs whose
edges are labeled by symbols from a finite, nonempty set of
labels $\lab$.  We can regard these edge labels as binary
relation names and thus regard $\lab$ as a relational database
schema.  For our purposes, then, a \emph{graph} $G$ is an instance of
this database schema $\lab$.  That is, assuming an infinite
universe $V$ of data elements called \emph{nodes}, $G$ assigns to
every $R \in \lab$ a relation $G(R) \subseteq V\times V$.  Each
pair in $G(R)$ is called an \emph{edge} with label $R$.
In what follows, $G(R)$
may be infinite, unless explicitly stated otherwise.  All
inexpressibility results in this paper already hold in
restriction to finite graphs, however. 

The most basic language for navigating over graphs we consider is the
algebra $\BL$ whose expressions are built recursively from the edge
labels, the primitive $\emptyset$, and the primitive $\id$, using
composition ($e_1 \circ e_2$) and union ($e_1 \cup
e_2$). Semantically, each expression $e \in \BL$ defines a path
query. A \emph{path query} is a function $q$ taking any graph $G$ as
input and returning a binary
relation $q(G) \subseteq \adom(G) \times \adom(G)$. Here, $\adom(G)$
denotes the \emph{active domain} of $G$, which is the set of all
entries occurring in one of the relations of $G$.  Formally,
\[ \adom(G) = \{m \mid \exists n, \exists R \in \lab : (m,n) \in G(R) \lor
(n,m) \in G(R)\}.\]

In detail, the semantics of $\BL$ is inductively defined as follows:
\begin{align*}
\rel(G)&=G(\rel)\,{\rm;}\\
\emptyset(G)& = \emptyset\,\rm;\\
\id(G)&=\{(m,m)\mid m \in \adom(G)\}\,\rm;\\
e_1\circ e_2(G)&=\{(m,n)\mid \exists p\, ((m,p)\in e_1(G)\ \&\ (p,n)\in
e_2(G))\}\,\rm; \\
e_1 \cup e_2(G) & =  e_1(G) \cup e_2(G)\,\rm.
\end{align*}
% \par\vspace*{-\baselineskip}{\small\begin{eqnarray*}
% %
% \rel(G)&=&G(\rel)\,{\rm;}\\
% \emptyset(G)& =& \emptyset\,\rm;\\
% \id(G)&=&\{(m,m)\mid m \in \adom(G)\}\,\rm;\\
% e_1\circ e_2(G)&=&\{(m,n)\mid(\exists p)((m,p)\in e_1(G)\ \&\ (p,n)\in
% e_2(G))\}\,\rm; \\
% e_1 \cup e_2(G) & = & e_1(G) \cup e_2(G)\,\rm.
% \end{eqnarray*}}
\noindent The basic algebra $\BL$ can be extended by adding some of
the following features: diversity ($\div$), converse ($\conv e$),
intersection ($e_1\cap e_2)$, difference ($e_1\setminus e_2$),
projections ($\pi_1(e)$ and $\pi_2(e)$), and the coprojections
($\cpi_1(e)$ and $\cpi_2(e)$). We refer to the operators in the basic
algebra $\BL$ as \emph{basic features}; we refer to the extensions as
\emph{nonbasic features}. The semantics of the extensions is as
follows:
\begin{align*}
\div(G)&=\{(m,n)\mid m,n\in\adom(G)\ \&\ m\not=n\}\,\rm;\\
\conv e(G)&=\{(m,n)\mid (n,m)\in e(G)\}\,\rm;\\
e_1 \cap e_2(G) & =  e_1(G) \cap e_2(G)\,\rm;\\
e_1 \setminus e_2(G) & =  e_1(G) \setminus e_2(G)\,\rm;\\
\pi_1(e)(G)&=\{(m,m)\mid m\in\adom(G)\ \&\ \exists n\, (m,n)\in e(G)\}\,\rm;\\
\pi_2(e)(G)&=\{(m,m)\mid m\in\adom(G)\ \&\ \exists n\, (n,m)\in e(G)\}\,\rm;\\
\cpi_1(e)(G)&=\{(m,m)\mid m\in\adom(G)\ \&\ \lnot\exists n\, (m,n)\in
e(G)\}\,\rm;\\
\cpi_2(e)(G)&=\{(m,m)\mid m\in\adom(G)\ \&\ \lnot\exists n\, (n,m)\in
e(G)\}\,\rm.
\end{align*}
% \par\vspace*{-\baselineskip}{\small\begin{eqnarray*}
% %
% \div(G)&=&\{(m,n)\mid m,n\in\adom(G)\ \&\ m\not=n\}\,\rm;\\
% \conv e(G)&=&\{(m,n)\mid (n,m)\in e(G)\}\,\rm;\\
% e_1 \cap e_2(G) & = & e_1(G) \cap e_2(G)\,\rm;\\
% e_1 \setminus e_2(G) & = & e_1(G) \setminus e_2(G)\,\rm;\\
% \pi_1(e)(G)&=&\{(m,m)\mid m\in\adom(G)\ \&\ (\exists n)(m,n)\in e(G)\}\,\rm;\\
% \cpi_1(e)(G)&=&\{(m,m)\mid m\in\adom(G)\ \&\ \lnot(\exists n)(m,n)\in
% e(G)\}\,\rm;\\
% \tc{e}(G)&=&\bigcup_{k\geq 1} e^k(G)\,\rm.
% \end{eqnarray*}}
% \par\vspace*{-\baselineskip}\noindent

If $F$ is a set of nonbasic features, we denote by $\BL(F)$ the
language obtained by adding all features in $F$ to $\BL$. For example,
$\BL(\cap)$ denotes the extension of $\BL$ with intersection, and
$\BL(\cap, \pi)$ denotes the extension of $\BL$ with intersection and
both projections.\footnote{We do not consider extensions of $\BL$ in
  which only one of the two projections, respectively one of the two
  coprojections, is present.}
We will see below that extending the basic algebra with
diversity, difference, and converse is sufficient to express all
other nonbasic features.  This full language
$\BL(\setminus,\div,\conv{})$ is known
as the \emph{calculus of relations}.

We will actually compare language expressiveness at the level of both
path queries and boolean queries.  Path queries were defined
above; a \emph{boolean query} is a function from graphs to
$\{\mathrm{true},\mathrm{false}\}$.

\begin{definition}
  A path query $q$ is expressible in a language $\BL(F)$ if there
  exists an expression $e \in \BL(F)$ such that, for every graph $G$,
  we have $e(G) = q(G)$. Similarly, a boolean query $q$ is expressible
  in $\BL(F)$ if there exists an expression $e \in \BL(F)$ such that,
  for every graph $G$, we have that $e(G)$ is nonempty if, and only
  if, $q(G)$ is true. In both cases, we say that \emph{$q$ is
    expressed by $e$}.
\end{definition}

In what follows, we write $\BL(F_1) \ltpath \BL(F_2)$ if every path
query expressible in $\BL(F_1)$ is also expressible in
$\BL(F_2)$. Similarly, we write $\BL(F_1) \ltbool \BL(F_2)$ if every
boolean query expressible in $\BL(F_1)$ is also expressible in
$\BL(F_2)$. Note that $\BL(F_1) \ltpath \BL(F_2)$ implies $\BL(F_1)
\ltbool \BL(F_2)$, but not necessarily the other way around. We write
$\not\ltpath$ and $\not\ltbool$ for the negation of $\ltpath$ and
$\ltbool$.
\begin{remark}
The attentive reader will note that every fragment $\elang{F}$ actually depends on the label vocabulary $\Lambda$ which is arbitrary but fixed. So to be fully precise we would need to use the notation $\BLS(F)$. For all the results in this paper, a comparison of fragments of the form $\elang{F_1} \leq \elang{F_2}$ (with $\leq$ being $\ltpath$ or $\ltbool$) can be interpreted to mean that we have $\BLS(F_1) \leq \BLS(F_2)$ for every $\Lambda$. Moreover, whenever we have a negative result of the form $\elang{F_1} \not\leq \elang{F_2}$, this will actually already hold for the simplest $\Lambda$ consisting of a single label. 

To illustrate, in the interpretation described above, the $\id$ relation may be considered redundant in any fragment that includes the projections. Indeed, we can express $\id$ as $\bigcup_{R\in \Lambda} (\pi_1(R) \cup \pi_2(R))$. This observation falls outside the scope of the present investigation, however, since we do not consider $\id$ as an optional feature; it belongs to all fragments considered in this paper.
\end{remark}

\begin{remark}
The language XPath \cite{xpath} also includes the path equality operator
$.[e_1=e_2]$ (in XPath called `general comparison'),
with the following semantics:
$$ .[e_1=e_2](G) = \{(m,m)\mid m\in\adom(G)\ \&\ \exists n\,
(m,n) \in e_1(G) \cap e_2(G)\}. $$
This operator can be expressed in the fragment $\mathcal
N(\pi,\cap)$ as $\pi_1(e_1 \cap e_2)$, as well as in the fragment $\mathcal
N(\conv{},\cap)$ as $(e_1 \cap e_2^{-1}) \cap \id$.  Actually the
latter expression is not particular to this example, because it
reflects the way in which projection is expressed using converse
and intersection, as we will see in
Section~\ref{sec:path-queries}.
\end{remark}

\section{Tools to establish separation}
\label{sec:resol-expr}

Our results in Section\nobreakspace \ref {sec:path-queries} and
\ref{sec:boolean-queries} will use the following tools to separate a
language $\BL(F_1)$ from a language $\BL(F_2)$, i.e., to
establish that
$\BL(F_1) \not\ltpath \BL(F_2)$, or $\BL(F_1) \not\ltbool \BL(F_2)$. It will also be useful to consider stronger
variants of $\not\ltpath$ and $\not\ltbool$.
\begin{definition} \label{def:strong} The language $\BL(F_1)$ is
  \emph{strongly separable from} the language $\BL(F_2)$ \emph{at the
    level of path queries} if there exists a path query $q$
  expressible in $\BL(F_1)$ and a finite graph $G$, such that, for
  every expression $e \in \BL(F_2)$, we have $q(G) \neq e(G)$. We
  write $\BL(F_1) \nltstrongpath \BL(F_2)$ in this case. Similarly,
  $\BL(F_1)$ is \emph{strongly separable from} $\BL(F_2)$ \emph{at the
    level of boolean queries} if there exists a boolean query $q$
  expressible in $\BL(F_1)$ and two finite graphs $G_1$ and $G_2$,
  with $q(G_1)$ true and $q(G_2)$ false, such that, for every
  expression $e \in \BL(F_2)$, $e(G_1)$ and $e(G_2)$ are both empty,
  or both nonempty. We write $\BL(F_1) \nltstrong \BL(F_2)$ in this
  case.
\end{definition}

\subsection{Path separation}
\label{sec:strong-path-separation}

Since $\BL(F_1) \ltpath \BL(F_2)$ implies $\BL(F_1) \ltbool \BL(F_2)$,
also $\BL(F_1) \not\ltbool \BL(F_2)$ implies $\BL(F_1) \not\ltpath
\BL(F_2)$ by contraposition. In most instances, we can therefore
establish separation at the level of general path queries by
establishing separation at the level of boolean queries. In the cases
where $\BL(F_1) \not\ltpath \BL(F_2)$ although $\BL(F_1) \ltbool
\BL(F_2)$, we identify a finite graph $G$ and an expression $e_1$ in
$\BL(F_1)$ and show that, for each expression $e_2$ in $\BL(F_2)$,
$e_1(G) \neq e_2(G)$. Notice that we actually establish strong path
separation in those cases.

\subsection{Boolean separation}
\label{sec:boolean-separation}

To establish separation at the level of boolean queries, we use the
following techniques.

\subsubsection{Brute-force approach}
\label{sec:brute-force-approach}

Two graphs $G_1$ and $G_2$ are said to be \emph{distinguishable} at the
boolean level in a language $\BL(F)$ if there exists a boolean query
$q$ expressible in $\BL(F)$ such that exactly one of $q(G_1)$ and
$q(G_2)$ is true, and the other is false. If such a query does not
exists, $G_1$ and $G_2$ are said to be \emph{indistinguishable} in $\BL(F)$.

Using this terminology, two languages $\BL(F_1)$ and $\BL(F_2)$ are
\emph{strongly separable} if there exist two finite graphs $G_1$ and $G_2$
that are distinguishable in $\BL(F_1)$, but indistinguishable in
$\BL(F_2)$.

For two finite graphs $G_1$ and $G_2$, (in)distinguishability in a
language $\BL(F)$ can easily be machine-checked through the
Brute-Force Algorithm described below.

First observe that $\adom(G_1)$ and $\adom(G_2)$ are finite since
$G_1$ and $G_2$ are finite. Moreover, for any $e$ in $\BL(F)$, $e(G_1)
\subseteq \adom(G_1) \times \adom(G_1)$ and $e(G_2) \subseteq
\adom(G_2) \times \adom(G_2)$. Hence, $e(G_1)$ and $e(G_2)$ are finite
and the set $\{(e(G_1), e(G_2)) \mid e \in \BL(F) \}$ is also
finite. Clearly, $G_1$ is indistinguishable from $G_2$ if this set
contains only pairs that are both empty or both nonempty.

The Brute-Force Algorithm computes the above set by first initializing 
the set
\[
B = \{ (\id(G_1), \id(G_2))\}\, \cup\, \{ (\div(G_1), \div(G_2))\}\, \cup\, \{
(G_1(R), G_2(R)) \mid R \in \lab \}
\]
(where $\{ (\div(G_1), \div(G_2))\}$ is omitted if $\div \not \in
F$). It then adds new pairs $(R_1, R_2)$ to $B$ by closing $B$
pair-wise under the features in $\BL(F)$. That is, for every binary
operator $\otimes$ in $\BL(F)$ and all pairs $(R_1,R_2), (S_1,S_2)$ in
$B$ the algorithm adds $(R_1 \otimes S_1, R_2 \otimes S_2)$ to $B$,
and similarly for the unary operators. Since there are only a finite
number of pairs, the algorithm is guaranteed to end. Of course, the
worst-case complexity of this brute-force algorithm is
exponential. Nevertheless, we have successfully checked
indistinguishability using this Brute-Force Algorithm in many of the
cases that follow.

\subsubsection{Bisimulation}
\label{sec:bisim-invar}

We will not always be able to use the methodology above to separate
two languages.  In particular, to establish that $\BL(\conv{},\cap)
\not \ltbool \allowbreak \BL(\setminus,\div)$ we will employ
invariance results under the notion of bisimulation below.  In
essence, this notion is based on the notion of bisimulation known from
arrow logics \cite{marxvenema_multi}.
Below, we adapt this notion to the current setting.

We require the following preliminary definitions.  Let $\mG=(G,a,b)$
denote a \emph{marked graph}, i.e., a graph $G$ with $a,b\in\adom(G)$.
The \emph{degree} of an expression $e$ is the maximum depth of nested
applications of composition, projection and coprojection in $e$.  For
example, the degree of $R \circ R$ is 1, while the degree of both $R
\circ (R \circ R)$ and $\pi_1(R \circ R)$ is 2.  Intuitively, the
depth of $e$ corresponds to the quantifier rank of the standard
translation of $e$ into $\fo^3$. For a set of features $F$, $\BL(F)_k$
denotes the set of expressions in $\BL(F)$ of degree at most~$k$.

In what follows, we are only concerned with bisimulation results
regarding $\BL(\setminus, \div)$. The following is an appropriate
notion of bisimulation for this language.

\begin{definition}[Bisimilarity]
\label{def-diff-bisimilar}

Let $k$ be a natural number, and let $\mG_1=(G_1,a_1,b_1)$ and
$\mG_2 = (G_2,a_2, b_2)$ be marked graphs. We say that
$\mG_1$ is bisimilar to $\mG_2$ up to depth
$k$, denoted $\mG_1\simeq_k \mG_2$, if the
following conditions are satisfied:
\begin{description}

\item[Atoms] $a_1 = b_1$ if and only if $a_2 = b_2$; and $(a_1,b_1)
  \in G_1(R)$ if and only if $(a_2,b_2) \in G_2(R)$, for every $R \in
  \lab$;

\item[Forth] if $k>0$, then, for every $c_1$ in $\adom(G_1)$, there exists
    some $c_2$ in $\adom(G_2)$ such that
    both $(G_1,a_1,c_1)\simeq_{k-1}(G_2,a_2,c_2)$ and
    $(G_1,c_1,b_1)\ \simeq_{k-1} (G_2,c_2,b_2)$;

\item[Back] if $k>0$, then, for every $c_2$ in $\adom(G_2)$, there
    exists some $c_1$ in $\adom(G_1)$ such that
    both $(G_1,a_1,c_1)\simeq_{k-1}(G_2,a_2,c_2)$ and
    $(G_1,c_1,b_1)\ \simeq_{k-1} (G_2,c_2,b_2)$.
\end{description}
\end{definition}

Expressions in $\BL(\setminus,\div)$ of depth at most $k$ are
invariant under bisimulation:

\begin{proposition}
\label{theo-diff-bisimilar}
Let $k$ be a natural number; let $e$ be an expression in
$\BL(\setminus,\div)_k$; and let $\mG_1 = (G_1,a_1,b_1)$ and $\mG_2 =
(G_2,a_2,b_2)$ be marked graphs. If $\mG_1\simeq_k \mG_2$ then
$(a_1,b_1) \in e(G_1) \Leftrightarrow (a_2,b_2) \in e(G_2)$.
\end{proposition}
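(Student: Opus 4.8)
The plan is to proceed by induction on $k$, and within the inductive step, by structural induction on the expression $e$. The base case $k=0$ is immediate: an expression of degree $0$ in $\BL(\setminus,\div)$ is built from $R\in\lab$, $\emptyset$, $\id$, $\div$, and $\conv{}$, using only $\cup$, $\cap$, and $\setminus$ (no composition, projection, or coprojection, since those would raise the degree). For such $e$, whether $(a_1,b_1)\in e(G_1)$ depends only on whether $a_1=b_1$ (for $\id$ and $\div$) and on whether $(a_1,b_1)$ or $(b_1,a_1)$ lies in $G_1(R)$ for each $R$; but the \textbf{Atoms} condition of $\mG_1\simeq_0\mG_2$ — applied both to $(a_1,b_1)$ and, after the trivial observation that $\simeq_0$ is symmetric in swapping $a$ and $b$ when combined with converse, to $(b_1,a_1)$ — guarantees all of these agree between $G_1$ and $G_2$. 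One should be slightly careful here: the \textbf{Atoms} condition as stated only talks about $(a_1,b_1)\in G_1(R)$, so to handle $\conv R$ we need that $\mG_1\simeq_0\mG_2$ implies $(G_1,b_1,a_1)\simeq_0(G_2,b_2,a_2)$ as well; this is trivially true because the \textbf{Atoms} condition is symmetric under swapping the two marked points together on both sides.

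For the inductive step, assume the claim for $k-1$ and let $e\in\BL(\setminus,\div)_k$. The cases $e=R$, $e=\emptyset$, $e=\id$, $e=\div$ reduce to the \textbf{Atoms} condition exactly as in the base case. The Boolean operator cases $e=e_1\cup e_2$, $e=e_1\cap e_2$, $e=e_1\setminus e_2$ are immediate from the induction hypothesis on the subexpressions (which also have degree $\le k$). The case $e=\conv{e_1}$: here $(a_1,b_1)\in e(G_1)$ iff $(b_1,a_1)\in e_1(G_1)$; we need $\mG_1\simeq_k\mG_2$ to imply $(G_1,b_1,a_1)\simeq_k(G_2,b_2,a_2)$, which follows by a straightforward induction showing $\simeq_k$ is preserved under simultaneously swapping the two marked points (the \textbf{Atoms}, \textbf{Forth}, and \textbf{Back} conditions are all visibly symmetric in this swap). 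Then apply the induction hypothesis on $e_1$ — noting $\deg(e_1)=\deg(\conv{e_1})\le k$ — to the swapped marked graphs.

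The main obstacle, and the only genuinely substantive case, is composition: $e=e_1\circ e_2$ with $\deg(e_1),\deg(e_2)\le k-1$. Suppose $(a_1,b_1)\in e(G_1)$; then there is $c_1\in\adom(G_1)$ with $(a_1,c_1)\in e_1(G_1)$ and $(c_1,b_1)\in e_2(G_1)$. This is precisely where the \textbf{Forth} condition is designed to be used: since $k>0$, it supplies $c_2\in\adom(G_2)$ with $(G_1,a_1,c_1)\simeq_{k-1}(G_2,a_2,c_2)$ and $(G_1,c_1,b_1)\simeq_{k-1}(G_2,c_2,b_2)$. Applying the induction hypothesis (for $k-1$) to $e_1$ on the first pair yields $(a_2,c_2)\in e_1(G_2)$, and to $e_2$ on the second pair yields $(c_2,b_2)\in e_2(G_2)$, hence $(a_2,b_2)\in e(G_2)$. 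The reverse implication is symmetric, using \textbf{Back} in place of \textbf{Forth}. (Projection and coprojection, were they in the language, would be handled the same way — $\pi_1(e_1)$ asks for a witness $c_1$ with $(a_1,c_1)\in e_1(G_1)$ and uses \textbf{Forth}/\textbf{Back} with $b$ held fixed at $a$ — but they are absent from $\BL(\setminus,\div)$, so no separate argument is needed here.) Assembling these cases completes the induction.
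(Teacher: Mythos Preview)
Your induction is essentially the same approach the paper sketches (``a straightforward induction on $e$''), and the cases for the atoms, the Boolean connectives, and composition are handled correctly; composition is indeed the only substantive case, and you use the \textbf{Forth}/\textbf{Back} conditions exactly as intended.

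There is, however, one genuine error. The language $\BL(\setminus,\div)$ does \emph{not} contain converse: only set difference and diversity are added to the basic operators. So the converse case you treat is superfluous---and, more importantly, your argument for it is wrong. You claim that $\simeq_k$ is preserved under simultaneously swapping the two marked points on both sides, because ``the \textbf{Atoms}, \textbf{Forth}, and \textbf{Back} conditions are all visibly symmetric in this swap.'' They are not. The \textbf{Atoms} clause asserts $(a_1,b_1)\in G_1(R)\Leftrightarrow(a_2,b_2)\in G_2(R)$, which after swapping becomes the \emph{independent} statement $(b_1,a_1)\in G_1(R)\Leftrightarrow(b_2,a_2)\in G_2(R)$; the former does not imply the latter. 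For a concrete counterexample at depth~$0$, take $G_1=\{R(a,b)\}$ and $G_2=\{R(c,d),R(d,c)\}$: then $(G_1,a,b)\simeq_0(G_2,c,d)$ holds, but $(G_1,b,a)\simeq_0(G_2,d,c)$ fails. This is precisely why the paper's bisimulation notion is tailored to $\BL(\setminus,\div)$ and would \emph{not} suffice for a language with converse. Simply drop the converse case (both in the base and the inductive step) and your proof is complete and correct.
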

In other words, if $\mG_1\simeq_k \mG_2$, then any expression of
degree at most $k$ either both selects $(a_1,b_1)$ in $G_1$ and
$(a_2,b_2)$ in $G_2$, or neither of them. As such, the marked graphs
$\mG_1$ and $\mG_2$ are \emph{indistinguishable} by expressions in
$\BL(\setminus,\div)_k$.  The proof of
Proposition\nobreakspace \ref {theo-diff-bisimilar} is by a straightforward
induction on $e$.

The following proposition states how we can use
Proposition\nobreakspace \ref {theo-diff-bisimilar} to show that some boolean query
is not expressible in $\BL(\setminus, \div)_k$.

\begin{proposition}
\label{prop-bisimilar-noparameter}
Let $k$ be a natural number.  A boolean query $q$ is not expressible
in ${\BL(\setminus,\div)}_k$ if there exist graphs $G_1$ and $G_2$
such that $q(G_1)$ is true and $q(G_2)$ is false, and, for each
pair $(a_1,b_1) \in \adom(G_1)^2$, there exists $(a_2,b_2) \in
\adom(G_2)^2$ such that $(G_1,a_1,b_1)\simeq_k(G_2,a_2,b_2)$.
\end{proposition}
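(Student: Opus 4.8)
The plan is to argue by contradiction, using Proposition~\ref{theo-diff-bisimilar} as a black box. Suppose, for contradiction, that $q$ is expressible in $\BL(\setminus,\div)_k$, and fix an expression $e \in \BL(\setminus,\div)_k$ that expresses it. By the definition of expressibility for boolean queries, $e(G)$ is nonempty if and only if $q(G)$ is true, for every graph $G$; in particular $e(G_1)$ is nonempty (since $q(G_1)$ is true) and $e(G_2)$ is empty (since $q(G_2)$ is false). The goal is to derive from the bisimilarity hypothesis that $e(G_2)$ must in fact be nonempty, which is the desired contradiction.

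The key steps, in order, are the following. First, pick any pair $(a_1,b_1) \in e(G_1)$, which exists because $e(G_1)$ is nonempty. Recall from the semantics that $e(G_1) \subseteq \adom(G_1) \times \adom(G_1)$, so $(a_1,b_1) \in \adom(G_1)^2$ and the hypothesis of the proposition applies to it: there is a pair $(a_2,b_2) \in \adom(G_2)^2$ with $(G_1,a_1,b_1) \simeq_k (G_2,a_2,b_2)$. Second, apply Proposition~\ref{theo-diff-bisimilar} to the expression $e$ (which has degree at most $k$) and to these two marked graphs: since $(a_1,b_1) \in e(G_1)$, the equivalence in that proposition yields $(a_2,b_2) \in e(G_2)$. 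Hence $e(G_2)$ is nonempty, contradicting $e(G_2) = \emptyset$. This completes the argument.

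There is essentially no hard part here: the statement is a routine ``one pair of graphs suffices'' packaging of the invariance result, and the proof is the standard Ehrenfeucht--Fra\"iss\'e-style inexpressibility argument. The only two points that warrant a line of care are (i) noting that every pair selected by an $\BL(\setminus,\div)$-expression on a graph $G$ lies in $\adom(G)^2$, so that the quantification over $\adom(G_1)^2$ in the hypothesis genuinely covers all potential witnesses in $e(G_1)$; and (ii) observing that only the left-to-right direction of the biconditional in Proposition~\ref{theo-diff-bisimilar} is actually used, so that one-sided bisimilarity from $G_1$ into $G_2$ (a ``$G_1$ is simulated by $G_2$'' condition) is all that is required. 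No additional machinery beyond Proposition~\ref{theo-diff-bisimilar} and the definitions of Section~\ref{sec-prelim} is needed.
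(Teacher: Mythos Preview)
Your proof is correct and is precisely the straightforward argument the paper has in mind; indeed, the paper omits the proof entirely, simply noting that it is straightforward (and that the converse also holds). Your two additional remarks---that $e(G_1)\subseteq\adom(G_1)^2$ is needed so the hypothesis applies, and that only the forward implication of Proposition~\ref{theo-diff-bisimilar} is used---are accurate and worth making explicit.
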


We omit the straightforward proof; we note that the converse
implication holds as well \cite{rabisim}.

\subsubsection{Homomorphism approach}
\label{sec:homom-approach}

To show that $\BL(\pi) \not\ltbool \BL(\conv{}, \div)$, we used
an entirely different technique, based on the theory of
conjunctive queries and the nonexistence of certain homomorphisms on
particular graphs.  The details are given in Section~\ref{sec:zigzag}.
\section{The power of various operators}\label{sec:technical-results}
In this section, two main technical results are shown regarding the power of various operators. 
The first result (Proposition~\ref{bottom-pi-tech}) states that the $\pi$ operator (in combination with the basic operators) provides some boolean querying power that cannot be provided by the ${}^{-1}$ and $\di$ operators. This is a sharp expressivity result on projection, since adding any other feature to the fragment $\elang{{}^{-1},\di}$ leads to the expressibility of projection. 
\begin{proposition}
\label{bottom-pi-tech} $\elang{\pi}\not\ltbool \elang{{}^{-1},\di}$.
\end{proposition}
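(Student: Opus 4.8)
The plan is to exhibit a single boolean query $\ZZZ$, expressible in $\elang{\pi}$, that no expression of $\elang{{}^{-1},\di}$ can express, the inexpressibility being proved by a homomorphism argument. For the positive side I take $\ZZZ$ to be ``the input graph contains a homomorphic copy of $\pattern$'', where $\pattern$ is a fixed finite tree chosen so that (i) $\pattern$ is a \emph{core} and (ii) some vertex of $\pattern$ is incident to three edges of pairwise distinct (direction, label) type (two edge labels suffice for this). Property (ii) guarantees that $\pattern$ has no homomorphism into any graph of maximum degree~$2$: in any such image all three edge types would have to occur at a single vertex of degree at most~$2$. Since $\pattern$ is a tree, ``containing a homomorphic copy of $\pattern$'' is a connected acyclic boolean conjunctive query, and I would first dispatch the routine check that such queries lie in $\elang{\pi}$ --- walk over $\pattern$ from a chosen vertex, composing with an atom for each step away and taking a $\pi_1$- or $\pi_2$-closure at each branch point; nonemptiness of the resulting expression is exactly the existence of a homomorphic copy of $\pattern$.

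The core of the argument is a normal form for $\elang{{}^{-1},\di}$. Pushing $\conv{}$ down to the atoms and distributing $\circ$ over $\cup$, every expression of $\elang{{}^{-1},\di}$ is equivalent to a finite union of \emph{linear} expressions $L_1\circ\cdots\circ L_k$ with each $L_i$ one of $R$, $\conv{R}$, $\id$, $\di$. Hence ``$e$ is nonempty on $G$'' holds iff $G$ admits a walk $x_0,\dots,x_k$ matching one of these words, where consecutive positions are joined by a prescribed edge, or forced equal, or forced distinct and in the active domain. The key point is that the canonical witness of such a word --- realize each edge step on fresh vertices, identify the positions linked by equality steps, keep those linked by inequality steps distinct, and hang a pendant edge on any position not otherwise incident to an edge so that it enters the active domain --- is a disjoint union of simple paths, so it has maximum degree~$2$.

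Putting the pieces together: suppose $e\in\elang{{}^{-1},\di}$ expresses $\ZZZ$. As $\pattern$ itself satisfies $\ZZZ$, $e$ is nonempty on $\pattern$, so some linear disjunct $\psi$ of $e$ has a matching walk in $\pattern$; in particular its inequality steps are met there, so the canonical witness $H$ of $\psi$, constructed as above, really does match $\psi$. Then $e$ is nonempty on $H$, hence $\ZZZ$ holds on $H$, i.e.\ $\pattern$ maps homomorphically into $H$ --- impossible, since $H$ has maximum degree~$2$ and $\pattern$ does not embed into such graphs. Therefore no such $e$ exists and $\elang{\pi}\not\ltbool\elang{{}^{-1},\di}$.

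The step I expect to be delicate is the design of $\pattern$: it must be simultaneously (i) a core with no homomorphism into a bounded-degree-$2$ graph --- which is what forces its ``incompatible branching'', the content hidden in the zigzag shape --- and (ii) a tree without accidental endomorphisms, so that the $\pi$-expression pins it down exactly rather than merely some homomorphic retract. A second, more technical, point is the active-domain bookkeeping in building $H$: the steps coming from $\di$ only constrain active-domain elements, so the pendant edges must be added in a way that neither disturbs the match of $\psi$ in $H$ nor raises the degree above~$2$.
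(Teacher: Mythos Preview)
Your approach is broadly the paper's---put $\elang{{}^{-1},\di}$ into a normal form of unions of ``chain'' conjunctive queries with nonequalities, then use homomorphism arguments against a fixed tree pattern---but there is a real gap: your obstruction hinges on a vertex carrying three edges of pairwise distinct (direction,\,label) type, and with a single label only two such types exist. The paper is explicit (see the Remark in Section~\ref{sec-prelim}) that every separation is to hold already over a single label $R$, and its proof of Proposition~\ref{bottom-pi-tech} does exactly this. The single-label pattern $B_{ZZZ}$ has a root $a$ with three \emph{outgoing} $R$-edges into three zigzag branches whose lengths are tuned so that $B_{ZZZ}$ is rigid (Lemma~\ref{lem:bzzzendo}). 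The paper then composes two homomorphisms: from containment one gets $g\colon B_{ZZZ}\to B_e^{\textrm{rel}}$, and from the match of the disjunct in $B_{ZZZ}$ one gets $f\colon B_e^{\textrm{rel}}\to B_{ZZZ}$; since $B_e^{\textrm{rel}}$ is a disjoint union of chains, $g$ must collapse two of $a$'s three neighbours, so $f\circ g$ is a non-identity endomorphism of $B_{ZZZ}$, contradicting rigidity.

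Your direct-witness construction, building a degree-$\leq 2$ model $H$ of the matching disjunct and arguing that $\pattern$ cannot map into $H$, is cleaner when two labels are available and is correct in that setting; but it does not extend to one label, where the three branches at the root have the same type and nothing prevents a homomorphism from folding them together. Note also that the core property you list as requirement~(i) plays no role in your argument as written---the $\pi$-expression for a tree pattern expresses ``there is a homomorphism from $\pattern$'' regardless of whether $\pattern$ is a core, and your degree obstruction does the rest. In the paper's single-label proof, by contrast, rigidity is precisely where the contradiction lands. So your instinct that rigidity matters is right, but it only becomes load-bearing once you drop the crutch of a second label and must distinguish the three branches by their internal structure; designing such branches is the real content behind the $B_{ZZZ}$ pattern.
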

Since this result is highly technical, it is proven in Section~\ref{sec:zigzag}.

The second result (Proposition\nobreakspace \ref {prop:converse-elimination}) shows that, at the level
of boolean queries, $\conv{}$ does not add expressive power in the
presence of $\pi$ and in the absence of $\cap$.  

\begin{proposition} \label{prop:converse-elimination} Let $F$ be a set
  of nonbasic features for which $\setminus \not \in F$ and $\cap \not\in F$. Then, $\BL(F
  \cup \{ \conv{} \}) \ltbool \BL(F \cup \{ \pi \})$.
\end{proposition}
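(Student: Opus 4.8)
The plan is to prove this by structural induction on expressions, but the induction has to be set up carefully because the naive statement "every $e \in \BL(F \cup \{\conv{}\})$ is equivalent to some $e' \in \BL(F \cup \{\pi\})$" is \emph{false} at the level of path queries (converse genuinely adds path-query power, e.g.\ $\conv{R}$ reverses edges and no forward language can do that). So the induction must track something weaker than path-equivalence. The natural candidate is: for every $e \in \BL(F \cup \{\conv{}\})$ there is an expression $\widehat e \in \BL(F \cup \{\pi\})$ such that, for every graph $G$, $\pi_1(e)(G) = \pi_1(\widehat e)(G)$ (equivalently, $e$ and $\widehat e$ have the same "source nodes"). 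Since a boolean query asks only whether $e(G)$ is nonempty, and $e(G)$ is nonempty iff $\pi_1(e)(G)$ is nonempty, this domain-equivalence is exactly enough to conclude $\BL(F \cup \{\conv{}\}) \ltbool \BL(F \cup \{\pi\})$. Actually, to make the induction go through one wants to carry along \emph{both} $\pi_1$ and $\pi_2$ information simultaneously: I would prove that for each $e$ there are $\widehat e$ and $\widecheck e$ in $\BL(F\cup\{\pi\})$ with $\pi_1(e)(G) = \pi_1(\widehat e)(G)$ and $\pi_2(e)(G) = \pi_1(\widecheck e)(G)$ for all $G$, since the source set of $e_1\circ e_2$ depends on the source set of $e_1$ intersected (via composition) with reachability into the source set of $e_2$, and the target set of $\conv e$ is the source set of $e$.

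The key steps, in order. First, establish the base cases: for $e$ equal to a label $R$, to $\id$, to $\emptyset$, or to $\div$ (if $\div \in F$), the expressions $\pi_1(e)$ and $\pi_2(e)$ are already directly available in $\BL(F\cup\{\pi\})$, so set $\widehat e = \pi_1(e)$ and $\widecheck e = \pi_2(e)$ — note $\pi_i(e)$ is a subidentity, so its own $\pi_1$ equals itself. Second, the converse step, which is the whole point: $\pi_1(\conv e)(G) = \pi_2(e)(G)$ and $\pi_2(\conv e)(G) = \pi_1(e)(G)$, so we simply set $\widehat{\conv e} = \widecheck e$ and $\widecheck{\conv e} = \widehat e$ — the converse operator is eliminated by swapping the two tracked expressions. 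Third, union: $\pi_1(e_1 \cup e_2) = \pi_1(e_1) \cup \pi_1(e_2)$, so take componentwise unions. Fourth, composition: here the source set of $e_1 \circ e_2$ is $\{m : \exists p\ (m,p)\in e_1(G) \wedge p \in \pi_1(e_2)(G)\text{-reachable}\}$ — more precisely $\pi_1(e_1\circ e_2) = \pi_1(e_1 \circ \pi_1(e_2))$, and since $\pi_1(e_2)(G) = \pi_1(\widehat{e_2})(G)$ and $\widehat{e_2}$ is a subidentity (so $e_1 \circ \widehat{e_2}$ and $\widehat{e_1}\circ\widehat{e_2}$... careful here) — the right move is $\pi_1(e_1 \circ e_2) = \pi_1(e_1 \circ \widehat{e_2})$ where $\widehat{e_2}$ being a subidentity makes this a legitimate $\BL(F\cup\{\pi\})$ expression once we replace $e_1$ by its own $\widehat{\cdot}$... this is exactly the delicate point, see below. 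Symmetrically $\pi_2(e_1\circ e_2) = \pi_2(\widecheck{e_1}\circ e_2) = \pi_2(\widecheck{e_1}\circ \ldots)$. Fifth and last, handle the remaining features in $F$: coprojection, if present, satisfies $\cpi_1(e)(G) = \{(m,m): m\in\adom(G), \neg\exists n\,(m,n)\in e(G)\}$, which depends only on $\pi_1(e)(G)$, hence $\cpi_1(e)$ is equivalent to $\cpi_1(\widehat e)$; and if $\pi \in F$ already, then $\pi_1(e), \pi_2(e)$ are handled identically. The features $\cap$ and $\setminus$ are excluded by hypothesis, which is essential because the source set of $e_1 \cap e_2$ is \emph{not} determined by the source sets of $e_1$ and $e_2$ alone.

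The main obstacle is the composition step, and more precisely making sure that when I replace $e_1$ inside $e_1 \circ e_2$ by something in the converse-free language, I do not need full path-equivalence of $e_1$ (which I don't have) but only its source-set and target-set behavior, \emph{and} that this replacement still correctly computes the source set of the composition. The resolution is to observe $\pi_1(e_1\circ e_2)(G) = \pi_1\big(e_1 \circ \pi_1(e_2)\big)(G)$ and then that $e_1 \circ \pi_1(e_2)$, as far as its own $\pi_1$ is concerned, "only uses $e_1$ through paths that start at a source node of $e_1$ and land in the domain $\pi_1(e_2)(G)$" — so one proves a lemma that $\pi_1(e_1 \circ s)(G)$, for a subidentity expression $s$, is computable in $\BL(F\cup\{\pi\})$ from $\widehat{e_1}$, $\widecheck{e_1}$ and $s$, by induction on $e_1$ nested inside the main induction (or better, by strengthening the induction hypothesis to quantify over a trailing subidentity parameter). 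I expect the cleanest formulation is: for every $e$ and every subidentity $s \in \BL(F\cup\{\pi\})$ there is $\widehat{e}_s \in \BL(F\cup\{\pi\})$ with $\pi_1(e \circ s)(G) = \pi_1(\widehat e_s)(G)$ for all $G$, and dually; the converse step swaps the roles and turns a trailing parameter into a leading one, which is why both directions must be carried together. Getting this bookkeeping exactly right — especially the interaction of the trailing-subidentity parameter with the converse swap — is the one place where care is genuinely required; everything else is routine.
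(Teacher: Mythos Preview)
Your approach is correct and will work, but it differs from the paper's proof in one organizational choice that is worth noting.

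The paper first normalizes the expression by pushing converse inward so that it is only applied to edge labels (using $(e_1\circ e_2)^{-1}=e_2^{-1}\circ e_1^{-1}$, $(e_1\cup e_2)^{-1}=e_1^{-1}\cup e_2^{-1}$, $(\pi_j(e))^{-1}=\pi_j(e)$, etc.). After this normalization, every occurrence of converse has the atomic form $\conv{R}$. The paper then proves, by induction on the \emph{size} of $e$, that $\pi_i(e)$ (and simultaneously $\cpi_i(e)$ when $\cpi\in F$) is expressible in $\BL(F\cup\{\pi\})$. The composition case is handled by associativity: write $e=e_3\circ e_4$ where $e_3$ is the leftmost non-composition node in the syntax tree, and case-split on the form of $e_3$. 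The decisive rule is $\pi_1(\conv R\circ e_4)=\pi_2(\pi_1(e_4)\circ R)$, and because $e_3=\conv R$ is atomic after normalization, the right-hand side is immediately converse-free once one applies the induction hypothesis to $\pi_1(e_4)$.

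Your parameterized induction (carry a trailing subidentity $s$ for $\pi_1(e\circ s)$ and a leading one for $\pi_2(s\circ e)$, swapped by converse) is exactly what one needs if one does \emph{not} push converse to labels first: since $e_1$ in $e_1\circ e_2$ may still contain converses, you cannot simply emit $\pi_1(e_1\circ\widehat{e_2})$ and stop, so you recurse on $e_1$ with the freshly obtained subidentity as parameter. The converse-swap $\pi_1(\conv e\circ s)=\pi_2(s\circ e)$ that drives your induction is the non-atomic generalization of the paper's rule. The two proofs are essentially the same argument packaged differently: the paper's normalization step buys a simpler, unparameterized induction with explicit rewrite rules, while your formulation skips the preprocessing at the price of the extra bookkeeping you correctly flagged as ``the one place where care is genuinely required.'' In either presentation the entire content of the proof is the single identity $\pi_1(\conv R\circ e')=\pi_2(\pi_1(e')\circ R)$ together with its dual.
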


\begin{example}\label{ex:converse-elimination}
  To illustrate Proposition\nobreakspace \ref {prop:converse-elimination}, consider
  the expression $e_1= R^3 \circ \conv{R} \circ R^3$ in
  $\BL(\conv{})$. The expression $\pi_1(e_1)$ can be equivalently
  expressed in $\BL(\pi)$ as $\pi_1\big(R^3 \circ \pi_2( \pi_1(R^3)
  \circ R )\big)$. Now observe that, for any graph $G$, we have that
  $e_1(G)$ is nonempty if and only if $\pi_1(e_1)(G)$ is nonempty.

  Using this same observation, one can express the non-emptiness of
  the expression $e_2 = R \circ \cpi_2( (R\circ S) \cup (\conv{R}
  \circ S))$ in $\BL(\conv{}, \cpi)$ by the non-emptiness of the
  expression $ \pi_1(e_2) = \pi_1\big( R \circ \cpi_2(R\circ S) \circ
  \cpi_2( \pi_1(R) \circ S ) \big)$ in $\BL(\cpi)$ .
\end{example}

\begin{proof}[Proof of Proposition\nobreakspace \ref {prop:converse-elimination}]
%   A detailed proof is given in the Appendix; here, we only give a
%   sketch.  Let $e$ be an expression in $\BL(F \cup \{ \conv{}, \pi
%   \})$. By simultaneous induction on the size of $e$ (the number of
%   nodes in the syntax tree), we prove that, for $i=1,2$, we have
% \begin{itemize}
% \item
% $\pi_i(e)$ is expressible in $\BL(F \cup \{ \pi \})$; and
% \item if $\cpi \in \fbar$, then $\cpi_i(e)$ is expressible in
%   $\BL(F)$.
% \end{itemize}
% Notice that the second statement is implied by the first, but we need
% to consider both statements together to make the induction work. For
% this purpose we identify appropriate rewriting rules. The nontrivial
% case is where $e$ is of the form $e_1 \circ e_2$.  Without loss of
% generality, however, we may assume that $e_1$ is itself not a
% composition of subexpressions (for the rewriting of $\pi_1(e)$ and
% $\cpi_1(e)$), respectively $e_2$ is itself not a composition of
% subexpressions (for the rewriting of $\pi_2(e)$ and $\cpi_2(e)$), for
% which case we can provide rewriting rules. 

% In particular, if $e$ is an expression in $\BL(F \cup \{ \conv{} \})$,
% it follows from the above that $\pi_1(e)$ is expressible in $\BL(F
% \cup \{ \pi \})$. Proposition\nobreakspace \ref {prop:converse-elimination} now
% follows from the observation that, for any graph $G$, $e(G)$ is
% nonempty if and only if $\pi_1(e)(G)$ is nonempty.
Let $e$ be an expression in $\BL(F \cup \{ \conv{}, \pi \})$.  Without
loss of generality, we may assume that $\conv{}$ is only applied in
$e$ to edge labels, so for each edge label $R$ we also consider
$\conv{R}$ as an edge label. By simultaneous induction on the size of
$e$ (the number of nodes in the syntax tree), we prove
for $i=1,2$ that
\begin{itemize}
\item
$\pi_i(e)$ is expressible in $\BL(F \cup \{ \pi \})$; and
\item 
if $\cpi \in \fbar$, then $\cpi_i(e)$ is expressible in
  $\BL(F)$.
\end{itemize}
Notice that the second statement is implied by the first, but we need
to consider both statements together to make the induction work. The basis of the induction is trivial. For all operators except composition we reason as follows:
\begin{align*}
\pi_1(\conv R) & = \pi_2(R) & \cpi_1(\conv R) & = \cpi_2(R) \\
\pi_2(\conv R) & = \pi_1(R) & \cpi_2(\conv R) & = \cpi_1(R) \\
\pi_i(\pi_j(e')) & = \pi_j(e') & \cpi_i(\pi_j(e')) & = \cpi_j(e') \\
\pi_i(\cpi_j(e')) & = \cpi_j(e') & \cpi_i(\cpi_j(e')) & = \pi_j(e') \\
\pi_i(e_1 \cup e_2) & = \pi_i(e_1) \cup \pi_i(e_2) &
\cpi_i(e_1 \cup e_2) & = \cpi_i(e_1) \circ \cpi_i(e_2).
\end{align*}

This leaves the case where $e$ is of the form $e_1 \circ e_2$.
Let $n$ be the first node in preorder in the syntax tree of
$e$ that is not an application of $\circ$, and let
$e_3$ be the expression rooted at $n$.  By associativity of
$\circ$, we can equivalently write $e$ in the form $e_3 \circ
e_4$, where $e_4$ equals the composition of all right-child
expressions from the parent of $n$ up to the root (in that
order).  Note that $e_3 \circ e_4$ has the same size as $e$.
We now consider the different possibilities for the form of $e_3$:
\begin{align*}
\pi_1(\id \circ e_4) & = \pi_1(e_4) \\
\pi_1(\div \circ e_4) & = \pi_1(\div \circ \pi_1(e_4)) \\
\pi_1(R \circ e_4) & = \pi_1(R \circ \pi_1(e_4)) \\
\pi_1(\conv R \circ e_4) & = \pi_2(\pi_1(e_4) \circ R) \\
\pi_1(\pi_j(e_5) \circ e_4) & =  \pi_j(e_5) \circ \pi_1(e_4) \\
\pi_1(\cpi_j(e_5) \circ e_4) & = \cpi_j(e_5) \circ \pi_1(e_4) \\
\pi_1((e_5 \cup e_6) \circ e_4) & = \pi_1(e_5 \circ e_4) \cup
\pi_1(e_6 \circ e_4) \\
\cpi_1(\id \circ e_4) & = \cpi_1(e_4) \\
\cpi_1(\div \circ e_4) & = \cpi_1(\div \circ \pi_1(e_4)) \\
\cpi_1(R \circ e_4) & = \cpi_1(R \circ \pi_1(e_4)) \\
\cpi_1(\conv R \circ e_4) & = \cpi_2(\pi_1(e_4) \circ R) \\
\cpi_1(\pi_j(e_5) \circ e_4) & =  \cpi_j(e_5) \cup \cpi_1(e_4) \\
\cpi_1(\cpi_j(e_5) \circ e_4) & = \pi_j(e_5) \cup \cpi_1(e_4) \\
\cpi_1((e_5 \cup e_6) \circ e_4) & = \cpi_1(e_5 \circ e_4) \circ
\cpi_1(e_6 \circ e_4)
\end{align*}

The crucial rules that eliminate inverse in the composition step are the fourth and the fourth-last. Hence we prove their correctness formally. Let $G$ be an arbitrary graph. Then, 
\begin{align*}(x,x) \in \pi_1(R^{-1}\circ e_4)(G) &\Leftrightarrow \exists y: (x,y) \in R^{-1}\circ e_4(G)\\
                                                  &\Leftrightarrow \exists y\exists z: (x,z) \in R^{-1}(G) \land (z,y) \in e_4(G)\\
                                                  &\Leftrightarrow \exists z: (z,x) \in R(G)\land (z,z)\in \pi_1(e_4)(G)\\
                                                  &\Leftrightarrow \exists z: (z,x)\in \pi_1(e_4)\circ R(G)\\
                                                  &\Leftrightarrow (x,x)\in \pi_2(\pi_1(e_4)\circ R))(G).
\end{align*} 
This proves the fourth rule. The fourth-last rule follows from the fourth rule and the fact that $\cpi_i(e') = \id\setminus \pi_i(e')$. This handles $\pi_1(e)$ and $\cpi_1(e)$.

To handle $\pi_2(e)$ and $\cpi_2(e)$, let $n$ now be
the first node in reverse preorder that is not an application of
$\circ$.  We can now write $e$ as $e_4 \circ e_3$.  The proof is
now similar:
\begin{align*}
\pi_2(e_4 \circ \id) & = \pi_2(e_4) \\
\pi_2(e_4 \circ \div) & = \pi_2(\pi_2(e_4) \circ \div) \\
\pi_2(e_4 \circ R) & = \pi_2(\pi_2(e_4) \circ R) \\
\pi_2(e_4 \circ \conv R) & = \pi_1(R \circ \pi_2(e_4)) \\
\pi_2(e_4 \circ \pi_j(e_5)) & = \pi_2(e_4) \circ \pi_j(e_5) \\
\pi_2(e_4 \circ \cpi_j(e_5)) & = \pi_2(e_4) \circ \cpi_j(e_5) \\
\pi_2(e_4 \circ (e_5 \cup e_6)) & = \pi_2(e_4 \circ e_5) \cup
\pi_2(e_4 \circ e_6) \\
\cpi_2(e_4 \circ \id) & = \cpi_2(e_4) \\
\cpi_2(e_4 \circ \div) &= \cpi_2(\pi_2(e_4) \circ \div) \\
\cpi_2(e_4 \circ R) &= \cpi_2(\pi_2(e_4) \circ R) \\
\cpi_2(e_4 \circ R^{-1}) &= \cpi_1(R \circ \pi_2(e_4)) \\
\cpi_2(e_4 \circ \pi_j(e_5)) &= \cpi_j(e_5) \cup \cpi_2(e_4) \\
\cpi_2(e_4 \circ \cpi_j(e_5)) &= \pi_j(e_5) \cup \cpi_2(e_4) \\
\cpi_2(e_4 \circ (e_5 \cup e_6)) &= \cpi_2(e_4 \circ e_5) \circ
\cpi_2(e_4 \circ e_6).
\end{align*}
In particular, if $e$ is an expression in $\BL(F \cup \{ \conv{} \})$,
it follows from the above that $\pi_1(e)$ is expressible in $\BL(F
\cup \{ \pi \})$. Proposition\nobreakspace \ref {prop:converse-elimination} now
follows from the observation that, for any graph $G$, $e(G)$ is
nonempty if and only if $\pi_1(e)(G)$ is nonempty.
\end{proof}
\begin{remark}
Proposition~\ref{prop:converse-elimination} may remind one of a similar result known for XPath on trees
\cite{benediktfankuper_xpath,olteanu_forward,wu_pospathfragment}
where downward XPath is known to be as  
powerful as full XPath for queries evaluated at the root. However, an important difference is that we are using projections both on the first and second column of a relation, whereas in the result on trees only the first projection is present. 

Indeed, Proposition~\ref{prop:converse-elimination} no longer holds for a language which only contains the first, but not the second projection, or vice versa. Consider the following two graphs $G_1 = \{ R(a,b), S(c,b) \}$ en $G_2 = \{ R(a,b), S(c,d) \}$. For any expression $e \in \elang{\pi_1}$ it must be that $e(G_1) \subseteq \{(a,a),(b,b),(c,c),(a,b),(c,b)\}$. It is not hard to see that for each $(x,y) \in \{(a,a),(b,b),(c,c),(a,b)\}$, $(x,y) \in e(G_1)$ iff $(x,y)\in e(G_2)$ and $(c,b) \in e(G_1)$ iff $(c,d)\in e(G_2)$. Therefore, it is clear that $G_1$ and $G_2$ are indistinguishable in $\elang{\pi_1}$. They are, however, distinguishable in $\elang{{}^{-1}}$ by $R\circ S^{-1}$.
\end{remark}
\begin{remark}
Notice that the translation used to eliminate converse in the proof of Proposition~\ref{prop:converse-elimination} could blow-up the size of the expressions exponentially. Indeed, define a family of expressions inductively as follows: $e_0 = T$ and $e_{n+1} = \pi_1((R\cup T) \circ e_n)$. Let us denote the size of an expression $e$ as $|e|$.  Clearly, $|e_0| = 0$ and $|e_{n+1}| = |e_n|+5$, which implies that $|e_n|$ is linear in $n$. Now, let $e_n'$ be the expression formed from $e_n$ according to the rules outlined in the proof of Proposition~\ref{prop:converse-elimination}. Clearly, $e_0' = T$ and $e_{n+1}' = \pi_1(R\circ e_n') \cup \pi_1(S\circ e_n')$. Therefore, $|e_0'| = 1$ and $|e_{n+1}'| = 2|e_{n}'| + 7$, which implies that $|e_n'| \geq 2^n$. 

On the other hand, our translation is never worse than single-exponential. We leave open whether a polynomial translation is possible. Interestingly, the analogous question about the complexity of translating from FO$^3$ to $\elang{\di,{}^{-1},\setminus}$, mentioned in the Introduction, has not yet been addressed in the literature. For fragments of FO$^2$, a relevant result has been reported~\cite{evw_fo2}.
\end{remark}
\subsection{Proof of Proposition~\protect\ref{bottom-pi-tech}\label{sec:zigzag}}

We begin by recalling some basic terminology and notions
concerning conjunctive queries \cite{ahv_book}. A
\emph{conjunctive query with nonequalities} is expressed in the
form $H\leftarrow B$. Here the body $B$ is a finite set of
relation atoms over the vocabulary $\Lambda$, as well as
nonequalities of the form $x\neq y$. The head $H$ is a tuple of
variables from $B$. The head may be the empty tuple in which case a boolean query is expressed.

Given a conjunctive query $Q$: $H\leftarrow B$ and a graph $G$,
an \emph{assignment} is a function $f$ from the set of variables
in $Q$ to $\adom{(G)}$. We call $f$ a \emph{matching} of $B$ in
$G$ if for each relation atom $R(x,y)$ in $B$, we have
$(f(x),f(y)) \in R(G)$, and for each $x\neq y$ in $B$ we have
$f(x) \neq f(y)$. The evaluation of $Q$ on $G$ is then defined as
\[ Q(G) = \{ f(H)\mid \text{$f$ is a matching from $B$ to $G$}\}.\]

In particular, if $H$ is empty then $Q(G)$ is either $\{()\}$ or empty; these two possible results are interpreted as the boolean values $\ttrue$ and $\ffalse$ respectively.

A query $Q_1$ is said to be \emph{contained} in a query $Q_2$, if for every graph $G$ we have $Q_1(G)\subseteq Q_2(G)$. This is denoted by $Q_1\subseteq Q_2$. 

If $B$ is the body of a conjunctive query with nonequalities, then $B^{\textrm{rel}}$ denotes the set of
relation atoms in $B$.
As is customary in the theory of conjunctive queries, we can view
the body of a conjunctive query without nonequalities as a graph
whose nodes are the variables.

Recall that a homomorphism is a matching from a body without nonequalities to another body without nonequalities, viewed as a graph.

% Let $D$ be a database, $B$ the body of a conjunctive query and $f$ a function which maps the variables in $B$ on to constants in $D$. Then we say that $f$ is a matching, if $f(B) \subseteq D$ and for every $x\neq y \in B$: $f(x) \neq f(y)$.

% If $B$ be the body of a conjunctive query, then by $B^{\textrm{rel}}$ we denote the set of relation atoms in $B$. 

% Let $B_1$ and $B_2$ are sets of relation atoms, then we say that a function $f$ which maps variables from $B_1$ to variables from $B_2$
\begin{lemma}\label{lem:cqcont}
Let $Q_1$: $H_1\leftarrow B_1$ and $Q_2$: $H_2 \leftarrow B_2$ be conjunctive queries with nonequalities. If $Q_1 \subseteq Q_2$ then there exists a homomorphism $h: B_2^{\textrm{rel}} \rightarrow B_1^{\textrm{rel}}$.\begin{proof}
  Notice that $H_1\in Q_1(B_1^{\textrm{rel}})$ since the identity map is clearly a matching. Hence $H_1 \in Q_2(B_1^{\textrm{rel}})$ because $Q_1\subseteq Q_2$ by hypothesis. Therefore there exists a matching $f:B_2 \rightarrow B_1^{\textrm{\textrm{rel}}}$, which is also a matching from $B_2^{\textrm{rel}}$ to $B_1^{\textrm{rel}}$, and is hence the desired homomorphism.
\end{proof}
\end{lemma}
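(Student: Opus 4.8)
The plan is to run the classical Chandra--Merlin ``canonical database'' (frozen body) argument, adapted to the presence of nonequalities. The key point is that the conclusion only asks for a homomorphism between the \emph{relational} parts $B_2^{\textrm{rel}}$ and $B_1^{\textrm{rel}}$, so the nonequalities in the bodies will play almost no role: they matter only insofar as they must not make $Q_1$ trivially unsatisfiable.

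First I would freeze $B_1$: regard the variables of $Q_1$ as pairwise distinct constants, so that $B_1^{\textrm{rel}}$ becomes a graph whose active domain is the set of those variables and whose $R$-edges are exactly the relational atoms $R(x,y)$ occurring in $B_1$. Then I would check that the identity assignment $\iota$ on these variables is a matching of $B_1$ into $B_1^{\textrm{rel}}$: every relational atom of $B_1$ is satisfied by construction, and every nonequality $x\neq y$ of $B_1$ is satisfied because distinct variables are now distinct constants. The only degenerate case to dispatch is a nonequality of the form $x\neq x$, which makes $Q_1$ unsatisfiable and the statement vacuous (or, under the usual conventions, is simply assumed absent); similarly one assumes every variable of $Q_1$ occurs in some relational atom, so that $\iota$ is a legitimate assignment into $\adom(B_1^{\textrm{rel}})$. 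Hence $\iota(H_1)=H_1\in Q_1(B_1^{\textrm{rel}})$.

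Next I would invoke the hypothesis $Q_1\subseteq Q_2$ to get $H_1\in Q_2(B_1^{\textrm{rel}})$, which by definition of query evaluation yields a matching $f\colon B_2\to B_1^{\textrm{rel}}$ with $f(H_2)=H_1$. Finally I would observe that, by the definition of matching, for each relational atom $R(x,y)$ of $B_2$ we have $(f(x),f(y))\in R(B_1^{\textrm{rel}})$, and by the freezing this says precisely that $R(f(x),f(y))$ is a relational atom of $B_1$. Thus $f$ restricted to the variables appearing in relational atoms is exactly a homomorphism $h\colon B_2^{\textrm{rel}}\to B_1^{\textrm{rel}}$, as required.

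The main (very mild) obstacle is the bookkeeping around nonequalities and degenerate variables, handled as above. It is worth stressing that, unlike the equality-only Chandra--Merlin theorem, the converse of this lemma fails once nonequalities are present: existence of $h\colon B_2^{\textrm{rel}}\to B_1^{\textrm{rel}}$ does \emph{not} imply $Q_1\subseteq Q_2$. Consequently the lemma will be used only in the ``necessary condition'' direction --- to refute a containment, and hence establish a separation, by exhibiting conjunctive queries for which no such homomorphism exists.
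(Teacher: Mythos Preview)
Your proof is correct and follows essentially the same canonical-database argument as the paper: show the identity map witnesses $H_1\in Q_1(B_1^{\textrm{rel}})$, apply containment to get $H_1\in Q_2(B_1^{\textrm{rel}})$, and extract the homomorphism from the resulting matching of $B_2$ into $B_1^{\textrm{rel}}$. Your version merely spells out more of the bookkeeping around nonequalities and degenerate variables than the paper does, and adds the (correct and relevant) observation that the converse fails.
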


We say that a directed graph $G$ is a \emph{chain} if it has no loops or cycles and its \emph{undirected} version is isomorphic to the undirected chain with nodes $1,\ldots,n$ where $n$ is the number of nodes of $G$. Such a chain has edges $\{i,i+1\}$ for $i = 1,\ldots,n-1$. Beware that in this terminology, a chain may have forward as well as backward edges, as illustrated in Figure~\ref{fig:chain}.
\begin{figure}
  \begin{center}
    \begin{tikzpicture}[scale=1.2,shorten >=0pt,->]
        \foreach \name in {1,...,7}
          \node[vertex] (\name) at (\name,1) {};

        \foreach \o/\t in {1/2,2/3,4/5}
            \draw[arrows={-stealth'}] (\o)  to [bend left] (\t);
        \foreach \o/\t in {4/3,6/5,7/6}
            \draw[arrows={-stealth'}] (\o)  to [bend right] (\t);
    \end{tikzpicture}
  \end{center}
  \caption{Example of a chain.\label{fig:chain}}
\end{figure}

The following lemma can easily be proven by structural induction. 
\begin{lemma}\label{lem:invdiconjunctive}
    If $e$  is a union-free expression in $\elang{\conv{},\di}$, then there exists an equivalent conjunctive query $Q$: $H(x,y) \leftarrow B$ with nonequalities such that $B^{\textrm{rel}}$ has the form of a disjoint union of chains. 
\end{lemma}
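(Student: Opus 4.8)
The plan is to prove the lemma by structural induction on $e$, but to make the induction carry through I would prove a slightly stronger statement: every union-free $e\in\elang{\conv{},\di}$ is equivalent to a conjunctive query $Q\colon H(x,y)\leftarrow B$ with nonequalities such that $B^{\textrm{rel}}$ is a disjoint union of chains \emph{and}, in addition, each of the two head variables $x,y$ either does not occur in $B^{\textrm{rel}}$ at all or occurs as an endpoint of one of those chains. The extra endpoint condition is exactly what prevents the composition step below from creating a variable incident to three relation atoms, which would break the chain structure.

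For the base cases, recall that an assignment of a conjunctive query ranges over $\adom(G)$; hence $\id$ is expressed by the query with head $H(x,x)$ and empty body, $\di$ by $H(x,y)\leftarrow x\neq y$, $\emptyset$ by $H(x,y)\leftarrow x\neq x$, and an edge label $R$ by $H(x,y)\leftarrow R(x,y)$. In each case $B^{\textrm{rel}}$ is empty or a single two-node chain, and the head variables occur in $B^{\textrm{rel}}$ only as endpoints (if at all), so the invariant holds. For $e=\conv{e'}$, I would take the query $Q'\colon H'(x',y')\leftarrow B'$ obtained for $e'$ by the induction hypothesis and simply swap the head tuple: $Q\colon H(y',x')\leftarrow B'$. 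Since $\conv{e'}(G)=\{(n,m)\mid (m,n)\in e'(G)\}$ this is equivalent to $e$, and neither $B^{\textrm{rel}}$ nor the positions of the head variables change.

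The heart of the argument is the composition case $e=e_1\circ e_2$. If $e_1=\id$ then $e\equiv e_2$, and if $e_2=\id$ then $e\equiv e_1$ — here one uses that every path query returns a relation contained in $\adom(G)\times\adom(G)$, so composing with $\id$ on either side has no effect — and we reuse the query given by the induction hypothesis. Otherwise, let $Q_i\colon H_i(x_i,y_i)\leftarrow B_i$ be the query for $e_i$; I would rename variables so that $B_1$ and $B_2$ are variable-disjoint, identify $y_1$ with $x_2$ into a common variable $z$, and set $Q\colon H(x_1,y_2)\leftarrow B_1\cup B_2$ after that identification. A matching of $Q$ in a graph $G$ is precisely a matching of $Q_1$ together with a matching of $Q_2$ that agree on the value of $z$, so by the induction hypothesis $Q$ expresses $(e_1\circ e_2)(G)$. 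For the structural part: by induction $y_1$ is either absent from $B_1^{\textrm{rel}}$ or an endpoint of one of its chains, and the same for $x_2$ in $B_2^{\textrm{rel}}$; so forming $B_1^{\textrm{rel}}\cup B_2^{\textrm{rel}}$ under the identification $y_1=x_2=z$ glues the endpoint of at most one chain of $B_1^{\textrm{rel}}$ to the endpoint of at most one chain of $B_2^{\textrm{rel}}$, merging them into a single chain while leaving the remaining chains disjoint; hence $B^{\textrm{rel}}$ is again a disjoint union of chains. The only variable that can be moved to the interior of a chain by this operation is $z$, which is neither $x_1$ nor $y_2$ (these are distinct, as we are not in the $\id$-case), so the endpoint condition still holds for the new head variables.

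The step I expect to be the main obstacle — and the reason for strengthening the induction hypothesis — is exactly this composition step: if the output variable of $Q_1$ were identified with an interior variable of a chain of $B_2^{\textrm{rel}}$, we would obtain a variable of degree three and destroy the chain structure, whereas endpoint-to-endpoint gluing is harmless. It is also worth noting that a $\di$ occurring in the middle of a composition contributes only a free nonequality (no relation atom), which is precisely why the two surrounding blocks end up as \emph{disjoint} chains rather than one long chain. The remaining points are routine bookkeeping: the correct treatment of the constants $\emptyset$, $\id$, $\di$, and the use of $q(G)\subseteq\adom(G)\times\adom(G)$ to eliminate $\id$-factors and to justify the semantics of the constants under assignments ranging over $\adom(G)$.
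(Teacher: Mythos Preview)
Your approach is exactly the one the paper indicates (``by structural induction''), and your strengthening of the induction hypothesis with the endpoint condition is the natural way to make the composition case go through.

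There is one small imprecision in the composition step. You special-case $e_1=\id$ and $e_2=\id$ \emph{syntactically}, and then argue in the remaining case that the glued variable $z$ is ``neither $x_1$ nor $y_2$ (these are distinct, as we are not in the $\id$-case)''. But expressions such as $\conv{\id}$ or $\id\circ\id$ are not syntactically $\id$, yet the query your construction produces for them still has $x_i=y_i$; for such an $e_1$ you would get $z=y_1=x_1$, contradicting your claim. The fix is easy: either case on whether $x_i=y_i$ in the inductively constructed $Q_i$ (rather than on the syntactic form of $e_i$), or add to your invariant that whenever $x=y$ in the head, $B^{\textrm{rel}}$ is empty. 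The latter is immediate to verify: it holds in the base case for $\id$, is preserved by converse, and in the composition step the output head has $x_1=y_2$ only when both $x_1=y_1$ and $x_2=y_2$, in which case both $B_i^{\textrm{rel}}$ are empty by induction. With this addition, when $z$ happens to coincide with $x_1$ (or $y_2$) the corresponding $B_i^{\textrm{rel}}$ is empty, so $z$ remains absent or an endpoint after gluing, and your endpoint condition for the new head variables still holds.
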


Let $Q_{ZZZ}$ be the conjunctive query $() \leftarrow B_{ZZZ}$ that checks for the existence of the pattern displayed in Figure\nobreakspace \ref {fig:triplezigzag}. The name ZZZ is derived from the characteristic triple zigzag form of the pattern. For later use, we show the following. (Recall that an \emph{endomorphism} of a structure $A$ is a homomorphism from $A$ to itself.)
\begin{lemma}\label{lem:bzzzendo}
  The $B_{ZZZ}$ pattern has no endomorphism except for the identity.\end{lemma}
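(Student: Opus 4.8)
Although Figure~\ref{fig:triplezigzag} is not reproduced in the excerpt, the name ``triple zigzag'' together with Lemma~\ref{lem:invdiconjunctive} makes clear what $B_{ZZZ}$ is: a chain (in the sense defined just above) whose edges alternate direction in a specific pattern, drawn as three consecutive ``zigzags.'' The claim is that the only endomorphism of this labelled directed graph is the identity. My approach is the standard one for rigidity of small graphs: I would pin down the endpoints of the chain first, then propagate the constraint along the chain using the fact that a homomorphism must preserve both the existence of an edge \emph{and its orientation}.

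First I would observe that $B_{ZZZ}$, viewed as an undirected graph, is a path, so every vertex has degree $1$ or $2$, and exactly two vertices (the endpoints) have degree $1$. Since a homomorphism cannot increase degrees in the sense that the image of a degree-$1$ vertex must be adjacent to at most the images of its neighbours, and since any endomorphism must map the whole vertex set into itself, I would argue that an endomorphism must send the two degree-$1$ endpoints into $\{$the two degree-$1$ endpoints$\}$ — more carefully, a degree-$1$ vertex must map to a vertex all of whose chain-neighbours are hit, which forces it to an endpoint unless the map collapses an edge, and I would rule out edge-collapsing separately (a loop is created, contradicting that $B_{ZZZ}$, being a chain, has no loops, combined with the fact that a homomorphic image of a non-loop edge under a graph homomorphism to a loopless graph is again a non-loop edge). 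The genuinely content-bearing step is then the \emph{asymmetry} of the zigzag pattern: the two endpoints of $B_{ZZZ}$ are distinguishable from each other — e.g.\ by the direction of the unique incident edge, or failing that by the direction pattern one reads off starting from that end — so an endomorphism cannot swap them, and must fix each endpoint. Once both endpoints are fixed, a straightforward induction along the chain (at each step the next vertex is the unique neighbour of the current one that is not already used, and its position is forced because the edge to it has a prescribed direction) shows the map is the identity on every vertex.

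The main obstacle I anticipate is precisely verifying that no endomorphism can ``fold'' the chain — i.e.\ be non-injective — by mapping some initial segment onto a later segment traversed in reverse. This is exactly where the triple-zigzag direction pattern has to be exploited: I would check that the sequence of edge directions read along the chain is not a palindrome and, more strongly, does not embed (as a directed walk) into itself starting from any other vertex or in the reverse direction. Concretely, a fold at an interior vertex $v$ would require two consecutive edges at $v$ to have ``compatible'' directions for the images to coincide after reflection, and the zigzag is designed so that the forward direction-word $w$ and its reversal $\bar w$ have no common extension witnessing such a fold; carrying this out is a finite case check on the (small) direction word, and the rest of the argument is routine once folding is excluded. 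Since the statement is flagged ``for later use'' and the pattern is explicitly engineered, I expect the authors' proof to be a short such case analysis, and I would present mine the same way — reduce to the direction word, show it is rigid as a word, then lift that to rigidity of the graph.
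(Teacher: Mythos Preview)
Your proposal rests on a misreading of the pattern: $B_{ZZZ}$ is \emph{not} a single chain. It is three zigzag chains sharing a common start vertex $a$; specifically, for each $n\in\{4,5,6\}$ there is a chain from $a$ consisting of $n$ forward edges, one backward edge, and $n$ more forward edges (this is exactly the canonical pattern of the $\elang{\pi}$-expression $\pi_1(R^4\circ\pi_2(\pi_1(R^4)\circ R))\circ\pi_1(R^5\circ\pi_2(\pi_1(R^5)\circ R))\circ\pi_1(R^6\circ\pi_2(\pi_1(R^6)\circ R))$). In particular $a$ has out-degree $3$, so the underlying undirected graph is not a path, there are far more than two degree-$1$ vertices, and your whole ``direction word along the chain'' framework does not apply. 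This is not a cosmetic issue: the out-degree-$3$ vertex is precisely what is exploited in the proof of Proposition~\ref{bottom-pi-tech} to force the homomorphism $g\colon B_{ZZZ}\to B_e^{\mathrm{rel}}$ into a disjoint union of chains to be non-injective. If $B_{ZZZ}$ were itself a chain, that step would collapse.

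The paper's actual argument proceeds in two stages. First, one pins down $f(a)=a$ by observing that $a$ is the source of a directed $R$-path of length $6$, and the only other such source in the pattern (the start of the backward edge in the $n=6$ leg) is ruled out because mapping $a$ there forces a further image that cannot start a length-$6$ path. Second, with $a$ fixed, one must show that none of the three legs can be mapped homomorphically (fixing $a$) into a different leg. Here the $n$-forward/$1$-back/$n$-forward shape is used: a leg with $2n_1$ forward edges in total can be folded into a leg whose initial forward run has length $n_2$ only if $2n_1-1\le n_2$, and with $n_1,n_2\in\{4,5,6\}$ the left side is at least $7$ while the right side is at most $6$. So each leg is stuck in itself, and then rigidity along a single zigzag is immediate. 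Your instinct to do a finite case analysis on direction patterns is the right one, but it has to be carried out on this three-legged structure, not on a single path.
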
\begin{proof} Let $f$ be an endomorphism of the $B_{ZZZ}$ pattern in Figure\nobreakspace \ref {fig:triplezigzag}. We first show that $f(a) = a$. Note that there has to start a directed path of length 6 in $f(a)$ for the homomorphism property to hold since there starts a directed path of length 6 in $a$. Therefore $f(a) = a$ or $f(a) = j$. If $f(a) = j$ then $f(g) = k$, and hence $f(j) = l$. This, however, is not possible since there starts a directed path of length 6 in $j$ but not in $l$. Therefore $f(a) = a$. 

    Now, the only thing left to verify is that no chain starting in $a$ can be mapped homomorphically on another chain starting in $a$. First note that every chain starting in $a$ has a very special structure, i.e., a path of forward edges, followed by an inverted edge, which is again followed by the same number of forward edges as before the inverted edge.  Therefore, it is clear that a chain $C_1$ starting in $a$ can only be mapped on another chain $C_2 \neq C_1$ starting in $a$, if and only if, the number of forward edges in $C_1$ minus one is at most the number of forward edges in $C_2$ preceding the inverted edge. In our graph, however, the number of forward edges in every chain starting in $a$ minus one is at least seven, and the number of forward edges in every chain starting in $a$ preceding the inverted edge is at most six. Therefore we can conclude that $f$ maps every node onto itself as desired.
  \end{proof}

\begin{figure}[tbp!]
\begin{center}
\begin{tikzpicture}[scale=1.2,shorten >=0pt,->]

  \node[vertex] (a) at (1,0) [label=-180:$a$]{};
  \foreach \name/\x in {cc/3,ff/6}
    \node[vertex] (\name) at (\x,1) {};
  
  \node[vertex] (bb) at (2,1) [label=90:$b$]{};
  \node[vertex] (b) at (2,0) [label=90:$c$]{};

    \foreach \name/\x in {gg/3,hh/4,ii/5,jj/6,kk/7,ll/8}
      \node[vertex] (\name) at (\x,1.7) {}; 
      \node[vertex] (bbb) at (2,-1) [label=90:$d$]{};
      \foreach \name/\x in {ccc/3,eee/5,fff/6,ggg/7}
        \node[vertex] (\name) at (\x,-1) {};
        
        \foreach \name/\x in {iii/5,jjj/6,kkk/7,lll/8,mmm/9,nnn/10}
          \node[vertex] (\name) at (\x,-0.5) {};

   \foreach \name/\x in {b/2,c/3}
      \node[vertex] (\name) at (\x,0){};

\node[vertex] (d) at (4,0) {};
  \node[vertex] (ee) at (5,1){};
    \node[vertex] (dd) at (4,1){};
  \node[vertex] (e) at (5,0){};
    \node[vertex] (ll) at (8,1.7) {};
    \node[vertex] (j) at (7,0.6) {};
    \node[vertex] (mmm) at (9,-0.5) [label=90:$l$]{};
    \node[vertex] (nnn) at (10,-0.5) [label=0:$k$]{};
    \node[vertex] (f) at (3,0.6) {};
  \node[vertex] (hhh) at (4,-0.5) [label=180:$j$]{};
  \node[vertex] (gg) at (3,1.7) {};
    \node[vertex] (ff) at (6,1){};
        \node[vertex] (fff) at (6,-1){};
        
        \node[vertex] (ddd) at (4,-1) {};
            \node[vertex] (eee) at (5,-1){};
      \node[vertex] (ggg) at (7,-1) [label=0:$g$]{};
    \foreach \name/\x in {f/3,g/4,h/5,i/6,j/7}
      \node[vertex] (\name) at (\x,0.6) {};

    \foreach \o/\t in {a/b,b/c,c/d,d/e,f/g,g/h,h/i,i/j,bb/cc,cc/dd,dd/ee,ee/ff,gg/hh,hh/ii,ii/jj,jj/kk,kk/ll,bbb/ccc,ccc/ddd,ddd/eee,eee/fff,fff/ggg,hhh/iii,iii/jjj,jjj/kkk,kkk/lll,lll/mmm,mmm/nnn}
    \draw[arrows={-stealth'}] (\o)  to [out=20,in=160] (\t);

      \draw[arrows={-stealth'}] (f)  to [out=-20,in=128] (e);
      \draw[arrows={-stealth'}] (hhh)  to [out=-35,in=128] (ggg);
      \draw[arrows={-stealth'}] (a)  to [out=-80,in=170] (bbb);
      
      \draw[arrows={-stealth'}] (gg)  to [out=-35,in=128] (ff);
      \draw[arrows={-stealth'}] (a)  to [out=80,in=190] (bb);
  \end{tikzpicture}
\caption{Query pattern $B_{ZZZ}$ used to prove
  Proposition\nobreakspace \ref {bottom-pi-tech}. All edges are
  assumed to have the same label $R$.}
\label{fig:triplezigzag}
\end{center}
\end{figure}

We are now ready to prove Proposition~\ref{bottom-pi-tech}.
\begin{proof}[Proof of Proposition\nobreakspace \ref {bottom-pi-tech}]
 The boolean query $Q_{ZZZ}$ is expressible in $\elang{\conv{},\pi}$ by 
  \[\pi_1(\rel^4 \circ \ler \circ \rel^4)\circ \pi_1(\rel^5 \circ \ler \circ \rel^5) \circ \pi_1(\rel^6 \circ \ler \circ \rel^6).\]
 This can be seen to be equivalent to 
 \[\pi_1(R^4 \circ \pi_2(\pi_1(R^4) \circ R)) \circ \pi_1(R^5 \circ \pi_2(\pi_1(R^5) \circ R)) \circ \pi_1(R^6 \circ \pi_2(\pi_1(R^6) \circ R))
 \] in $\elang{\pi}$ (a general argument for a result of this type will be given in the proof of Proposition\nobreakspace \ref {prop:converse-elimination}). 
   Let us now, for the sake of contradiction, assume that $Q_{ZZZ}$ is also expressible in $\elang{\conv{},\di}$ by an expression $Q$. 
   Hence, for every graph $G$: (1)~if $Q_{ZZZ}(G) = \ttrue$ then $Q(G) \neq \emptyset$, and (2)~if $Q(G) \neq \emptyset$ then $Q_{ZZZ}(G) = \ttrue$. Since unions in $\elang{\conv{},\di}$ can always be brought outside, we can assume that $Q = \bigcup_{i=0}^n e_i$ 
   for some $n \in \mathbb{N}$ where each $e_i$ is a union-free expression in $\elang{\conv{},\di}$. Now, since $Q_{ZZZ}(B_{ZZZ}) = \ttrue$, we also have $Q(B_{ZZZ}) = \cup_{i=0}^n e_i(B_{ZZZ}) \neq \emptyset$. Hence there exists $e \in \{e_0,\ldots,e_n\}$ such that $e(B_{ZZZ}) \neq \emptyset$. By Lemma\nobreakspace \ref {lem:invdiconjunctive}, $e$ is equivalent to a conjunctive query with nonequalities $H_{e} \leftarrow B_{e}$ such that $B_e^{\textrm{rel}}$ is a disjoint union of chains. Furthermore, since $e(B_{ZZZ})\neq \emptyset$ there exists a matching $f: B_{e}^{\textrm{rel}} \rightarrow B_{ZZZ}$ which is a homomorphism by definition.

   Now let $Q_e$ be the conjunctive query with nonequalities $() \leftarrow B_{e}$ so that $Q_e(G) = \ttrue$ if and only if $e(G) \neq \emptyset$ for every graph $G$. Since $e(G) \subseteq Q(G)$ for any graph $G$,  $Q_e(G) = \ttrue$ implies $Q(G) \neq \emptyset$, whence by (2) $Q_{ZZZ}(G) =\ttrue$. Therefore $Q_e \subseteq Q_{ZZZ}$. By Lemma\nobreakspace \ref {lem:cqcont} there is a homomorphism $g$ from $B_{ZZZ}$ into $B_e^{\textrm{rel}}$. Notice that in the $B_{ZZZ}$ pattern displayed in Figure\nobreakspace \ref {fig:triplezigzag}, the left most node, labeled $a$, has three outgoing edges. Furthermore, since $B_e^{\textrm{rel}}$ is a disjoint union of chains, no node in $B_e^{\textrm{rel}}$ has 3 outgoing edges, and hence two out of $g(b)$, $g(c)$ and $g(d)$ are equal. Thus $g$ is not injective.

   Now consider $g$ followed by $f$. This function is an endomorphism of $B_{ZZZ}$. Because $g$ is not injective, this endomorphism is not injective, and hence certainly not the identity, which contradicts Lemma\nobreakspace \ref {lem:bzzzendo}. Therefore $Q$ does not exist. 
\end{proof}
\section{Path queries}
\label{sec:path-queries}
In this section, we characterize the order $\ltpath$ of relative
expressiveness for path queries by
Theorem\nobreakspace \ref {th:path-inclusion} below.

Towards the statement of this characterization, first notice the
following interdependencies between features:
\begin{align*}
 \pi_1(e) & =  (e\circ\conv e)\cap\id =
(e\circ (\id \cup \div))\cap\id = \cpi_1(\cpi_1(e)); \\
 \pi_2(e) & =  (\conv e\circ e)\cap\id =
((\id \cup \div) \circ e)\cap\id = \cpi_2(\cpi_2(e)); \\
\cpi_1(e) & = 
\id\setminus\pi_1(e); \\
\cpi_2(e) & = 
\id\setminus\pi_2(e); \\
e_1 \cap e_2 & = e_1 \setminus (e_1 \setminus e_2).
\end{align*}

Notice that these rewriting rules with $e$ as their input
variable provide a means to translate an expression into an
equivalent expression in another language.

Inspired by the above interdependencies, for any set of nonbasic features $F$, we define $\fbar$ to be the smallest superset of $F$ satisfying the following rules:
\begin{itemize}
  \item If $\cpi \in \alang{F}$, then $\pi \in \alang{F}$;
  \item If $\cap \in \alang{F}$ and $\di \in \alang{F}$, then $\pi \in \alang{F}$;
  \item If $\cap \in \alang{F}$ and ${}^{-1}\in \alang{F}$, then $\pi \in \alang{F}$;
  \item If $\setminus \in \alang{F}$ and $\pi\in \alang{F}$, then $\cpi\in \alang{F}$.
  \item If $\setminus \in \alang{F}$, then $\cap \in \alang{F}$;
\end{itemize}
We can compute $\fbar$ from $F$ by repeated application of the above rules, a process which terminates quickly after at most three iterations.
For example, $\overline{ \{ \setminus, \conv{} \} } = \{ \setminus,
\conv{}, \cap, \pi, \cpi\}$.

Notice that, if $F_1 \subseteq \fbar_2$, we can always rewrite an
expression $e \in \BL(F_1)$ into an equivalent expression in
$\BL(F_2)$ using the rewriting rules displayed above. Notice that  Therefore, we obtain

\begin{proposition}
\label{th:path-if} 
If $F_1 \subseteq \fbar_2$, then $\BL(F_1) \ltpath \BL(F_2)$.
\end{proposition}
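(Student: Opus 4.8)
The plan is to prove the (equivalent) statement $\BL(\fbar_2)\ltpath\BL(F_2)$; since $F_1\subseteq\fbar_2$ gives $\BL(F_1)\subseteq\BL(\fbar_2)$, the proposition follows. First I would record two easy structural facts about the five closure rules defining $\fbar$: every rule only ever produces one of $\pi$, $\cpi$, $\cap$, so $\fbar_2\setminus F_2\subseteq\{\pi,\cpi,\cap\}$; and none of $\di$, $\conv{}$, $\setminus$ is ever produced, so each of these belongs to $\fbar_2$ if and only if it belongs to $F_2$. Then I would argue by induction on the structure of an expression $e\in\BL(\fbar_2)$ that $e$ is equivalent, as a path query, to an expression of $\BL(F_2)$. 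The only nontrivial inductive step is when the outermost operator of $e$ is $\pi_i$, $\cpi_i$, or $\cap$ and lies in $\fbar_2\setminus F_2$; since by induction the immediate subexpressions are already expressible in $\BL(F_2)$, it suffices to establish, for $e,e_1,e_2\in\BL(F_2)$ and $i\in\{1,2\}$: (a) if $\cap\in\fbar_2\setminus F_2$ then $e_1\cap e_2$ is expressible in $\BL(F_2)$; (b) if $\pi\in\fbar_2\setminus F_2$ then $\pi_i(e)$ is expressible in $\BL(F_2)$; (c) if $\cpi\in\fbar_2\setminus F_2$ then $\cpi_i(e)$ is expressible in $\BL(F_2)$.

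Claim (a) is immediate: $\cap$ can enter $\fbar_2$ only through the rule ``$\setminus\in\fbar_2\Rightarrow\cap\in\fbar_2$'', so $\setminus\in F_2$, and $e_1\setminus(e_1\setminus e_2)$ is the desired expression. Claim (c) reduces to (b): $\cpi$ can enter $\fbar_2$ only through ``$\setminus,\pi\in\fbar_2\Rightarrow\cpi\in\fbar_2$'', so $\setminus\in F_2$ and $\pi\in\fbar_2$, and then $\cpi_i(e)=\id\setminus\pi_i(e)$, where $\pi_i(e)$ is in $\BL(F_2)$ either directly (if $\pi\in F_2$) or by claim (b). The hard case, and the main obstacle, is claim (b): since $\pi\notin F_2$ it was put into $\fbar_2$ by one of the three $\pi$-producing rules. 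If it was (also) produced by ``$\cap,\di\in\fbar_2\Rightarrow\pi$'' or ``$\cap,\conv{}\in\fbar_2\Rightarrow\pi$'', then $\cap\in\fbar_2$ and $\di\in F_2$ or $\conv{}\in F_2$, and I would use $\pi_1(e)=(e\circ(\id\cup\div))\cap\id$ or $\pi_1(e)=(e\circ\conv e)\cap\id$ (symmetrically for $\pi_2$), discharging the remaining $\cap$ via claim (a) if it is not already in $F_2$. The delicate situation is when the only witness is ``$\cpi\in\fbar_2\Rightarrow\pi$'': if $\cpi\in F_2$ one simply uses $\pi_i(e)=\cpi_i(\cpi_i(e))$, but if $\cpi\notin F_2$ then $\cpi$ itself arose through ``$\setminus,\pi\in\fbar_2\Rightarrow\cpi$'', and the two rewriting rules for $\pi$ and $\cpi$ appear to chase each other.

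I would break this apparent loop with a minimality argument on $\fbar_2$. Suppose $\pi,\cpi\in\fbar_2\setminus F_2$ and, towards a contradiction, that it is not the case that $\cap\in\fbar_2$ together with $\di\in F_2$ or $\conv{}\in F_2$. Then $S:=\fbar_2\setminus\{\pi,\cpi\}$ is still a superset of $F_2$, and one checks directly that $S$ is closed under all five rules (the $\pi$- and $\cpi$-producing rules become vacuous because their hypotheses are never met in $S$ under our assumption, using that $\di,\conv{},\setminus$ lie in $S$ iff they lie in $F_2$; and ``$\setminus\Rightarrow\cap$'' holds since $\cap\neq\pi,\cpi$). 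This contradicts minimality of $\fbar_2$. Hence whenever $\pi\in\fbar_2\setminus F_2$ and $\cpi\notin F_2$ we are back in the already-handled $\cap$-based case (this also covers $\cpi\notin\fbar_2$, since then $\pi$ cannot have come from the $\cpi$-rule at all). Thus claims (a)--(c) are mutually non-circular: (a) is self-contained, (b) invokes only (a), and (c) invokes only (b) and (a). Since each rewriting rule used above is an equivalence of path queries on every graph $G$, the translation preserves path-query semantics, the induction goes through, and $\BL(\fbar_2)\ltpath\BL(F_2)$, hence $\BL(F_1)\ltpath\BL(F_2)$, follows.
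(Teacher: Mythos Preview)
Your proof is correct and follows the same approach as the paper: both establish the result by rewriting expressions via the five interdependency identities listed just before the proposition. The paper's argument is a one-liner (``we can always rewrite \dots\ using the rewriting rules displayed above''), whereas you go further and explicitly verify that the rewriting terminates, resolving via a clean minimality argument the apparent mutual recursion between the $\pi$- and $\cpi$-rules that the paper leaves implicit.
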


We will actually show that the converse also holds, whence
\begin{theorem} \label{th:path-inclusion} 
$\BL(F_1) \ltpath \BL(F_2)$ if and only if $F_1 \subseteq \fbar_2$.
\end{theorem}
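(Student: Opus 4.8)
The plan is to prove the forward direction of Theorem~\ref{th:path-inclusion}: if $\BL(F_1) \ltpath \BL(F_2)$, then $F_1 \subseteq \fbar_2$. Equivalently, by contraposition, if $F_1 \not\subseteq \fbar_2$, then $\BL(F_1) \not\ltpath \BL(F_2)$. Since $F_1 \not\subseteq \fbar_2$ means there is a single nonbasic feature $f \in F_1$ with $f \notin \fbar_2$, and since $\BL(\{f\}) \subseteq \BL(F_1)$, it suffices to show: for each nonbasic feature $f$, and each set $F$ with $f \notin \fbar$, we have $\BL(\{f\}) \not\ltpath \BL(F)$ (in fact, we will aim for strong path separation wherever possible). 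The nonbasic features are $\div$, $\conv{}$, $\cap$, $\setminus$, $\pi$, $\cpi$, so this is a finite case analysis, but the closure rules defining $\fbar$ cut the work down: for instance, $f = \setminus$ only needs to be separated from the maximal $F$ not containing $\setminus$ in its closure, and similarly each $f$ need only be separated from the $\ltpath$-maximal fragments that exclude it.

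First I would isolate, for each feature $f$, the "largest" fragment $\BL(F)$ that does not have $f$ in its closure $\fbar$; by Proposition~\ref{th:path-if}, separating $f$ from these maximal fragments separates it from all smaller ones. Concretely I expect to need: (i) $\div \notin \fbar$ forces $F \subseteq \{\conv{},\setminus,\pi,\cpi\}$ (closed), so I must show $\BL(\div) \not\ltpath \BL(\setminus, \conv{})$ — i.e. the calculus of relations restricted to $\lab = \{R\}$ with $R$ interpreted as the empty relation still cannot produce the diversity pairs, since with $\div$ absent every expression evaluates to a subset of $\id$ on a graph whose only nonempty relation is $\id$, or more carefully, one exhibits a graph where $\div$ is nonempty off the diagonal but no $\div$-free expression is; (ii) $\conv{}$ must be separated from $\BL(\setminus,\div,\pi,\cpi)$ minus $\conv{}$ — here one uses an asymmetric graph (e.g. a single edge $R(a,b)$) where $R^{-1}$ selects $(b,a)$ but no converse-free expression selects any "backward" pair, since without $\conv{}$ every expression built from $R,\id,\div$ and the Boolean/projection operators, evaluated on $\{R(a,b)\}$, selects only pairs that are "forward-reachable"; (iii) $\cap$ must be separated from $\BL(\div,\conv{},\pi,\cpi)$ minus $\cap$ (note $\pi$ is in the closure of $\{\cap,\conv{}\}$ and of $\{\cap,\div\}$, but $\cap$ is not in the closure of $\{\pi,\conv{},\div\}$), using the standard example distinguishing $R \cap S$; (iv) $\setminus$ from $\BL(\div,\conv{},\pi,\cpi)$ — this is the genuinely hard one, since $\BL(\div,\conv{},\pi,\cpi)$ is already quite expressive (it contains everything but $\setminus$ and $\cap$, and equals the FO-queries safe for bisimulation); (v) $\pi$ from $\BL(\div,\conv{})$ and $\cpi$ from $\BL(\setminus,\div,\conv{})$-type fragments — but for path queries $\pi$ is cheap to separate since $\pi_1(R)$ produces diagonal pairs $(a,a)$ for sources of $R$-edges, and one checks no $\{\div,\conv{}\}$-expression produces exactly that set.

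For the standard separations (ii), (iii), (v) the tool is a simple invariant: fix a tiny graph $G$, compute (by hand or by the Brute-Force Algorithm of Section~\ref{sec:brute-force-approach}) the finite set of binary relations realizable by $\BL(F)$-expressions on $G$, and observe the target relation $q(G)$ is not among them — this also immediately gives \emph{strong} path separation. The two hard cases are (iv), separating $\setminus$ from $\BL(\div,\conv{},\pi,\cpi)$, and making sure the "brute-force set" arguments are watertight. For (iv), I would use a bisimulation argument analogous to (but simpler than) Definition~\ref{def-diff-bisimilar} — namely the \emph{bisimulation for $\BL(\div,\conv{},\pi,\cpi)$}, which (unlike the $\simeq_k$ for $\BL(\setminus,\div)$) does not need to "remember" which relation we are in beyond atom-type, because without $\cap$ and $\setminus$ we cannot take Boolean combinations of path conditions across different subexpressions. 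One builds two graphs $G_1, G_2$ and a relation between pairs of their elements that is a bisimulation of this weaker kind for all depths, such that some $\setminus$-expression (e.g. $R \setminus R^2$ or $\id \setminus \pi_1(R)$-style) distinguishes a designated pair, but the bisimulation shows no $\BL(\div,\conv{},\pi,\cpi)$-expression does; a classic choice is $G_1$ a directed path and $G_2$ a directed cycle of appropriate lengths, where "no edge from $a$ to $b$" type conditions behave differently. The main obstacle, as flagged in the introduction, is precisely this $\setminus$-versus-$\BL(\div,\conv{},\pi,\cpi)$ separation: getting the bisimulation notion right for the positive-plus-coprojection fragment and exhibiting graphs on which it holds to all depths while $\setminus$ still separates them. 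Everything else is a routine finite check or a direct appeal to Proposition~\ref{th:path-if} to reduce to a maximal fragment.
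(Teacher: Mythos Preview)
Your overall strategy---contraposition, reducing to a single feature $f \in F_1 \setminus \fbar_2$, and separating $\BL(\{f\})$ from the maximal fragments whose closure excludes $f$---matches the paper's organization. But several of your concrete plans fail.

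\textbf{Converse, case (ii).} Your single-edge counterexample does not work. On $G = \{R(a,b)\}$, the expression $\div \setminus R$ evaluates to $\{(b,a)\} = R^{-1}(G)$, so $R^{-1}$ \emph{is} expressible in the target fragment on that graph. The paper explicitly flags this pitfall and instead uses a carefully chosen two-edge graph, verifying by brute force that $G^{-1}$ is absent from the (128-element) set of relations realizable by $\BL(\div,\setminus)$-expressions on it.

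\textbf{Difference, case (iv).} You have the maximal fragment wrong: since $\setminus \notin \fbar_2$ iff $\setminus \notin F_2$, the largest such $F_2$ is $\{\div,\conv{},\cap,\pi,\cpi\}$, \emph{including} $\cap$. Your bisimulation sketch is premised on ``without $\cap$ and $\setminus$ we cannot take Boolean combinations,'' which is exactly what you may \emph{not} assume here. Worse, this is not the hard case at all: the paper dispatches it with a direct combinatorial argument on a $3$-clique versus a bow-tie of two $3$-cliques (with all self-loops), distinguished by $R^2 \setminus R$, and shows every nontrivial $\BL(\div,\conv{},\cap,\pi,\cpi)$-expression is nonempty on both simultaneously.

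\textbf{Projection, case (v).} You dismiss this as cheap, but there is no single maximal fragment lacking $\pi$: both $\{\conv{},\div\}$ and $\{\setminus,\cap\}$ are maximal with $\pi \notin \fbar$, and you must separate from each. The second is handled by a small pair of acyclic graphs (the paper's Figure pair~(a)). The first, $\BL(\pi) \not\ltpath \BL(\conv{},\div)$, is the case the paper treats as genuinely delicate: it proves the stronger boolean inexpressibility via a conjunctive-query homomorphism argument (the ``triple-zigzag'' pattern). A path-level brute-force separation on a small graph may be possible, but your off-hand claim that ``$\pi_1(R)$ produces diagonal pairs and one checks no $\{\div,\conv{}\}$-expression produces exactly that set'' is not a proof---on many small graphs $\pi_1(R) = R \circ R^{-1}$ already. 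You would need to exhibit a specific graph and actually carry out the closure computation.

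In short: you have inverted the difficulty. Difference is easy; projection versus $\BL(\conv{},\div)$ is where the work is. And your concrete witness for converse is one the paper explicitly rules out.
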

The ``only if'' direction of Theorem\nobreakspace \ref {th:path-inclusion}
requires a detailed analysis. For clarity of presentation, we
divide the languages under consideration into two classes, i.e., the
class $\clbot$ of languages without intersection, and the class
$\clcap$ of languages with intersection. Formally:
\begin{align*}
  \clbot &= \{ \BL(F) \mid \cap \not\in \fbar \},\\
  \clcap &= \{ \BL(F) \mid \cap\,  \in \fbar\}.
 \end{align*}
 We first establish the ``only if'' direction for the cases where
 $\BL(F_1)$ and $\BL(F_2)$ belong to the same class.  We do so for
 each class separately in Sections\nobreakspace \ref {sec:languages-bottom} and\nobreakspace  \ref {sec:languages-cap}.  Finally, in
 Section\nobreakspace \ref {sec:path-separ-betw-subl}, we consider the case where
 $\BL(F_1)$ and $\BL(F_2)$ belong to distinct classes.

\subsection{Languages without $\cap$}
\label{sec:languages-bottom}
In this subsection, we show the ``only if'' direction of
Theorem\nobreakspace \ref {th:path-inclusion}, restricted to $\clbot$, the class of
languages without $\cap$. Stated positively, the proposition states that for fragments $F_1$ and $F_2$ using only the operators $\di$, $\pi$ and $\cpi$, $\elang{F_1}\ltpath \elang{F_2}$ can only hold if $F_1 \subseteq \alang{F_2}$.

\begin{proposition} \label{prop:path-bottom-Hasse} Let $\BL(F_1)$ and
  $\BL(F_2)$ be in $\clbot$. If $F_1 \not \subseteq \fbar_2$, then
  $\BL(F_1) \not \ltpath \BL(F_2)$.
\end{proposition}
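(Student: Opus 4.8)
The plan is to prove the contrapositive in the form actually needed: assuming $\BL(F_1) \ltpath \BL(F_2)$ with both languages in $\clbot$, we must show $F_1 \subseteq \fbar_2$. Since the features available (beyond the basic ones) inside $\clbot$ are exactly $\di$, $\pi$ and $\cpi$, and since $\cpi \in \fbar$ forces $\pi \in \fbar$, the membership $F_1 \subseteq \fbar_2$ amounts to three implications: if $\di \in F_1$ then $\di \in \fbar_2$; if $\pi \in \fbar_1$ then $\pi \in \fbar_2$; and if $\cpi \in \fbar_1$ then $\cpi \in \fbar_2$. I would therefore isolate, for each of the three features $\di$, $\pi$, $\cpi$, a single "witness" path query $q$ that is expressible in the smallest fragment containing that feature, together with a finite graph $G$ (or a family of graphs) on which no expression of any $\clbot$-fragment lacking that feature can reproduce $q(G)$. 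Concretely: for $\di$, the query $\div$ itself evaluated on, say, a two-element graph, separated from all of $\BL(\cpi)$ by noting that every expression of $\BL(\cpi)$ built without $\div$ produces only relations that are unions of $\id$-restricted-to-subsets and the given edge relations composed — in particular on suitable graphs they cannot produce the "all distinct pairs" relation. For $\pi$ (equivalently, the ability to produce a nonempty sub-diagonal that is not all of $\id$): a graph with a "dead-end" node distinguishes $\pi_1(R)$ from everything in $\BL(\di)$, since without projection or coprojection the only diagonal-valued subexpressions are $\id$ and $\emptyset$. For $\cpi$: here the witness is $\cpi_1(R)$, and the separating fact is that expressions of $\BL(\pi,\di)$ are monotone in each edge relation while $\cpi_1(R)$ is not — so a pair consisting of a graph and a supergraph in the $R$-edges witnesses non-expressibility.

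The key structural lemma underlying all three separations is a normal-form/closure statement about $\clbot$-expressions: every expression $e$ in a fragment $\BL(F)$ with $\cap \notin \fbar$ can be written as a finite union of "union-free" expressions, and each union-free $\clbot$-expression either outputs a subset of the diagonal (if its top-level shape forces this) or is a composition whose "shape" is tightly controlled. In particular I would want the semantic facts: (i) if $\pi,\cpi \notin \fbar$ then every diagonal-valued subexpression of any $e \in \BL(F)$ equals $\id$ or $\emptyset$; (ii) if $\cpi \notin \fbar$ then every $e \in \BL(F)$ is monotone in each $G(R)$; (iii) if $\div \notin \fbar$ then every $e \in \BL(F)$ is "$\div$-free-definable", i.e. preserved under any homomorphism of $G$ that is a surjection identifying some nodes — more carefully, its value is determined by matchings of a pattern with no inequalities. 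These are all proved by straightforward structural induction, and they immediately rule out the respective witness queries on an appropriate small graph. I would then note that strong path separation, not just ordinary path separation, is obtained, matching the discussion after Definition~\ref{def:strong} — in fact each witness above is a single finite graph, so the separation is strong.

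The main obstacle I anticipate is not any single one of these three implications in isolation, but handling the interaction: $F_1$ may contain several nonbasic features simultaneously, and one must be careful that the witness for, say, $\pi \notin \fbar_2$ still works when $F_1$ additionally contains $\div$ or $\cpi$ — i.e. the graph $G$ and query $q$ chosen to force $\pi$ must genuinely lie in $\BL(F_1)$, which is automatic, but also $q(G)$ must differ from $e(G)$ for \emph{every} $e$ in $\BL(F_2)$ where $F_2$ is an arbitrary $\clbot$-fragment with $\pi \notin \fbar_2$, possibly containing $\div$ and $\cpi$. So the monotonicity argument for the $\cpi$ case and the diagonal-value argument for the $\pi$ case must be stated for the largest relevant $F_2$, and I would organize the proof as: (a) prove the three closure lemmas (i)–(iii) above for the maximal applicable fragments; (b) exhibit the three witnesses; (c) check that in each case the defining closure property of $\fbar_2$ (namely that the missing feature is absent) is exactly what the corresponding lemma needs; (d) conclude. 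The delicate bookkeeping is entirely in step (c) — ensuring the five derivation rules defining $\fbar$ are respected, e.g. that "$\cpi \notin \fbar_2$" really does follow from, and is equivalent to, the hypothesis that the $\cpi$-rule has not fired, given $\setminus \notin F_2$ and the other constraints of being in $\clbot$.
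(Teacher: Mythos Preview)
There is a genuine gap: you have mis-identified the feature set available inside $\clbot$. The class $\clbot$ consists of all fragments with $\cap \notin \fbar$, which rules out $\cap$ and $\setminus$ but \emph{not} converse; so $F \subseteq \{\di,\conv{},\pi,\cpi\}$ for $\BL(F)\in\clbot$, not $F \subseteq \{\di,\pi,\cpi\}$ as you assume. This omission breaks your plan in two places. First, you are missing an entire case: $\conv{} \in F_1$ while $\conv{} \notin \fbar_2$. The paper handles this (Proposition~\ref{prop:path-bottom}(\ref{bottom-conv})) by a brute-force enumeration on a specific three-node graph, and it is only a path-level (not boolean-level) separation, so it does not fit your template of semantic closure lemmas.

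Second, and more seriously, your witness for $\pi$ fails once $F_2$ may contain converse. On your proposed two-node ``dead-end'' graph $G=\{R(a,b)\}$, the expression $R\circ R^{-1}$ in $\BL(\conv{})$ evaluates to $\{(a,a)\} = \pi_1(R)(G)$, so your claim that ``without projection or coprojection the only diagonal-valued subexpressions are $\id$ and $\emptyset$'' is simply false in the presence of $\conv{}$. The relevant maximal $F_2$ here is $\{\conv{},\di\}$, and separating $\BL(\pi)$ from $\BL(\conv{},\di)$ is precisely the content of Proposition~\ref{bottom-pi-tech}, which the paper singles out as one of its two main technical results and proves via the $B_{ZZZ}$ pattern and a homomorphism-counting argument (Section~\ref{sec:zigzag}); no elementary diagonal-value lemma suffices. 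Your closure lemmas (ii) and (iii) for $\cpi$ and $\di$ are fine and match the paper's arguments for those cases, but the $\pi$ case and the missing $\conv{}$ case require the substantially harder machinery the paper develops.
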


Propositions\nobreakspace \ref {th:path-if} and\nobreakspace  \ref {prop:path-bottom-Hasse}
combined yield the Hasse diagram of $\ltpath$ for $\clbot$, shown in
Figure\nobreakspace \ref {fig:hasse-noint}. It is indeed readily verified that for
any two languages $\BL(F_1)$ and $\BL(F_2)$ in $\clbot$, there is a
path from $\BL(F_1)$ to $\BL(F_2)$ in Figure\nobreakspace \ref {fig:hasse-noint} if
and only if $F_1 \subseteq \fbar_2$.

\begin{figure*}[tbp]
\begin{center}
    \resizebox{0.67\textwidth}{!}{
\begin{picture}(0,0)%
\includegraphics{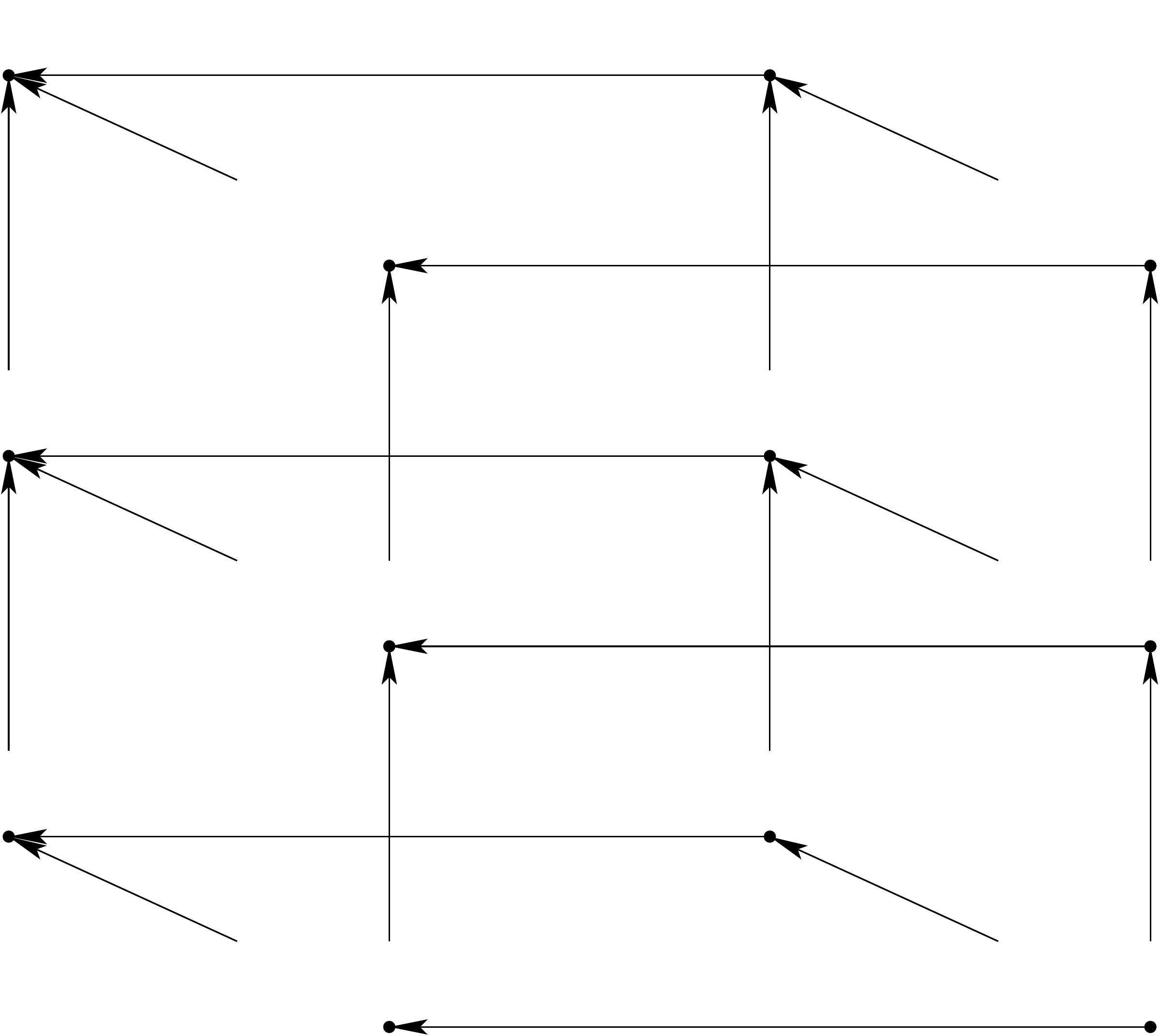}%
\end{picture}%
\setlength{\unitlength}{4144sp}%
\begingroup\makeatletter\ifx\SetFigFont\undefined%
\gdef\SetFigFont#1#2#3#4#5{%
  \reset@font\fontsize{#1}{#2pt}%
  \fontfamily{#3}\fontseries{#4}\fontshape{#5}%
  \selectfont}%
\fi\endgroup%
\begin{picture}(10964,9793)(-81,-10043)
\put(  1,-7801){\makebox(0,0)[b]{\smash{{\SetFigFont{25}{30.0}{\familydefault}{\mddefault}{\updefault}{\color[rgb]{0,0,0}$\BL(\,\framebox[1.3\width]{\hhh${}^{-1},\textit{di}$}\,)$}%
}}}}
\put(  1,-4201){\makebox(0,0)[b]{\smash{{\SetFigFont{25}{30.0}{\familydefault}{\mddefault}{\updefault}{\color[rgb]{0,0,0}$\BL(\,\framebox[1.2\width]{\hhh${}^{-1},\textit{di},\pi$}\,)$}%
}}}}
\put(3601,-9601){\makebox(0,0)[b]{\smash{{\SetFigFont{25}{30.0}{\familydefault}{\mddefault}{\updefault}{\color[rgb]{0,0,0}$\BL(\,\framebox[1.7\width]{\hhh${}^{-1}$}\,)$}%
}}}}
\put(3601,-2401){\makebox(0,0)[b]{\smash{{\SetFigFont{25}{30.0}{\familydefault}{\mddefault}{\updefault}{\color[rgb]{0,0,0}$\BL(\,\framebox[1.3\width]{\hhh${}^{-1},\overline\pi$}\,,\pi)$}%
}}}}
\put(3601,-6001){\makebox(0,0)[b]{\smash{{\SetFigFont{25}{30.0}{\familydefault}{\mddefault}{\updefault}{\color[rgb]{0,0,0}$\BL(\,\framebox[1.3\width]{\hhh${}^{-1},\pi$}\,)$}%
}}}}
\put(  1,-601){\makebox(0,0)[b]{\smash{{\SetFigFont{25}{30.0}{\familydefault}{\mddefault}{\updefault}{\color[rgb]{0,0,0}$\BL(\,\framebox[1.2\width]{\hhh${}^{-1},\textit{di},\overline\pi$}\,,\pi)$}%
}}}}
\put(7201,-7801){\makebox(0,0)[b]{\smash{{\SetFigFont{25}{30.0}{\familydefault}{\mddefault}{\updefault}{\color[rgb]{0,0,0}$\BL(\,\framebox[1.7\width]{\hhh$\textit{di}$}\,)$}%
}}}}
\put(7201,-4201){\makebox(0,0)[b]{\smash{{\SetFigFont{25}{30.0}{\familydefault}{\mddefault}{\updefault}{\color[rgb]{0,0,0}$\BL(\,\framebox[1.3\width]{\hhh$\textit{di},\pi$}\,)$}%
}}}}
\put(10801,-9601){\makebox(0,0)[b]{\smash{{\SetFigFont{25}{30.0}{\familydefault}{\mddefault}{\updefault}{\color[rgb]{0,0,0}$\BL$}%
}}}}
\put(10801,-6001){\makebox(0,0)[b]{\smash{{\SetFigFont{25}{30.0}{\familydefault}{\mddefault}{\updefault}{\color[rgb]{0,0,0}$\BL(\,\framebox[1.7\width]{\hhh$\pi$}\,)$}%
}}}}
\put(10801,-2401){\makebox(0,0)[b]{\smash{{\SetFigFont{25}{30.0}{\familydefault}{\mddefault}{\updefault}{\color[rgb]{0,0,0}$\BL(\,\framebox[1.7\width]{\hhh$\overline\pi$}\,,\pi)$}%
}}}}
\put(7201,-601){\makebox(0,0)[b]{\smash{{\SetFigFont{25}{30.0}{\familydefault}{\mddefault}{\updefault}{\color[rgb]{0,0,0}$\BL(\framebox[1.3\width]{\hhh$\textit{di},\overline\pi$}\,,\pi)$}%
}}}}
\end{picture}%
      }
    \label{fig:hasse-noint}
\end{center}
\caption{The Hasse diagram of $\ltpath$ for $\clbot$. For each language, the boxed features are a minimal set of
  nonbasic features defining the language, while the other features
  can be derived from them in the sense of
  Theorem\nobreakspace \ref {th:path-inclusion} (using the appropriate
  interdependencies).}
\end{figure*}

Towards a proof of Proposition\nobreakspace \ref {prop:path-bottom-Hasse}, we first
establish an auxiliary proposition.  For later use, we sometimes prove results
that are stronger than strictly needed for this purpose.

\begin{proposition} \label{prop:path-bottom}
  Let $F_1$ and $F_2$ be sets of nonbasic features.
  \begin{enumerate}
  \item \label{bottom-di} If $\div \in \fbar_1$ and $\div \not\in \fbar_2$, then $\BL(F_1)
    \nltstrong \BL(F_2)$.
  \item \label{bottom-cpi} If $\cpi \in \fbar_1$, $\cpi \not\in
    \fbar_2$, and $\setminus \not\in \fbar_2$, then $\BL(F_1) \nltstrong
    \BL(F_2)$.
  % \item \label{bottom-pi} If $\pi \in \fbar_1$ and $F_2 \subseteq \{
  %   \conv{}, \div \}$, then $\BL(F_1) \not\ltbool \BL(F_2)$.
  \item \label{bottom-conv} If $\conv{} \in \fbar_1$ and $\conv{} \not\in
    \fbar_2$, then $\BL(F_1) \nltstrongpath \BL(F_2)$.
  \item \label{bottom-pi} If $\pi \in \fbar_1$ and $F_2 \subseteq \{\conv{}, \div \}$, then $\BL(F_1) \not\ltbool \BL(F_2)$.
  \end{enumerate}
\end{proposition}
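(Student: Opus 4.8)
The four statements are independent, and each is proved by the usual two-step scheme: exhibit a separating (path or boolean) query lying in $\BL(F_1)$ together with a witnessing finite graph or pair of graphs, and then show that the witness(es) cannot be told apart by \emph{any} expression of $\BL(F_2)$. Two remarks streamline the setup. First, none of the interdependency rules defining $\fbar$ ever produces $\div$, $\conv{}$, or $\cpi$; hence $\div\notin\fbar_2$ simply means $\div\notin F_2$, and likewise for $\conv{}$ and $\cpi$, which pins down the \emph{largest} language we must defeat in each part. Second, by Proposition~\ref{th:path-if}, since $\{\div\}$, $\{\conv{}\}$, $\{\pi\}$, $\{\cpi\}$ lie in their own closures, whenever $\div\in\fbar_1$ (resp.\ $\conv{}\in\fbar_1$, etc.) the expression $\div$ (resp.\ $\conv R$, etc.) is already available in $\BL(F_1)$, so it suffices to write the separating queries in these minimal fragments. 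Part~\ref{bottom-pi} is then immediate: if $\pi\in\fbar_1$ then $\BL(\pi)\ltpath\BL(F_1)$, so $\BL(\pi)\ltbool\BL(F_1)$; if also $\BL(F_1)\ltbool\BL(F_2)$ with $F_2\subseteq\{\conv{},\div\}$, chaining would give $\BL(\pi)\ltbool\BL(\conv{},\div)$, contradicting Proposition~\ref{bottom-pi-tech}.

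\textbf{Part~\ref{bottom-di} (diversity).} Take $q$ to be the boolean query ``$\adom$ has at least two elements'', i.e.\ the non-emptiness of $\div$; it is expressible in $\BL(F_1)$ because $\div\in\fbar_1$. For the witnesses, let $G_2$ consist of a single $R$-self-loop, so $q(G_2)$ is false, and let $G_1$ be the disjoint union of two copies of $G_2$, so $q(G_1)$ is true. The key point is a routine disjoint-union/locality lemma: since $\div\notin F_2$, an induction on the structure of $e\in\BL(F_2)$ shows that $e(G_1)$ equals the disjoint union of the two induced copies of $e(G_2)$ --- composition, converse, intersection, difference, projection, and coprojection never produce a pair straddling the two components, and $\div$, the only operator that could, is absent. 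Hence $e(G_1)\neq\emptyset \iff e(G_2)\neq\emptyset$ for every such $e$, so $G_1$ and $G_2$ are indistinguishable in $\BL(F_2)$ and $\BL(F_1)\nltstrong\BL(F_2)$.

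\textbf{Part~\ref{bottom-conv} (converse).} Here $\conv{}\notin F_2$, and the largest converse-free language is $\BL(\setminus,\div)$ (its closure is $\{\setminus,\div,\cap,\pi,\cpi\}$), for which we have the bisimulation invariance of Proposition~\ref{theo-diff-bisimilar}. We use the \emph{path} query $e_1=\conv R\in\BL(\{\conv{}\})\ltpath\BL(F_1)$. The plan is to construct a single finite graph $G$ together with two tuples $(a,b)$ and $(a',b')$ that are $\simeq_k$-bisimilar for every $k$ yet separated by $\conv R$ (say $(a,b)\in\conv R(G)$ but $(a',b')\notin\conv R(G)$). Then any $e_2\in\BL(\setminus,\div)_k$ treats the two tuples alike while $\conv R$ does not, so $e_2(G)\neq\conv R(G)$; letting $k$ vary yields $e_1(G)\neq e_2(G)$ for all $e_2\in\BL(\setminus,\div)\supseteq\BL(F_2)$, i.e.\ $\BL(F_1)\nltstrongpath\BL(F_2)$. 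Building $G$ is the real work: the $\BL(\setminus,\div)$-bisimulation does see ``backward'' edges indirectly (through the tuples formed in its Forth/Back clauses), so the construction must exhibit a genuine \emph{bisimulation multiplicity} --- two forward-equivalent vertices each matchable by more than one partner, one realization carrying the back-edge $R(b,a)$ and another not --- precisely the bisimilar-but-non-isomorphic gadget that makes converse inexpressible.

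\textbf{Part~\ref{bottom-cpi} (coprojection).} Since $\cpi,\setminus\notin\fbar_2$, the interdependency rules give $\fbar_2\subseteq\{\conv{},\div,\pi,\cap\}$, so the largest language to defeat is the \emph{monotone} language $\BL(\cap,\conv{},\div,\pi)$ (``positive with converse and diversity''), and the separating query should be an anti-monotone coprojection query; the canonical choice is ``$G$ has a sink'', the non-emptiness of $\cpi_1(R)\in\BL(\{\cpi\})\ltpath\BL(F_1)$, which no monotone language can express. One must then produce a \emph{single} pair $(G_1,G_2)$, with $G_1$ having a sink and $G_2$ not, that $\BL(\cap,\conv{},\div,\pi)$ cannot distinguish. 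When $\div\notin F_2$ this is easy: $\BL(\cap,\conv{},\pi)$ is a pure conjunctive-query language, preserved by homomorphisms, so it suffices to take $G_1$ with a sink and let $G_2$ be a ``folding'' of $G_1$ (reroute the sink's incoming edges onto a non-sink) arranged so that $G_1$ and $G_2$ are homomorphically equivalent --- for instance $G_1=\{R(a,b),R(b,a),R(a,c)\}$ and $G_2=\{R(a,b),R(b,a)\}$ --- which in particular already settles the $\clbot$ case (where the relevant maximal language is merely $\BL(\conv{},\div,\pi)$). The genuinely delicate case is $\div\in F_2$: homomorphic folding no longer preserves queries, and naive pairs get separated by ``$\geq m$ elements'', out-degree, and short-cycle queries, so the pair must be tuned so that the coprojection-visible defect is shielded from the combined power of $\cap$, $\conv{}$, and $\div$, and indistinguishability must then be verified either via the $\BL(\setminus,\div)$-bisimulation sufficient condition applied over the extended vocabulary $\{R,R^{-1}\}$ or by the brute-force check of Section~\ref{sec:brute-force-approach}. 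Designing and verifying this pair is the principal obstacle of the proposition.
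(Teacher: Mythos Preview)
Parts~(\ref{bottom-di}) and~(\ref{bottom-pi}) match the paper; your disjoint-union argument for diversity is if anything cleaner than the paper's. One slip in your preamble: $\cpi$ \emph{is} produced by the closure rule $\setminus,\pi\Rightarrow\cpi$, so your claim that none of $\div,\conv{},\cpi$ is ever produced is false for $\cpi$, though harmless here since $\setminus\notin\fbar_2$ is an explicit hypothesis of part~(\ref{bottom-cpi}).

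For part~(\ref{bottom-conv}) the paper takes a completely different and much more direct route than you propose. It does not use bisimulation: it fixes the small graph $G$ of Figure~\ref{fig:converse}, exhaustively enumerates via the brute-force closure of Section~\ref{sec:brute-force-approach} the $128$ relations $e(G)$ achievable by converse-free expressions $e\in\BL(\div,\setminus)$, and simply observes that $G(R)^{-1}$ is not on the list. Your bisimulation plan is not wrong in principle --- the Atoms clause of Definition~\ref{def-diff-bisimilar} indeed never inspects the reverse edge --- but you never construct the witnessing graph, and ``bisimulation multiplicity'' is a description of what such a gadget must do, not a construction. As written this part is a gap.

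For part~(\ref{bottom-cpi}) you are working much harder than the paper. The paper's entire argument is two lines: since $\cpi,\setminus\notin\fbar_2$ we have $\fbar_2\subseteq\{\div,\pi,\cap,\conv{}\}$, so every expression in $\BL(F_2)$ is monotone, and the non-monotone boolean query $\cpi_2(R)\neq\emptyset$ is therefore inexpressible. Your case split on $\div\in F_2$ and the search for a single pair robust to $\div$ and $\cap$ --- which you flag as ``the principal obstacle'' and then leave open --- is simply not what the paper does. You are right that monotonicity alone literally yields only $\not\ltbool$ rather than the stated $\nltstrong$; the paper does not address this distinction, and $\not\ltbool$ suffices for every downstream use of this item.
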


\begin{proof}
  % (1) self-loop vs two self-loops
  % (2) panhandle
  % (3) 3-ZIGZAG homomorphism argument
  % (4) 1->2, 3->4.
  For (\ref{bottom-di}), consider a graph $G_1$ consisting of two
  self-loops, and a graph $G_2$ consisting of a single self-loop, all with
  the same label. For any nontrivial expression $e$ not using $\di$, $\setminus$ or $\cpi$, it is evident that $e(G_1)$ and $e(G_2)$ both contain all possible self-loops in $G_1$ and $G_2$ respectively. Therefore, applying $\setminus$ or $\cpi$ to any such expressions leads to expressions that show similar behavior on $G_1$ and $G_2$. More specifically, they select either all self-loops in both $G_1$ and $G_2$, or select nothing in both graphs simultaneously. The same reasoning can now be applied to general expressions in $\elang{F_2}$. This reasoning shows that $G_1$ and $G_2$ cannot be distinguished in $\elang{F_2}$. They are, however, distinguishable by in $\elang{F_1}$ by $\di \neq \emptyset$.

  For (\ref{bottom-cpi}), notice that $\alang{F_2}\subseteq \{\di,\pi,\cap,{}^{-1},{}^+\}$, whence $\elang{F_2}$ only contains monotone expressions. Therefore it is clear that a non-monotone query such as $\cpi_2(R)\neq \emptyset$ is not expressible in $\elang{F_2}$.

  % For (\ref{bottom-pi}), we use the homomorphism approach outlined in
  % \Cref{sec:homom-approach}.  Let $Q_1 = H_1 \gets B_1$ be a
  % boolean conjunctive query that checks for the existence of the
  % pattern shown in Figure\nobreakspace \ref {fig:triplezigzag}. The query $Q_1$ can
  % be expressed in $\BL(\conv{}, \pi)$ by $ \pi_1 ( R^6 \circ \conv R
  % \circ R^6) \circ \allowbreak \pi_1 (R^5 \circ \conv R \circ R^5)
  % \circ \allowbreak \pi_1 (R^4 \circ \conv R \circ R^4)$. Since each
  % of the subexpressions $\pi_1(R^k \circ \conv R \circ R^k)$,
  % $k=4,5,6$, is equivalent to $\pi_1(R^k \circ \pi_2( \pi_1(R^k) \circ
  % R))$, $Q_1$ can also be expressed in $\BL(\pi)$.\footnote{In the
  %   proof of Proposition\nobreakspace \ref {prop:converse-elimination}, we show,
  %   more generally, that converse can be eliminated from an expression
  %   of the form $\pi(e)$, with $e$ not containing $\setminus$, and
  %   $\cap$.}
  %For the proof of (\ref{bottom-pi}) see Section\nobreakspace \ref {sec:zigzag}
%
  % Now, assume there also exists an expression $e_2$ in $\BL(\conv{},
  % \div)$ that expresses $Q_1$.  Note that $e_2$ is a conjunctive query
  % $H_2 \gets B_2$ with nonequalities. In the Appendix, we show that,
  % by the equivalence of $e_2$ and $Q_1$, there exists a nontrivial
  % endomorphism of $B_1$. We then argue that such an endomorphism does
  % not exist.

  For (\ref{bottom-conv}), we establish strong  separation at the
  level of path queries as explained in
  Section\nobreakspace \ref {sec:strong-path-separation}.
  Thereto, we consider the
  graph $G$ shown in Figure\nobreakspace \ref {fig:converse}. By the Brute-Force method described in Section~\ref{sec:brute-force-approach}. we can exhaustively enumerate all the possible result relations $e(G)$ for all expressions $e \in \elang{\di,\setminus,{}^+}$, i.e., not using converse. There are 128 relations in this list. It can then be verified that $G^{-1}$ is not present in the list\footnote{Note that if we would have used a simpler graph $G$, say $G$ consisting of a single edge, then $G^{-1}$ would be expressible without using converse, using the expression $\di \setminus R$}. 

  The proof of (\ref{bottom-pi}) follows directly from Proposition~\ref{bottom-pi-tech} since $\elang{\pi} \pathq\elang{F_1}$ and $\elang{F_2} \pathq \elang{\conv{},\di}$.
\end{proof} 

\begin{figure}[tbp!]
\begin{center}
 \resizebox{0.425\textwidth}{!}{
\includegraphics{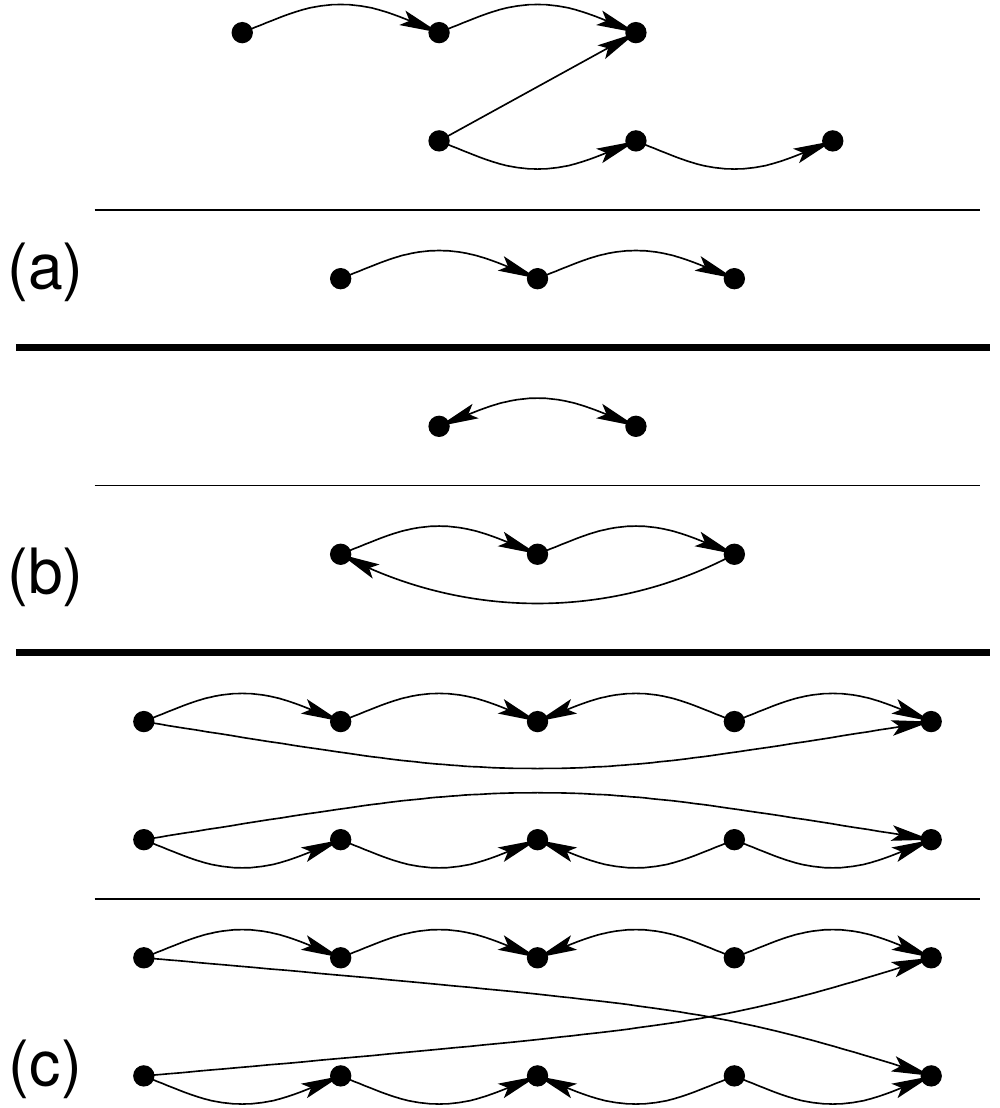}}
\caption{Graph pairs used to prove $\nltstrong$ results in
  Section\nobreakspace \ref {sec:path-queries} and \ref{sec:boolean-queries}. All
  edges are assumed to have the same label $R$.}
\label{fig:separations}
\end{center}
\end{figure}

\begin{figure}[tbp!]
\begin{center}
 \resizebox{0.4\textwidth}{!}{
\includegraphics{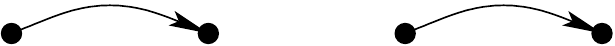}}
\caption{Graph used to prove Proposition\nobreakspace \ref {prop:path-bottom}
  (\ref{bottom-conv}). Both edges are assumed to have the same label
  $R$.}
\label{fig:converse}
\end{center}
\end{figure}

Proposition\nobreakspace \ref {prop:path-bottom} is now used to show that for every
pair $F_1$ and $F_2$ of sets of nonbasic features for which $F_1
\not\subseteq \fbar_2$ (i.e., for which there is no path in
Figure\nobreakspace \ref {fig:hasse-noint}), that $\BL(F_1) \not\ltpath \BL(F_2)$.
The remainder of the proof of Proposition\nobreakspace \ref {prop:path-bottom-Hasse} is a combinatorial analysis to verify that Proposition\nobreakspace \ref {prop:path-bottom} covers all the cases.  

\begin{proof}[Proof of Proposition\nobreakspace \ref {prop:path-bottom-Hasse}]
    First, suppose that $\cpi \in F_2$. Then, $F_1 \not\subseteq \alang{F_2}$ if and only if $F_1 \cap \{\di,{}^{-1}\} \not\subseteq F_2 \cap \{\di,{}^{-1}\}$. Hence we have the following possible scenarios: $\di \in F_1$ and $\di \not\in \alang{F_2}$; or ${}^{-1} \in F_1$ and ${}^{-1} \not\in \alang{F_2}$. If $\di \in F_1$ and $\di \not\in \alang{F_2}$, then $\elang{F_1} \not\pathq \elang{F_2}$ due to Proposition\nobreakspace \ref {prop:path-bottom}(\ref{bottom-di}). Otherwise, we achieved the result due to Proposition\nobreakspace \ref {prop:path-bottom}(\ref{bottom-conv}).
  
    On the other hand, suppose that $\cpi \not \in F_2$. Then, $F_2 = \alang{F_2}$. Thus, $F_1 \not \subseteq \alang{F_2}$ if and only if $F_1 \not \subseteq F_2$. Hence there has to exists some $x \in F_1$ such that $x \not \in F_2$. Furthermore, since $F_1 \subseteq \{\di,\pi,\cpi,^{-1}\}$ and $F_2 \subseteq \{\di,\pi,{}^{-1}\}$ Proposition\nobreakspace \ref {prop:path-bottom} can be applied. Notice that we cannot apply this proposition directly since it makes use of $\alang{F_1}$ instead of $F_1$. This, however, is no issue since $F_1 \subseteq \alang{F_1}$.
\end{proof}

\subsection{Languages with $\cap$}
\label{sec:languages-cap}
In this subsection, we show the ``only if'' direction of
Theorem\nobreakspace \ref {th:path-inclusion}, restricted to $\clcap$, the class of
languages with $\cap$.

\begin{proposition} \label{prop:path-int-Hasse} Let both $\BL(F_1)$ and
  $\BL(F_2)$ be in $\clcap$. If $F_1 \not \allowbreak\subseteq \fbar_2$, then
  $\BL(F_1) \not \ltpath \BL(F_2)$.
\end{proposition}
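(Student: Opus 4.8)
The plan is to mirror the structure of the proof of Proposition~\ref{prop:path-bottom-Hasse}: reduce the desired separations for the class $\clcap$ to an auxiliary proposition, analogous to Proposition~\ref{prop:path-bottom}, that handles each ``missing feature'' individually, and then run a combinatorial case analysis showing these individual cases cover every pair $(F_1,F_2)$ in $\clcap$ with $F_1\not\subseteq\fbar_2$. So first I would state and prove an auxiliary proposition of the form: if $F_1,F_2$ are sets of nonbasic features, then (i)~if $\div\in\fbar_1$ and $\div\notin\fbar_2$, then $\BL(F_1)\nltstrong\BL(F_2)$; (ii)~if $\setminus\in\fbar_1$ and $\setminus\notin\fbar_2$, then $\BL(F_1)\nltstrongpath\BL(F_2)$ (equivalently, a non-monotonicity or a counting argument distinguishes a fixed pair of graphs); (iii)~if $\cpi\in\fbar_1$, $\cpi\notin\fbar_2$ and $\setminus\notin\fbar_2$, then $\BL(F_1)\nltstrong\BL(F_2)$; and (iv)~if $\conv{}\in\fbar_1$ and $\conv{}\notin\fbar_2$, then $\BL(F_1)\nltstrongpath\BL(F_2)$. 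Note that for the last case the counterexample graph of Figure~\ref{fig:converse} will no longer suffice, because $\BL(\cap,\setminus,\div)$ is more powerful than $\BL(\setminus,\div)$; I would need a new fixed graph $G$ (or a short family) on which $G^{-1}$ is shown, by brute-force enumeration of $\{e(G):e\in\BL(\cap,\setminus,\div,\tc{})\}$, not to occur, together with the observation (as in the footnote to Proposition~\ref{prop:path-bottom}(\ref{bottom-conv})) that the graph must be chosen carefully so that converse is genuinely needed.

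Second, once the auxiliary proposition is in hand, I would carry out the combinatorial verification. Since both languages are in $\clcap$, we have $\cap\in\fbar_1$ and $\cap\in\fbar_2$, so the ``interesting'' features are $\div$, $\setminus$, $\cpi$, and $\conv{}$, with the closure rules of $\fbar$ telling us exactly how $\pi$, $\cpi$, and $\cap$ are forced. The key point is that within $\clcap$ the closure operator simplifies: since $\cap$ is already present, having either $\div$ or $\conv{}$ forces $\pi$, and having $\setminus$ forces both $\cap$ (trivially) and, together with $\pi$, forces $\cpi$. I would enumerate: if $F_1\not\subseteq\fbar_2$, then some feature $x\in F_1$ is missing from $\fbar_2$; by the closure rules the minimal such ``essential'' $x$ lies in $\{\div,\setminus,\cpi,\conv{}\}$ (one checks that $\pi$ and $\cap$ being missing from $\fbar_2$ already implies one of these is missing, using $F_1\subseteq\fbar_1$ and $\fbar_1$'s rules), and each possibility is dispatched by the corresponding clause of the auxiliary proposition. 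A small subtlety, already noted in the proof of Proposition~\ref{prop:path-bottom-Hasse}, is that the auxiliary proposition is phrased in terms of $\fbar_1$ rather than $F_1$; this is harmless since $F_1\subseteq\fbar_1$, so $x\in F_1\subseteq\fbar_1$.

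I expect the main obstacle to be case~(iv), the elimination of converse in the presence of intersection, set difference, and diversity. The brute-force search space grows rapidly with the size of the witness graph, and the graph must be large and asymmetric enough that $G^{-1}$ is not recoverable even with $\cap$ and $\setminus$ available, yet small enough to be machine-checkable; the polynomial-time bisimulation-based sufficient condition for indistinguishability discussed in Section~\ref{sec:bisim-invar} may be needed to push the verification through. The monotonicity argument for~(iii) and the two-self-loops argument for~(i) should go through essentially verbatim from the $\clbot$ case (indeed clauses~(\ref{bottom-di}) and~(\ref{bottom-cpi}) of Proposition~\ref{prop:path-bottom} are already stated for arbitrary $F_1,F_2$), and clause~(ii) for $\setminus$ is a straightforward non-monotonicity-plus-cardinality argument, so the real work is concentrated in~(iv) and in checking that the five closure rules leave no uncovered combination in the final case analysis.
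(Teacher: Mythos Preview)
Your case analysis has a genuine gap: the missing feature may be $\pi$ itself, and none of your four clauses covers that. Concretely, take $F_1=\{\cap,\pi\}$ and $F_2=\{\setminus\}$ (or $\{\setminus,\cap\}$). Both languages are in $\clcap$; we have $\fbar_1=\{\cap,\pi\}$ and $\fbar_2=\{\setminus,\cap\}$, so $F_1\not\subseteq\fbar_2$ because $\pi\notin\fbar_2$. But $\fbar_1$ contains none of $\div$, $\setminus$, $\cpi$, $\conv{}$, so clauses (i)--(iv) of your auxiliary proposition are all vacuous here. Your parenthetical claim that ``$\pi$ being missing from $\fbar_2$ already implies one of these is missing'' is simply false in this example. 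The paper closes exactly this gap with a separate clause (Proposition~\ref{prop:path-int}(\ref{int-pi})): if $\pi\in\fbar_1$ and $F_2\subseteq\{\setminus,\cap\}$, then $\BL(F_1)\nltstrong\BL(F_2)$, witnessed by the pair of graphs in Figure~\ref{fig:separations}(a) and the query $\pi_1(R^2)\circ R\circ\pi_2(R^2)$.

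A second, smaller point: your worry in clause~(iv) is unfounded. You write that the graph of Figure~\ref{fig:converse} ``will no longer suffice, because $\BL(\cap,\setminus,\div)$ is more powerful than $\BL(\setminus,\div)$,'' but in fact $e_1\cap e_2=e_1\setminus(e_1\setminus e_2)$, so $\BL(\cap,\setminus,\div)\ltpath\BL(\setminus,\div)$ and the two languages are equivalent. The brute-force enumeration in the proof of Proposition~\ref{prop:path-bottom}(\ref{bottom-conv}) is carried out over $\BL(\di,\setminus)$, which already subsumes intersection (and, via the closure rules, projection and coprojection as well). That clause is stated for arbitrary $F_2$ with $\conv{}\notin\fbar_2$ and applies verbatim in $\clcap$; no new witness graph is needed.
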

\begin{figure*}[tbp]
\begin{center}
    \resizebox{0.72\textwidth}{!}{
\begin{picture}(0,0)%
\includegraphics{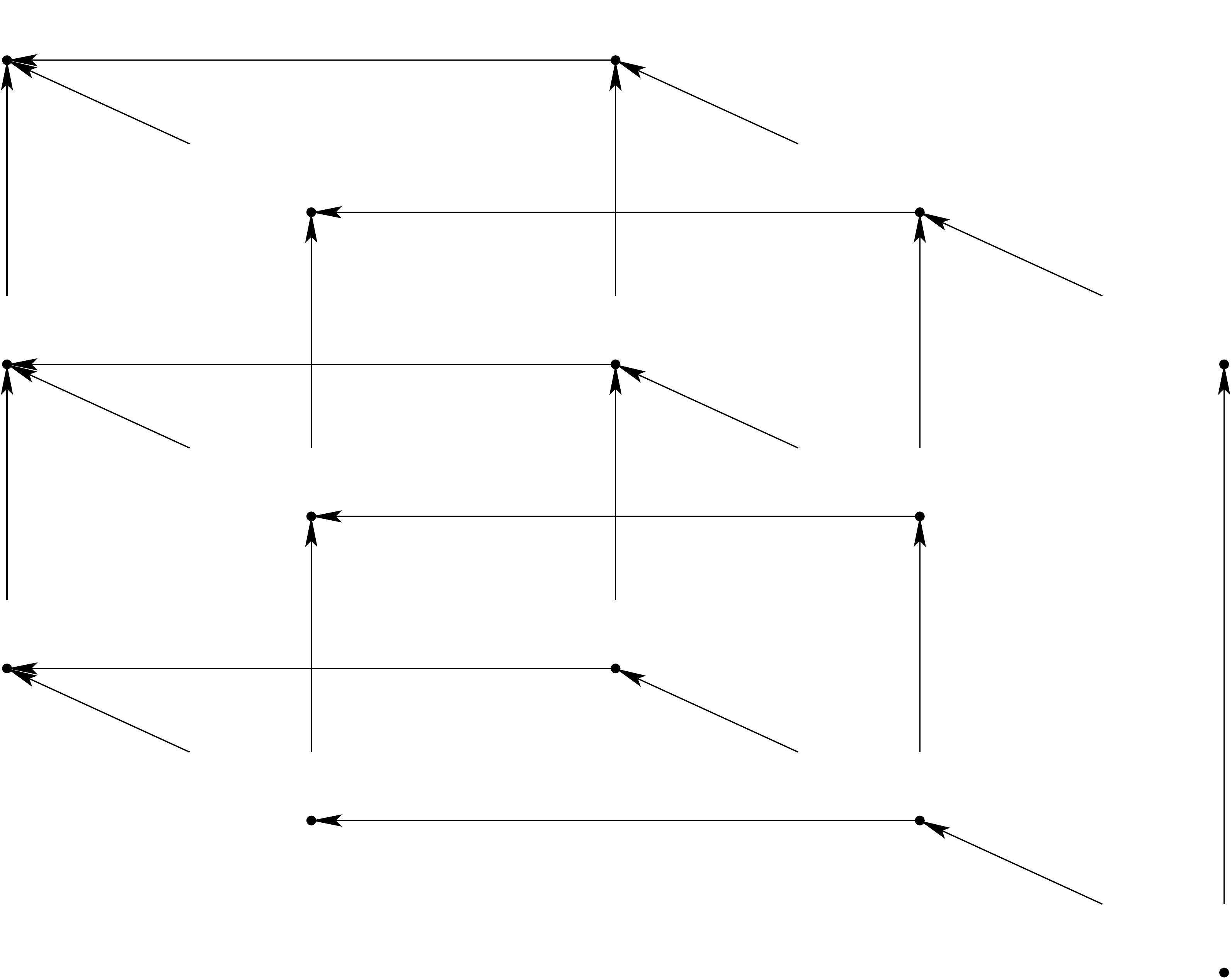}%
\end{picture}%
\setlength{\unitlength}{4144sp}%
\begingroup\makeatletter\ifx\SetFigFont\undefined%
\gdef\SetFigFont#1#2#3#4#5{%
  \reset@font\fontsize{#1}{#2pt}%
  \fontfamily{#3}\fontseries{#4}\fontshape{#5}%
  \selectfont}%
\fi\endgroup%
\begin{picture}(14564,11572)(-81,-11822)
\put(7201,-4201){\makebox(0,0)[b]{\smash{{\SetFigFont{25}{30.0}{\familydefault}{\mddefault}{\updefault}{\color[rgb]{0,0,0}$\BL(\,\framebox[1.2\width]{\hhh$\cap,\textit{di},\overline\pi$}\,,\pi)$}%
}}}}
\put(  1,-4201){\makebox(0,0)[b]{\smash{{\SetFigFont{25}{30.0}{\familydefault}{\mddefault}{\updefault}{\color[rgb]{0,0,0}$\BL(\,\framebox[1.15\width]{\hhh$\cap,{}^{-1},\textit{di},\overline\pi$}\,,\pi)$}%
}}}}
\put(  1,-7801){\makebox(0,0)[b]{\smash{{\SetFigFont{25}{30.0}{\familydefault}{\mddefault}{\updefault}{\color[rgb]{0,0,0}$\BL(\,\framebox[1.2\width]{\hhh$\cap,{}^{-1},\textit{di}$}\,,\pi)$}%
}}}}
\put(10801,-2401){\makebox(0,0)[b]{\smash{{\SetFigFont{25}{30.0}{\familydefault}{\mddefault}{\updefault}{\color[rgb]{0,0,0}$\BL(\,\framebox[1.3\width]{\hhh$\setminus,\pi$}\,,\cap,\overline\pi)\makebox[0pt][l]{$\mathstrut=\BL(\,\framebox[1.3\width]{\hhh$\setminus,\overline\pi$}\,,\cap,\pi)$}$}%
}}}}
\put(3601,-9601){\makebox(0,0)[b]{\smash{{\SetFigFont{25}{30.0}{\familydefault}{\mddefault}{\updefault}{\color[rgb]{0,0,0}$\BL(\,\framebox[1.3\width]{\hhh$\cap,{}^{-1}$}\,,\pi)$}%
}}}}
\put(3601,-6001){\makebox(0,0)[b]{\smash{{\SetFigFont{25}{30.0}{\familydefault}{\mddefault}{\updefault}{\color[rgb]{0,0,0}$\BL(\,\framebox[1.2\width]{\hhh$\cap,{}^{-1},\overline\pi$}\,,\pi)$}%
}}}}
\put(3601,-2401){\makebox(0,0)[b]{\smash{{\SetFigFont{25}{30.0}{\familydefault}{\mddefault}{\updefault}{\color[rgb]{0,0,0}$\BL(\,\framebox[1.3\width]{\hhh$\setminus,{}^{-1}$}\,,\cap,\pi,\overline\pi)$}%
}}}}
\put(10801,-9601){\makebox(0,0)[b]{\smash{{\SetFigFont{25}{30.0}{\familydefault}{\mddefault}{\updefault}{\color[rgb]{0,0,0}$\BL(\,\framebox[1.3\width]{\hhh$\cap,\pi$}\,)$}%
}}}}
\put(14401,-11401){\makebox(0,0)[b]{\smash{{\SetFigFont{25}{30.0}{\familydefault}{\mddefault}{\updefault}{\color[rgb]{0,0,0}$\BL(\,\framebox[1.7\width]{\hhh$\cap$}\,)$}%
}}}}
\put(14401,-4201){\makebox(0,0)[b]{\smash{{\SetFigFont{25}{30.0}{\familydefault}{\mddefault}{\updefault}{\color[rgb]{0,0,0}$\BL(\,\framebox[1.7\width]{\hhh$\setminus$}\,,\cap)$}%
}}}}
\put(7201,-7801){\makebox(0,0)[b]{\smash{{\SetFigFont{25}{30.0}{\familydefault}{\mddefault}{\updefault}{\color[rgb]{0,0,0}$\BL(\,\framebox[1.3\width]{\hhh$\cap,\textit{di}$}\,,\pi)$}%
}}}}
\put(  1,-601){\makebox(0,0)[b]{\smash{{\SetFigFont{25}{30.0}{\familydefault}{\mddefault}{\updefault}{\color[rgb]{0,0,0}$\BL(\,\framebox[1.2\width]{\hhh$\setminus,{}^{-1},\textit{di}$}\,,\cap,\pi,\overline\pi)$}%
}}}}
\put(10801,-6001){\makebox(0,0)[b]{\smash{{\SetFigFont{25}{30.0}{\familydefault}{\mddefault}{\updefault}{\color[rgb]{0,0,0}$\BL(\,\framebox[1.3\width]{\hhh$\cap,\overline\pi$}\,,\pi)$}%
}}}}
\put(7201,-601){\makebox(0,0)[b]{\smash{{\SetFigFont{25}{30.0}{\familydefault}{\mddefault}{\updefault}{\color[rgb]{0,0,0}$\BL(\framebox[1.3\width]{\hhh$\setminus,\textit{di}$}\,,\cap,\pi,\overline\pi)$}%
}}}}
\end{picture}%
      }

\end{center}
\caption{The Hasse diagram of $\ltpath$ and $\ltbool$ for $\clcap$.\label{fig:hasse-int}}
\end{figure*}

Propositions\nobreakspace \ref {th:path-if} and\nobreakspace  \ref {prop:path-int-Hasse} combined
yield the Hasse diagram of $\ltpath$ for $\clcap$, shown in
Figure\nobreakspace \ref {fig:hasse-int}.

Towards a proof of Proposition\nobreakspace \ref {prop:path-int-Hasse}, we first
establish the following.

\begin{proposition} \label{prop:path-int}
  Let $F_1$ and $F_2$ be sets of nonbasic features.
  \begin{enumerate}
  \item \label{int-diff} If $\setminus \in \fbar_1$ and $\setminus \not\in
    \fbar_2$, then $\BL(F_1) \nltstrong \BL(F_2)$.
  \item \label{int-pi} If $\pi \in \fbar_1$, and $F_2 \subseteq
    \{\setminus, \cap\}$, then $\BL(F_1) \nltstrong \BL(F_2)$.
    \end{enumerate}
\end{proposition}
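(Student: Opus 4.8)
The plan is to reduce each of the two items to a strong separation between a minimal pair of languages, and then to supply a single finite witness for each. The reduction rests on the observation that $\nltstrong$ is preserved along $\ltbool$ on both sides: if $\BL(A)\nltstrong\BL(B)$, $\BL(A)\ltbool\BL(A')$ and $\BL(B')\ltbool\BL(B)$, then $\BL(A')\nltstrong\BL(B')$, since the distinguishing boolean query and the witnessing pair of finite graphs transfer verbatim --- on the $\BL(B')$-side one uses that the nonemptiness of any $e'\in\BL(B')$ is, as a boolean query, expressible in $\BL(B)$ and hence agrees on the two witness graphs with some $e\in\BL(B)$. Combined with Proposition~\ref{th:path-if} and the rules defining $\fbar$ from $F$, this gives: for item~(\ref{int-diff}), $\setminus\in\fbar_1$ yields $\BL(\setminus)\ltpath\BL(F_1)$, while $\setminus\notin\fbar_2$ forces $\setminus\notin F_2$, whence $F_2\subseteq\overline{\{\cap,\cpi,\conv{},\div\}}$ and $\BL(F_2)\ltpath\BL(\cap,\cpi,\conv{},\div)$; thus it suffices to prove $\BL(\setminus)\nltstrong\BL(\cap,\cpi,\conv{},\div)$. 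For item~(\ref{int-pi}), $\pi\in\fbar_1$ yields $\BL(\pi)\ltpath\BL(F_1)$, while $F_2\subseteq\{\setminus,\cap\}=\overline{\{\setminus\}}$ yields $\BL(F_2)\ltpath\BL(\setminus)$; thus it suffices to prove $\BL(\pi)\nltstrong\BL(\setminus)$.

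For $\BL(\pi)\nltstrong\BL(\setminus)$ I would take a pair of graphs that look identical to every $\BL(\setminus)$-expression but differ in a coincidence of endpoints that a projection can detect. A concrete choice (over two labels; a one-label version using directed paths of two different lengths works as well) is $G_1=\{R(a,b),S(c,b)\}$ and $G_2=\{R(a,b),S(c,d)\}$. The boolean query ``$\pi_2(R)\circ\pi_2(S)\neq\emptyset$'' --- note $\pi_2(R)$ and $\pi_2(S)$ are sub-identities, so the composition is their intersection --- is true on $G_1$, false on $G_2$, and lies in $\BL(\pi)$. That $G_1$ and $G_2$ are indistinguishable in $\BL(\setminus)$ follows from an explicit correspondence: over either graph the relations definable in $\BL(\setminus)$ form the eight-element set $\{D\cup X : D\in\{\emptyset,\id\},\ X\subseteq\{\text{the two edges}\}\}$ (since, lacking $\conv{}$ and $\div$, an expression can neither turn an edge around nor reach beyond a path, so all compositions of edge-relations are empty and no proper sub-identity is definable), and the obvious bijection between these eight-element sets for $G_1$ and for $G_2$ commutes with $\cup,\cap,\setminus,\circ$ and preserves (non)emptiness; an induction on expressions then finishes. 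This settles item~(\ref{int-pi}). (Alternatively, both facts are immediate by the Brute-Force Algorithm of Section~\ref{sec:brute-force-approach}.)

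For $\BL(\setminus)\nltstrong\BL(\cap,\cpi,\conv{},\div)$ I would use a boolean query that employs general set difference essentially, i.e.\ one not reducible to a combination of $\cap$ and $\cpi$ --- candidates are non-transitivity ``$(R\circ R)\setminus R\neq\emptyset$'', or a restricted complement ``$E\setminus R\neq\emptyset$'' with $E$ not a sub-identity --- together with one of the finite graph pairs in Figure~\ref{fig:separations}: a $G_1$ on which the query is true and a $G_2$ on which it is false, chosen so that $G_1$ and $G_2$ are indistinguishable in $\BL(\cap,\cpi,\conv{},\div)$. Indistinguishability is checked with the Brute-Force Algorithm of Section~\ref{sec:brute-force-approach}, run directly on the (small) pair: the bisimulation-based shortcut of Section~\ref{sec:bisim-invar} is tailored to $\BL(\setminus,\div)$ and does not cover the present target, which contains $\conv{}$.

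I expect item~(\ref{int-diff}) to be the main obstacle. The target $\BL(\cap,\cpi,\conv{},\div)$ is rich and non-monotone, so no soft (monotonicity- or homomorphism-based) argument is available; moreover ``obvious'' small candidate pairs fail, because $\cpi$ together with $\cap$ already captures restricted complements along the diagonal (e.g.\ $\id\setminus e=\cpi_1(e\cap\id)$, so $\cpi_i(e)=\id\setminus\pi_i(e)$ is available) and $\conv{}$ lets one inspect in-neighbourhoods. What is needed is a genuinely ``locally saturated'' finite pair on which $\BL(\cap,\cpi,\conv{},\div)$ collapses, located and verified by machine; producing such a pair (and pinning down the precise $\setminus$-query it defeats) is the crux. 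The remaining combinatorial bookkeeping --- that items~(\ref{int-diff}) and~(\ref{int-pi}) between them account for every missing edge of the Hasse diagram in Figure~\ref{fig:hasse-int} --- is routine and belongs to the proof of Proposition~\ref{prop:path-int-Hasse} rather than here.
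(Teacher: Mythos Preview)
Your reduction via $\ltbool$-monotonicity of $\nltstrong$ is correct and matches the paper's implicit use. For item~(\ref{int-pi}) your two-label witness is fine in isolation, but the paper commits (see the Remark in Section~\ref{sec-prelim}) to single-label separations throughout; accordingly it uses the pair in Figure~\ref{fig:separations}(a) --- two directed paths of different lengths --- indistinguishable in $\BL(\setminus)$ because every expression there returns ``paths of the same set of lengths'' on both graphs, yet distinguished by $\pi_1(R^2)\circ R\circ\pi_2(R^2)$. You allude to exactly this alternative, so this is only a packaging difference.

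The genuine gap is item~(\ref{int-diff}). You correctly diagnose that monotonicity arguments are unavailable and that many small candidates fail, but you stop short of producing a witness pair and defer to a machine search. The paper does not need the Brute-Force Algorithm here: it takes $G_1$ a $3$-clique with a self-loop on every node, and $G_2$ a bow-tie of two such $3$-cliques sharing one node. The distinguishing query is your candidate $R^2\setminus R$ (nonempty on $G_2$, empty on $G_1$). Indistinguishability in $\BL(\cap,\cpi,\conv{},\div)$ is proved by a short structural invariant, not by exhaustion: one shows by induction that every \emph{nontrivial} expression $e\in\BL(\div,\conv{},\cap)$ satisfies, on each $G_i$, either $\id(G_i)\subseteq e(G_i)$ or $(R\setminus\id)(G_i)\subseteq e(G_i)$. (The self-loops and clique structure make this closed under $\circ$, $\cap$, $\cup$, $\conv{}$, and $\div$ is itself $R\setminus\id$.) Consequently every projection of a nontrivial expression returns all of $\id$ on both graphs, hence every coprojection is either $\emptyset$ on both or $\id$ on both, and the invariant extends through $\pi$ and $\cpi$. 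Thus no expression in $\BL(\div,\conv{},\cap,\pi,\cpi)$ is empty on one graph and nonempty on the other. Your intuition that a ``locally saturated'' pair is needed was right; the point you missed is that saturating with self-loops and full cliques forces a dichotomy (contain the diagonal or contain its complement) that trivializes $\cpi$ without any computer search.
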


\begin{proof}
  % 1) 3-clique vs bowtie consisting of two 3-cliques (R^2 -R)
  % 2) zigzag vs line length 2
  For (\ref{int-diff}), consider a 3-clique $G_1$, and a bow-tie $G_2$ consisting
  of two 3-cliques (both graphs contain a self-loop on every node). It can be proven by straightforward induction and case analysis that for any nontrivial expression $e \in \elang{\di,{}^{-1},\cap,{}^+}$ at least $\id(G_i) \subseteq e(G_i)$ or $R \setminus \id(G_i) \subseteq e(G_i)$. In either case it is clear that a projection of any nontrivial expression in $\elang{\di,{}^{-1},\cap,{}^+}$ evaluated on both graphs leads to all self-loops. Using this fact, it can be seen that a coprojection of any expression in $\elang{\di,{}^{-1},\cap,\cpi,\pi,{}^+}$ leads to either all self-loops, or a completely empty query result on both graphs simultaneously. Therefore, no expression in $\elang{\di,{}^{-1},\cap,\cpi,\pi,{}^+}$ can distinguish $G_1$ and $G_2$. 
  % It can be verified by the Brute-Force Algorithm of
  % Section\nobreakspace \ref {sec:brute-force-approach} that these graphs are not
  % distinguishable in $\BL(F_2)$. 
  The graphs, however, are
  distinguishable by the boolean query expressed by $R^2 \setminus
  R$. 

   For (\ref{int-pi}), consider the graphs displayed in Figure\nobreakspace \ref {fig:separations}~(a). Notice that expressions in $\BL$ select paths of the same length in both graphs simultaneously, e.g., if an expression selects all paths of length two in one graph, it also selects all the paths of length two in the other and vice versa. Therefore, expressions using set difference evaluate to empty or nonempty on both graphs simultaneously. Thus, expressions in $\elang{\setminus}$ cannot distinguish the considered graphs, whence they are indistinguishable in $\elang{F_2}$ as well since $\elang{F_2}\ltbool \elang{\setminus}$. The graphs, however, are
  distinguishable in $\elang{F_1}$ by the boolean query expressed by $\pi_1(R^2) \circ
  R \circ \pi_2(R^2)$.
\end{proof}

Propositions\nobreakspace \ref {prop:path-bottom} and\nobreakspace  \ref {prop:path-int} are now
used to show that for every pair $F_1$ and $F_2$ of sets of nonbasic
features for which $F_1 \not\subseteq \fbar_2$ (i.e., for which there
is no path in Figure\nobreakspace \ref {fig:hasse-int}), that $\BL(F_1) \not\ltpath
\BL(F_2)$.

The remainder of the proof of Proposition\nobreakspace \ref {prop:path-int-Hasse} is
a combinatorial analysis to verify that
Propositions\nobreakspace \ref {prop:path-bottom} and\nobreakspace  \ref {prop:path-int} cover all
relevant cases.
\begin{proof}[Proof of Proposition\nobreakspace \ref {prop:path-int-Hasse}]
By definition $\cap \in \alang{F_1}$ and $\cap \in \alang{F_2}$ since both $\elang{F_1}$ and $\elang{F_2}$ are in $\mathcal{C}[\cap]$. Hence, $F_1 \not\subseteq \alang{F_2}$ if and only if there exists $x \in \{ \pi,\cpi,\di,{}^{-1},\setminus\}$ such that $x \in F_1$ and $x \not\in \alang{F_2}$. We will consider every such $x$ and show that our result directly follows from Propositions~\ref{prop:path-bottom} or \ref{prop:path-int}.

If $x = \di$, $x = \conv{}$ or $x = \setminus$, then respectively  Proposition~\ref{prop:path-bottom}(\ref{bottom-di}), \ref{prop:path-bottom}(\ref{bottom-conv}) or \ref{prop:path-int}(\ref{int-diff}) gives us the desired result. 

If $x = \pi$, then clearly $\pi \not \in \alang{F_2}$ if and only if $F_2 \cap \{\di,{}^{-1},\cpi,\pi\} = \emptyset$. Hence $F_2 \subseteq \{\cap,\setminus\}$. Now, we can apply Proposition\nobreakspace \ref {prop:path-int}(\ref{int-pi}), which proves the result.

If $x = \cpi$, then using the interdependencies introduced in the beginning of Section\nobreakspace \ref {sec:path-queries} we get 
 \[\cpi \not \in \alang{F_2} \iff \setminus \not\in F_2\lor (\setminus \in F_2 \land \pi \not\in \alang{F_2}).\]
      So we have two scenarios. If $\setminus \not\in F_2$ then we can apply Proposition\nobreakspace \ref {prop:path-bottom}(\ref{bottom-cpi}) to prove our result. On the other hand, when $\setminus \in F_2$ we cannot apply Proposition\nobreakspace \ref {prop:path-bottom}(\ref{bottom-cpi}). As said above, now $\pi$ cannot be in $\alang{F_2}$. Furthermore, note that in this scenario
      \begin{align*}
        \setminus \in F_2\land \pi \not \in \alang{F_2} \iff F_2 \cap \{\conv{},\di\} = \emptyset
      \end{align*}
        which implies that $F_2 \subseteq \{\cap, \setminus\}$. Moreover, $\pi \in \alang{F_1}$ since $\cpi \in \alang{F_1}$.  Hence, we can apply proposition\nobreakspace \ref {prop:path-int}(\ref{int-pi}), which proves the result.
\end{proof}
% First, note that there are 123 ordered pairs of
% distinct languages $\BL(F_1)$ and $\BL(F_2)$ in $\clcap$ for which
% $F_1 \not\subseteq \fbar_2$. As illustrated above, we subsequently use
% the statements in Propositions\nobreakspace \ref {prop:path-bottom} and\nobreakspace  \ref {prop:path-int} to establish that, for each of these pairs,
% $\BL(F_1) \not\ltpath \BL(F_2)$. Concretely,
% Proposition\nobreakspace \ref {prop:path-bottom}~(\ref{bottom-di}) deals with 48
% pairs, Proposition\nobreakspace \ref {prop:path-bottom}~(\ref{bottom-cpi}) deals
% with another 28 pairs,
% Proposition\nobreakspace \ref {prop:path-bottom}~(\ref{bottom-conv}) deals with
% another 25 pairs, Proposition\nobreakspace \ref {prop:path-int}~(\ref{int-diff})
% deals with another 18 pairs, and, finally,
% Proposition\nobreakspace \ref {prop:path-int}~(\ref{int-pi}) deals with the
% remaining 4 pairs.

\subsection{Cross-relationships between  subdiagrams}
\label{sec:path-separ-betw-subl}
To finish the proof of Theorem\nobreakspace \ref {th:path-inclusion}, we finally
show the ``only if'' direction for the case where $\BL(F_1)$ and
$\BL(F_2)$ belong to different classes.

\begin{proposition} \label{prop:path-cross-Hasse} Let $\BL(F_1)$ and
  $\BL(F_2)$ be languages such that one language belongs to $\clbot$,
  and the other language belongs to $\clcap$. If $F_1 \not \subseteq
  \fbar_2$, then $\BL(F_1) \not \ltpath \BL(F_2)$.
\end{proposition}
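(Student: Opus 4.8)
The plan is to reduce Proposition~\ref{prop:path-cross-Hasse} to the separations already obtained in Propositions~\ref{prop:path-bottom} and~\ref{prop:path-int}, by a case analysis on which of the two classes each language belongs to, together with exactly one genuinely new separation that I will isolate. Write Case~A for the situation $\BL(F_1)\in\clbot$, $\BL(F_2)\in\clcap$, and Case~B for $\BL(F_1)\in\clcap$, $\BL(F_2)\in\clbot$. In Case~B we have $\cap\in\fbar_1$ while $\cap\notin\fbar_2$, so the hypothesis $F_1\not\subseteq\fbar_2$ is automatically satisfied; in Case~A, since $\cap\notin\fbar_1$ we have $F_1\subseteq\{\di,\conv{},\pi,\cpi\}$, and $F_1\not\subseteq\fbar_2$ yields a witnessing feature $x\in F_1\setminus\fbar_2$ with $x\in\{\di,\conv{},\pi,\cpi\}$. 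Since $\BL(F_1)\ltpath\BL(F_2)$ implies $\BL(F_1)\ltbool\BL(F_2)$, it suffices throughout to establish $\BL(F_1)\not\ltpath\BL(F_2)$, and in many cases the propositions cited even give the stronger $\nltstrong$ or $\nltstrongpath$.

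For Case~A I would pick the witness $x$ greedily. If $\di\in F_1\setminus\fbar_2$ or $\conv{}\in F_1\setminus\fbar_2$, Proposition~\ref{prop:path-bottom}(\ref{bottom-di}) resp.~(\ref{bottom-conv}) applies directly, with no condition on $F_2$. If $\pi\in F_1\setminus\fbar_2$, then from $\cap\in\fbar_2$ and $\pi\notin\fbar_2$ the closure rules force $\di,\conv{}\notin\fbar_2$ and $\pi,\cpi\notin F_2$, hence $F_2\subseteq\{\cap,\setminus\}$, so Proposition~\ref{prop:path-int}(\ref{int-pi}) applies (using $\pi\in F_1\subseteq\fbar_1$). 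Finally, if $\cpi\in F_1\setminus\fbar_2$: when $\setminus\notin\fbar_2$, Proposition~\ref{prop:path-bottom}(\ref{bottom-cpi}) finishes; when $\setminus\in\fbar_2$, then $\cpi\notin\fbar_2$ together with $\setminus\in\fbar_2$ forces $\pi\notin\fbar_2$, which with $\cap\in\fbar_2$ again gives $F_2\subseteq\{\cap,\setminus\}$, and since $\cpi\in F_1$ we have $\pi\in\fbar_1$ by the first closure rule, so Proposition~\ref{prop:path-int}(\ref{int-pi}) applies once more. Thus Case~A is entirely covered by the earlier propositions; the only bookkeeping is to check that the closure rules pin down $F_2$ as claimed.

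For Case~B, if $\setminus\in F_1$ then $\setminus\in\fbar_1$, and $\cap\notin\fbar_2$ gives $\setminus\notin\fbar_2$, so Proposition~\ref{prop:path-int}(\ref{int-diff}) applies. Otherwise $\cap\in F_1$ and $\setminus\notin F_1$, so $\BL(\cap)\ltpath\BL(F_1)$; moreover $F_2\subseteq\{\di,\conv{},\pi,\cpi\}$ gives $\BL(F_2)\ltpath\BL(\di,\conv{},\pi,\cpi)$. Hence it suffices to prove the single separation $\BL(\cap)\not\ltpath\BL(\di,\conv{},\pi,\cpi)$: if this holds and $\BL(F_1)\ltpath\BL(F_2)$, we would get $\BL(\cap)\ltpath\BL(\di,\conv{},\pi,\cpi)$, a contradiction. (Note that the instance $F_1=\{\cap\}$, $F_2=\{\conv{},\di,\pi,\cpi\}$ genuinely arises, so this reduction is unavoidable.)

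The main obstacle is exactly this last separation, because intersection is the feature absent from the target and it detects ``convergence of two paths'', which is fairly robust under the bisimulations that control $\BL(\di,\conv{},\pi,\cpi)$. The conceptually cleanest route is to invoke the known characterization of $\BL(\di,\conv{},\pi,\cpi)$ as the class of first-order queries safe for bisimulation \cite{benthem_safe,marxrijke_xpath}, together with the classical fact that intersection is not a safe operation, so that a suitable intersection query is first-order but not safe and hence not in $\BL(\di,\conv{},\pi,\cpi)$. Alternatively, since any expression of $\BL(\cap)$ is monotone, one can aim for a query whose associated conjunctive query has a non-chain core (e.g.\ the path query $R\cap(R\circ R)$, whose body is a ``triangle''), combine this with the fact that the monotone fragment of $\BL(\di,\conv{},\pi,\cpi)$ is generated without coprojection, and rule it out by a homomorphism/core argument in the style of Section~\ref{sec:zigzag}; or, using $\BL(\di,\conv{},\pi,\cpi)\ltpath\BL(\setminus,\di,\conv{})$ and that $\BL(\setminus,\di,\conv{})$ is $\BL(\setminus,\di)$ over the vocabulary extended with a fresh label $\conv{R}$ for each $R$, one can try the bisimulation tools of Propositions~\ref{theo-diff-bisimilar} and~\ref{prop-bisimilar-noparameter} over the doubled vocabulary. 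The delicate point in all of these—and the part I expect to cost real work—is to produce an explicit finite witness over a single label (so that the separation holds already for $|\lab|=1$, as announced), since the naive candidates (directed cycles of different lengths, short path-unravelings) either fail to be bisimilar once the degree of the intersection expression is large enough to ``see'' the discrepancy, or are distinguished by the atomic type of the very pair the intersection isolates.
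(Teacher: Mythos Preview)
Your case analysis is sound, and for Case~A ($\BL(F_1)\in\clbot$, $\BL(F_2)\in\clcap$) it matches the paper's argument exactly. For Case~B you split on whether $\setminus\in F_1$; the paper instead dispatches all of Case~B at once via a single new separation (Proposition~\ref{prop:path-cross-comp}): whenever $\cap\in\fbar_1$ and $\cap\notin\fbar_2$, one has $\BL(F_1)\nltstrong\BL(F_2)$. Your reduction to $\BL(\cap)\not\ltpath\BL(\di,\conv{},\pi,\cpi)$ is correct, but it is precisely this proposition in its minimal instance, so the substantive content is the same either way; your use of Proposition~\ref{prop:path-int}(\ref{int-diff}) in the $\setminus\in F_1$ subcase is valid but redundant.

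The genuine gap is that you have not actually established this separation. Your three proposed routes are all plausible in outline, yet none is carried out, and you explicitly flag the finite single-label witness as ``the part I expect to cost real work''. The paper's solution is much more elementary than any of your approaches: take the pair of graphs in Figure~\ref{fig:separations}(b) and the boolean query $R^2\cap\id\neq\emptyset$. The decisive structural fact is that in both graphs, every node has incoming and outgoing $R$-paths of every length. Hence for every nontrivial expression $e$ in $\BL(\di,\conv{})$ one gets $\pi_i(e)(G_j)=\id(G_j)$; projections are always full, so coprojections collapse to $\id$ or $\emptyset$ on both graphs simultaneously, and indistinguishability in $\BL(\di,\conv{},\pi,\cpi)$ follows by an easy induction. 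The graphs are distinguished by $R^2\cap\id$ because only one of them contains a $2$-cycle. This sidesteps the bisimulation-safety characterization (whose exact scope you would have to pin down), your unproven claim that the monotone fragment of $\BL(\di,\conv{},\pi,\cpi)$ is coprojection-free, and the doubled-vocabulary bisimulation machinery.
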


Towards a proof of Proposition\nobreakspace \ref {prop:path-cross-Hasse}, we first
establish the following.

\begin{proposition} \label{prop:path-cross-comp} Let $F_1$ and $F_2$
  be sets of nonbasic features.  If $\cap \in \fbar_1$ and $\cap
  \not\in \fbar_2$, then $\BL(F_1) \nltstrong \BL(F_2)$.
\end{proposition}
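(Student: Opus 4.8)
The plan is to reduce the statement to a comparison between $\BL(\cap)$ and $\BL(\di,\conv{},\pi,\cpi)$, exhibit a tiny concrete pair of witness graphs, and verify indistinguishability by a structural induction on expressions evaluated on directed cycles.

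First I would unwind the two hypotheses using the closure rules and Proposition~\ref{th:path-if}. Since the only rule that can place $\cap$ into $\fbar$ is the rule ``$\setminus \in \fbar \Rightarrow \cap \in \fbar$'', the assumption $\cap \notin \fbar_2$ forces $\setminus \notin \fbar_2$, and hence $F_2 \subseteq \fbar_2 \subseteq \{\di,\conv{},\pi,\cpi\}$. The set $\{\di,\conv{},\pi,\cpi\}$ is already closed under all the derivation rules (each rule with $\cap$ or $\setminus$ in its premise is vacuous, and $\cpi\in\{\di,\conv{},\pi,\cpi\}$ already brings in $\pi$), so $\overline{\{\di,\conv{},\pi,\cpi\}}=\{\di,\conv{},\pi,\cpi\}$ and Proposition~\ref{th:path-if} gives $\BL(F_2)\ltpath\BL(\di,\conv{},\pi,\cpi)$, hence $\BL(F_2)\ltbool\BL(\di,\conv{},\pi,\cpi)$. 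Dually, $\cap\in\fbar_1$ means $\{\cap\}\subseteq\fbar_1$, so by Proposition~\ref{th:path-if} every boolean query expressible in $\BL(\cap)$ is expressible in $\BL(F_1)$. Consequently it suffices to produce two finite graphs $G_1,G_2$ and a boolean query $q\in\BL(\cap)$ with $q(G_1)$ true and $q(G_2)$ false, such that $G_1$ and $G_2$ are indistinguishable in $\BL(\di,\conv{},\pi,\cpi)$; this yields $\BL(F_1)\nltstrong\BL(F_2)$ directly from Definition~\ref{def:strong}.

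For the witnesses I would take $G_1$ to be the directed $3$-cycle $C_3$ and $G_2$ the directed $2$-cycle $C_2$ (over a single label $R$; any additional labels can be taken empty, which does not affect the argument, as negative results already hold for a one-label vocabulary). Let $q$ assert that $(R\circ R\circ R)\cap\id$ is nonempty; this is visibly in $\BL(\cap)$. Then $q(C_3)$ is true because on the $3$-cycle $R^3=\id$, while $q(C_2)$ is false because on the $2$-cycle $R^3=R$, which is disjoint from $\id$. The core of the proof is then the claim that no boolean query in $\BL(\di,\conv{},\pi,\cpi)$ separates $C_3$ from $C_2$, which follows from this lemma, proved by a routine structural induction: for every $n\ge 2$ and every $e\in\BL(\di,\conv{},\pi,\cpi)$, the relation $e(C_n)$ equals $\bigcup_{i\in S}R^i$ for some $S\subseteq\mathbb Z/n\mathbb Z$ (reading $R^0=\id$, $\di=\bigcup_{1\le i\le n-1}R^i$, $\conv R=R^{n-1}$), and $S=\emptyset$ holds if and only if $e(C_m)=\emptyset$ for every $m\ge2$. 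The class $\{\bigcup_{i\in S}R^i\}$ is closed under the operators: composition acts on index sets by $S,T\mapsto S+T$, converse by $S\mapsto -S$, union by $S,T\mapsto S\cup T$; and since every power $R^i$ is a bijection on $C_n$, any nonempty such union is a total relation, so $\pi_j$ of it is $\id=R^0$ and $\cpi_j$ of it is $\emptyset$, whereas $\pi_j(\emptyset)=\emptyset$ and $\cpi_j(\emptyset)=\id$. Reading off emptiness, the recursion ($e_1\circ e_2$ empty iff $e_1$ or $e_2$ empty; $e_1\cup e_2$ empty iff both; $\conv e$ empty iff $e$; $\pi_j(e)$ empty iff $e$; $\cpi_j(e)$ empty iff $e$ nonempty; base cases $R,\id,\di$ nonempty for $n\ge2$ and $\emptyset$ empty) makes no mention of $n$, so emptiness on $C_n$ is $n$-independent for $n\ge 2$. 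Hence $C_3$ and $C_2$ are indistinguishable in $\BL(\di,\conv{},\pi,\cpi)$. (This indistinguishability could alternatively be confirmed by the Brute-Force Algorithm of Section~\ref{sec:brute-force-approach}, since both graphs are very small.)

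The only genuinely delicate point is the lemma, and even there the argument is light: the crucial facts are that a one-generated cycle is vertex-transitive, so $\di$ contributes nothing new beyond powers of $R$, and that every $R^i$ is a bijection, which collapses projection and coprojection to their two trivial cases. Everything else is bookkeeping, and the reduction in the first two paragraphs is purely formal once the closure rules are applied.
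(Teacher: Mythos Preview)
Your proof is correct and follows essentially the same strategy as the paper: exhibit two finite graphs distinguished by an $\cap$-expression but indistinguishable in $\BL(\di,\conv{},\pi,\cpi)$, using the observation that on such graphs every nontrivial subexpression has full projections, so $\pi_j$ and $\cpi_j$ collapse to $\id$ or $\emptyset$ uniformly. The only difference is cosmetic: the paper uses the graphs of Figure~\ref{fig:separations}(b) together with $R^2\cap\id$, whereas you use the directed cycles $C_2$ and $C_3$ with $R^3\cap\id$, and you spell out the structural induction (every expression on $C_n$ evaluates to a union of powers of $R$, with $n$-independent emptiness) more explicitly than the paper's sketch.
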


\begin{proof}
Since $\cap \not \in \alang{F_2}$ it must be that $F_2 \subseteq \{\di,{}^{-1},\pi,\cpi,{}^+\}$. So, it is sufficient to find a boolean query expressible in $\elang{F_1}$, which is not expressible in $\elang{\di,{}^{-1},\cpi,{}^+}$. Consider the graphs $G_1$ and $G_2$ in Figure\nobreakspace \ref {fig:separations}~(b). Notice that there starts and ends a path of every length in each node in both graphs. Utilizing this fact, it can be shown that for any nontrivial expression $e \in \elang{\di,{}^{-1},{}^+}$, it must be that $\pi_i(e)(G_j) = \id(G_j)$. Using this, it can be seen that the coprojection of any expression in $\elang{\di,{}^{-1},\cpi,{}^+}$ leads to either all self-loops, or a completely empty query result on both graphs simultaneously. Therefore, no expression in $\elang{\di,{}^{-1},\cpi,{}^+}$ can distinguish $G_1$ and $G_2$. The graphs, however, are distinguishable by the boolean
  query expressed by $R^2 \cap \id$. 
\end{proof}

As detailed below, Propositions\nobreakspace \ref {prop:path-bottom},  \ref {prop:path-int} and\nobreakspace  \ref {prop:path-cross-comp} are now subsequently
used to show that for every pair $F_1$ and $F_2$ of sets of nonbasic
features for which $F_1 \not\subseteq \fbar_2$, that $\BL(F_1)
\not\ltpath \BL(F_2)$, in the same way as in
Section\nobreakspace \ref {sec:languages-bottom} and \ref{sec:languages-cap}.

The remainder of the proof of Proposition\nobreakspace \ref {prop:path-cross-Hasse}
is again a combinatorial analysis to verify that the above-mentioned
propositions cover all relevant cases.  

\begin{proof}[Proof of Proposition\nobreakspace \ref {prop:path-cross-Hasse}]
  First, suppose that $\elang{F_1} \in \clcap$ and $\elang{F_2} \in \clbot$. Then, by definition $\cap \in \alang{F_1}$ and $\cap \not \in \alang{F_2}$. The result now follows directly from Proposition\nobreakspace \ref {prop:path-cross-comp}.

  On the other hand, suppose that $\elang{F_1}$ is in $\clbot$ and $\elang{F_2}$ is in $\clcap$. 
  Clearly, then $F_1 \nsubseteq \alang{F_2}$ if and only if 
  $F_1 \nsubseteq \alang{F_2} \setminus \{\cap,\setminus\}$. 
  Hence at least one feature $x$ of $\di,\pi,\cpi,{}^{-1}$ is present in $F_1$ but 
  missing in $\alang{F_2}$. We will consider every such $x$ and show that 
  our result directly follows  from Propositions~\ref{prop:path-bottom}, or \ref{prop:path-int}.

  If $x = \di$ or $x = \conv{}$, then respectively Proposition~\ref{prop:path-bottom}(\ref{bottom-di})
  or \ref{prop:path-bottom}(\ref{bottom-conv}) gives us the desired result.

  If $x = \pi$ then $\cpi \not \in F_2$ by the interdependencies introduced 
  in the beginning of Section\nobreakspace \ref {sec:path-queries}. Furthermore, 
  $\alang{F_2}\cap \{\conv{},\di\}= \emptyset$ since by 
  hypothesis $\cap \in \alang{F_2}$. Therefore $F_2\subseteq \{\setminus,\cap\}$, 
  and hence Proposition\nobreakspace \ref {prop:path-int}(\ref{int-pi}) can be applied, which proves the result. 

  If $x = \cpi$ then $\alang{F_2} \cap \{\setminus,\pi\} \neq \{\setminus,\pi\}$. 
  Suppose that $\setminus \not \in \alang{F_2}$, then our result follows from 
  Proposition\nobreakspace \ref {prop:path-bottom}(\ref{bottom-cpi}). On the other hand, if
  $\pi\not\in \alang{F_2}$, then the result follows from the previous 
  case since $\pi \in \alang{F_1}$.
\end{proof}

% First, note that there are 262
% ordered pairs of languages $\BL(F_1)$ and $\BL(F_2)$ such that one
% language belongs to $\clbot$, and the other language belongs to
% $\clcap$, and such that $F_1 \not \subseteq \fbar_2$. Concretely,
% Proposition\nobreakspace \ref {prop:path-cross-comp} deals with 168 of these pairs,
% Proposition\nobreakspace \ref {prop:path-bottom} (\ref{bottom-di}) deals with
% another 48 pairs, Proposition\nobreakspace \ref {prop:path-bottom}
% (\ref{bottom-cpi}) deals with another 14 pairs,
% Proposition\nobreakspace \ref {prop:path-bottom} (\ref{bottom-conv}) deals with
% another 29 pairs, and, finally, Proposition\nobreakspace \ref {prop:path-int}
% (\ref{int-pi}) deals with the 3 remaining pairs.

Propositions\nobreakspace \ref {th:path-if},  \ref {prop:path-bottom-Hasse},  \ref {prop:path-int-Hasse} and\nobreakspace  \ref {prop:path-cross-Hasse}, together
prove Theorem\nobreakspace \ref {th:path-inclusion}.

Hence, the Hasse diagram of $\ltpath$ can be obtained from the
subdiagrams for $\clbot$ and $\clcap$ by simply adding the 12 canonical inclusion arrows between the subdiagram for $\clbot$ and the
subdiagram for $\clcap$.  However, in the presence of $\cap$, $\di$ or $\conv{}$ gives $\pi$, so the arrows from $\elang{\di}$ to $\elang{\cap,\di,\pi}$, $\elang{\conv{}}$ to $\elang{\cap,\conv{},\pi}$, and $\elang{\conv{},\di}$ to $\elang{\cap,\conv{},\di,\pi}$ are transitive, and can therefore be omitted.

So, all paths between the subdiagrams are
induced by these canonical inclusion arrows and the 5 equations from
the beginning of Section\nobreakspace \ref {sec:path-queries}.

\section{Boolean queries}
\label{sec:boolean-queries}
In this section, we characterize the order $\ltbool$ of relative
expressiveness for boolean queries by
Theorem\nobreakspace \ref {th:bool-inclusion} below.

Towards the statement of this characterization, first observe that
$\BL(F_1) \ltpath \BL(F_2)$ implies $\BL(F_1) \ltbool \BL(F_2)$. The
converse does not hold, however. Indeed, from
Proposition\nobreakspace \ref {prop:converse-elimination}, it follows that,
e.g., $\BL(\conv{}) \ltbool \BL(\pi)$. From
Theorem\nobreakspace \ref {th:path-inclusion}, however, we know that $\BL(\conv{})
\not\ltpath \BL(\pi)$.

% Note that Marx\cite{Marx05} also obtained a similar result for XPath
% with transitive closure, but we will show below that this no longer
% holds on graphs (see \Cref{prop:conv-elim-fails}).

To accommodate the collapse of $\conv{}$ in our characterization of
$\ltbool$, we introduce some new notation. For a set of nonbasic
features $F$, define $\zlang{F}$ as follows.
\[\zlang{F} = 
\begin{cases} (F\setminus \{\conv{}\})\cup \{\pi\}, & \text{if ${}^{-1} \in \alang{F},\cap \not\in \alang{F}$} \\
              F, & \text{otherwise}
\end{cases}
\]
For example, $\zlang{\{\di, \conv{}\}} = \{ \di,\pi\}$.

We will establish the following characterization.
\begin{theorem} \label{th:bool-inclusion} 
  Let $F_1$ and $F_2$ be sets of nonbasic features. Then, $\elang{F_1} \boolq \elang{F_2}$ if and only if $F_1 \subseteq \alang{F_2}$ or $\zlang{F_1}\subseteq \alang{F_2}$ 
\end{theorem}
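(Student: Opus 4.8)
The plan is to derive Theorem~\ref{th:bool-inclusion} by combining the path-query characterization (Theorem~\ref{th:path-inclusion}) with the converse-collapse result (Proposition~\ref{prop:converse-elimination}) for the positive direction, and with the four technical separation propositions of this section together with the zigzag result (Proposition~\ref{bottom-pi-tech}) for the negative direction. First I would dispose of the ``if'' direction. If $F_1 \subseteq \fbar_2$, then $\BL(F_1) \ltpath \BL(F_2)$ by Proposition~\ref{th:path-if}, hence $\BL(F_1) \ltbool \BL(F_2)$. If instead $\zlang{F_1} \subseteq \fbar_2$ but $F_1 \not\subseteq \fbar_2$, then by definition of $\zlang{\cdot}$ we must be in the case $\conv{} \in F_1$, $\cap \notin \fbar_1$, and $\zlang{F_1} = (F_1 \setminus \{\conv{}\}) \cup \{\pi\}$. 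Write $F' = F_1 \setminus \{\conv{}\}$, so $F' \subseteq \zlang{F_1} \subseteq \fbar_2$ and hence $\BL(F') \ltpath \BL(F_2)$. It remains to show $\BL(F' \cup \{\conv{}\}) \ltbool \BL(F_2)$. Since $\cap \notin \fbar_1 \supseteq \fbar'$ we have $\setminus \notin \fbar'$ and $\cap \notin \fbar'$, so Proposition~\ref{prop:converse-elimination} applies to $F'$ and gives $\BL(F' \cup \{\conv{}\}) \ltbool \BL(F' \cup \{\pi\}) = \BL(\zlang{F_1})$; finally $\BL(\zlang{F_1}) \ltpath \BL(F_2)$ since $\zlang{F_1} \subseteq \fbar_2$, and $\ltpath$ refines $\ltbool$. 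Chaining these gives $\BL(F_1) \ltbool \BL(F_2)$.

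For the ``only if'' direction, assume $\elang{F_1} \boolq \elang{F_2}$ and suppose for contradiction that both $F_1 \not\subseteq \fbar_2$ and $\zlang{F_1} \not\subseteq \fbar_2$. The strategy is to exhibit, in each combinatorial case, a boolean query expressible in $\elang{F_1}$ but not in $\elang{F_2}$, using the $\nltstrong$ (strong boolean separation) claims already available: Proposition~\ref{prop:path-bottom}(\ref{bottom-di}),(\ref{bottom-cpi}), Proposition~\ref{prop:path-int}(\ref{int-diff}),(\ref{int-pi}), Proposition~\ref{prop:path-cross-comp}, plus the specifically boolean-level results Proposition~\ref{prop:path-bottom}(\ref{bottom-pi}) (equivalently Proposition~\ref{bottom-pi-tech}) and, for the collapse boundary, again Proposition~\ref{prop:converse-elimination} read contrapositively. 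Note that the $\conv{}$-separation Proposition~\ref{prop:path-bottom}(\ref{bottom-conv}) is only a \emph{path}-level separation, so it is exactly the place where $\conv{}$ may collapse at the boolean level; this is why $\zlang{\cdot}$ rewrites $\conv{}$ to $\pi$ rather than deleting it, and why we must check that the hypothesis ``$\zlang{F_1} \not\subseteq \fbar_2$'' really does leave us with a genuine boolean obstruction. Concretely I would first argue that we may replace $F_1$ by $\zlang{F_1}$ throughout: since $\elang{F_1} \boolq \elang{\zlang{F_1}}$ always holds (by the ``if'' direction just proved, with $F_2 := \zlang{F_1}$, using $\zlang{\zlang{F_1}} = \zlang{F_1}$), and since $\zlang{F_1} \not\subseteq \fbar_2$ by assumption, it suffices to separate $\elang{\zlang{F_1}}$ from $\elang{F_2}$. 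After this reduction $\zlang{F_1}$ contains $\conv{}$ only when it also contains $\cap$ (the case where $\zlang{\cdot}$ acts trivially), so we no longer face the lone-$\conv{}$ obstruction.

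Having reduced to $G_1 := \zlang{F_1} \not\subseteq \fbar_2$, I would run the same case analysis used in the proofs of Propositions~\ref{prop:path-bottom-Hasse}, \ref{prop:path-int-Hasse}, and \ref{prop:path-cross-Hasse}: pick a witnessing feature $x \in G_1 \setminus \fbar_2$, and split on $x \in \{\di, \setminus, \pi, \cpi, \conv{}\}$ and on which of $\clbot$, $\clcap$ the two languages lie in. For $x \in \{\di, \setminus, \cpi\}$ and for $x = \cap$-type obstructions, the cited $\nltstrong$ propositions give strong boolean separation directly, exactly as in the path-query proofs; for $x = \pi$ one uses Proposition~\ref{prop:path-int}(\ref{int-pi}) when $\cap \in \fbar_2$ and Proposition~\ref{bottom-pi-tech} (via Proposition~\ref{prop:path-bottom}(\ref{bottom-pi})) when $F_2 \subseteq \{\conv{},\di\}$. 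The one case requiring real care is $x = \conv{}$: here $\conv{} \in G_1$ forces $\cap \in \fbar(G_1)$ (by the reduction above), and one must check that the hypothesis $\conv{} \notin \fbar_2$ combined with whatever else sits in $\fbar_2$ still yields a boolean-level separation rather than a mere path-level one. I expect \textbf{this $\conv{}$-with-$\cap$ case to be the main obstacle}, and it is presumably where the paper invokes the bisimulation machinery of Section~\ref{sec:bisim-invar} to establish $\BL(\conv{},\cap) \nltbool \BL(\setminus,\div)$ (announced in the introduction); the general $x=\conv{}$ case would then be reduced to this one by monotonicity of $\ltbool$ in both arguments, since $\elang{\conv{},\cap}$ embeds in $\elang{G_1}$ and $\elang{F_2}$ embeds in $\elang{\setminus,\di}$ whenever $\conv{}\notin\fbar_2$. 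A final bookkeeping step checks that these cases are exhaustive — that $G_1 \not\subseteq \fbar_2$ always produces a witnessing $x$ of one of the listed forms — which is the same finite combinatorial verification already carried out for the path-query diagrams, now with the $\zlang{\cdot}$-normalized $G_1$ in place of $F_1$.
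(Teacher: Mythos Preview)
Your ``if'' direction matches the paper exactly, and your overall architecture for the ``only if'' direction---split into $\clbot$/$\clcap$/cross cases, then pick a witnessing feature $x$ and invoke the appropriate strong-separation proposition---is also the paper's plan. You also correctly anticipate that the $\conv{}$-with-$\cap$ case is where bisimulation (the paper's Proposition~\ref{prop:bool-converse-cap}) is needed.

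However, your reduction ``replace $F_1$ by $\zlang{F_1}$'' is backwards and creates a genuine gap. You justify the reduction by $\elang{F_1} \boolq \elang{\zlang{F_1}}$, but to infer $\elang{F_1} \not\boolq \elang{F_2}$ from $\elang{\zlang{F_1}} \not\boolq \elang{F_2}$ you would need the \emph{opposite} inclusion $\elang{\zlang{F_1}} \boolq \elang{F_1}$. That fails: take $F_1 = \{\conv{}\}$, so $\zlang{F_1} = \{\pi\}$, and note that $\elang{\pi} \not\boolq \elang{\conv{}}$ (indeed $\elang{\pi} \not\boolq \elang{\conv{},\di}$ is precisely Proposition~\ref{bottom-pi-tech}). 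Concretely, with $F_1 = \{\conv{}\}$ and $F_2 = \{\di\}$ both hypotheses $F_1 \not\subseteq \alang{F_2}$ and $\zlang{F_1} \not\subseteq \alang{F_2}$ hold, your plan would establish $\elang{\pi} \not\boolq \elang{\di}$ via the ZZZ query, but the ZZZ query is \emph{not} expressible in $\elang{\conv{}}$, so nothing follows about $\elang{\conv{}}$ versus $\elang{\di}$.

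The paper does not attempt this reduction. Instead it keeps $F_1$ as is and supplies two additional boolean-level separations that you omit, covering exactly the $\conv{}$-without-$\cap$ cases your reduction was meant to sidestep: Proposition~\ref{prop:conv-di} ($\conv{} \in \fbar_1$ implies $\elang{F_1} \nltstrong \elang{\di}$, proved by an elementary analysis of $\elang{\di}$ on acyclic graphs) and Proposition~\ref{prop:conv-cross} ($\conv{} \in \fbar_1$ and $F_2 \subseteq \{\setminus,\cap\}$ implies $\elang{F_1} \nltstrong \elang{F_2}$, using the graphs of Figure~\ref{fig:separations}(a)). With these two results in hand, the combinatorial case analysis goes through without any normalization of $F_1$.
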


The ``if'' direction of Theorem\nobreakspace \ref {th:bool-inclusion} is shown by 
Proposition\nobreakspace \ref {th:path-if} (since $\ltpath$ implies $\ltbool$) and Proposition\nobreakspace  \ref {th:bool-if-cross}.

\begin{proposition}\label{th:bool-if-cross}
If $\zlang{F_1}\subseteq \alang{F_2}$ then $\elang{F_1}\ltbool\elang{F_2}$.
\begin{proof}
  We distinguish two cases. If $F_1\subseteq \alang{F_2}$, 
  then $\elang{F_1}\ltpath \elang{F_2}$, by Proposition\nobreakspace \ref {th:path-if}, 
  whence $\elang{F_1}\ltbool\elang{F_2}$. 

  In the other case, ${}^{-1} \in \alang{F_1}$, and $\cap \not\in \alang{F_1}$. 
  Hence $\elang{F_1} \ltbool \elang{F_1 \setminus \{\conv{}\} \cup \{\pi\}} = \elang{\zlang{F_1}}$ 
  by Proposition\nobreakspace \ref {prop:converse-elimination}. Furthermore, $\elang{\zlang{F_1}}\ltpath \elang{F_2}$ 
  since $\zlang{F_1}\subseteq \alang{F_2}$ by Proposition\nobreakspace \ref {th:path-if}, whence 
  $\elang{\zlang{F_1}}\ltbool\elang{F_2}$. Now, by transitivity $\elang{F_1}\ltbool\elang{F_2}$ as desired.
\end{proof}
\end{proposition}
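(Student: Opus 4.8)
The plan is to prove this one-directional implication by splitting on whether $F_1$ is already contained in $\alang{F_2}$, and in each branch to reduce the goal to two facts established earlier: Proposition~\ref{th:path-if}, which turns an inclusion of feature sets into $\ltpath$ and hence into $\ltbool$, and Proposition~\ref{prop:converse-elimination}, which eliminates $\conv{}$ at the boolean level whenever neither $\cap$ nor $\setminus$ is present. Transitivity of $\ltbool$ will then stitch the pieces together.

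First I would dispose of the case $F_1 \subseteq \alang{F_2}$. Here Proposition~\ref{th:path-if} gives $\elang{F_1} \ltpath \elang{F_2}$ outright, and since $\ltpath$ implies $\ltbool$ the conclusion follows; note that this branch does not even use the hypothesis on $\zlang{F_1}$.

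The substantive case is $F_1 \not\subseteq \alang{F_2}$. Combined with the hypothesis $\zlang{F_1} \subseteq \alang{F_2}$, this forces $\zlang{F_1} \neq F_1$, so by the definition of $\zlang{\cdot}$ we must be in its first branch: ${}^{-1} \in \alang{F_1}$, $\cap \not\in \alang{F_1}$, and $\zlang{F_1} = (F_1 \setminus \{\conv{}\}) \cup \{\pi\}$. My plan here is to produce the chain $\elang{F_1} \ltbool \elang{\zlang{F_1}} \ltbool \elang{F_2}$. The second link is immediate from Proposition~\ref{th:path-if} applied to $\zlang{F_1} \subseteq \alang{F_2}$ (giving $\ltpath$, hence $\ltbool$). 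For the first link I would invoke Proposition~\ref{prop:converse-elimination} with $F := F_1 \setminus \{\conv{}\}$, which yields $\elang{F \cup \{\conv{}\}} \ltbool \elang{F \cup \{\pi\}} = \elang{\zlang{F_1}}$. Two small checks make this work: since the closure rules defining $\alang{\cdot}$ never introduce $\conv{}$, the condition ${}^{-1} \in \alang{F_1}$ means ${}^{-1} \in F_1$, so $F \cup \{\conv{}\} = F_1$; and the side conditions of Proposition~\ref{prop:converse-elimination}, namely $\cap, \setminus \not\in F$, follow from $\cap \not\in \alang{F_1}$, because $\setminus \in F_1$ would force $\cap \in \alang{F_1}$ via the closure rule $\setminus \in \alang{F} \Rightarrow \cap \in \alang{F}$.

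The mathematics here is light; the only real obstacle is the bookkeeping around the closure operator $\alang{\cdot}$. Specifically, one must be careful that the hypotheses of the converse-elimination proposition, which are phrased in terms of raw membership in $F$, are correctly derived from statements about the closure $\alang{F_1}$, and that the active branch of $\zlang{\cdot}$ has been identified before any rewriting is attempted.
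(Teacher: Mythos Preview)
Your proposal is correct and follows the same two-case argument as the paper's proof, invoking Proposition~\ref{th:path-if} and Proposition~\ref{prop:converse-elimination} in exactly the same places. In fact you are more careful than the paper on the bookkeeping: you explicitly justify why the ``other case'' forces the first branch of $\zlang{\cdot}$, why ${}^{-1} \in \alang{F_1}$ entails ${}^{-1} \in F_1$, and why the side conditions $\cap,\setminus \notin F$ of Proposition~\ref{prop:converse-elimination} hold---points the paper leaves implicit.
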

The converse of this proposition does not hold in general, e.g., $\elang{\conv{}} \ltbool \elang{{}^{-1},\setminus}$ but $\zlang{\{{}^{-1}\}} = \{\pi\} \nsubseteq \alang{\{{}^{-1},\setminus\}} = \{{}^{-1},\setminus,\cap\}$. 

The ``only if'' direction of Theorem\nobreakspace \ref {th:bool-inclusion}, requires 
a detailed analysis, which proceeds along the same lines as the
analysis in Section\nobreakspace \ref {sec:path-queries}. We first establish the
``only if'' direction for the cases where $\BL(F_1)$ and $\BL(F_2)$
belong to the same class among $\clbot$ and $\clcap$, and then
consider the case where $\BL(F_1)$ and $\BL(F_2)$ belong to distinct
classes.

\subsection{Languages without $\cap$}
\label{sec:languages-bottom-bool}
In this subsection, we show the ``only if'' direction of
Theorem\nobreakspace \ref {th:bool-inclusion}, restricted to $\clbot$, the class of
languages without $\cap$.

\begin{proposition} \label{prop:bool-bottom-Hasse} Let $\BL(F_1)$ and
  $\BL(F_2)$ be in $\clbot$.  If $F_1 \not \subseteq \alang{F_2}$ and $\zlang{F_1}\nsubseteq \alang{F_2}$, then
  $\BL(F_1) \not\ltbool \BL(F_2)$.
\end{proposition}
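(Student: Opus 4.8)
The plan is to reduce this statement to the auxiliary facts already proved about path queries and the projection-power result, by a careful case analysis on which feature is "missing". Since both $\elang{F_1}$ and $\elang{F_2}$ lie in $\clbot$, we have $\cap\notin\alang{F_1}$ and $\cap\notin\alang{F_2}$, so $F_1\subseteq\{\di,\pi,\cpi,\conv{}\}$ and likewise for $\alang{F_2}$. The hypothesis that both $F_1\nsubseteq\alang{F_2}$ and $\zlang{F_1}\nsubseteq\alang{F_2}$ is what must be exploited: recall $\zlang{F_1}$ differs from $F_1$ only by swapping $\conv{}$ for $\pi$ (and only when $\conv{}\in\alang{F_1}$, $\cap\notin\alang{F_1}$, which holds here). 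So the combined hypothesis says, roughly, that there is a witnessing feature in $F_1$ that is absent from $\alang{F_2}$ and that cannot be "rescued" by the converse-elimination collapse either.

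First I would dispose of the cases where the witnessing feature is $\di$ or $\cpi$ or $\pi$. If $\di\in F_1$ but $\di\notin\alang{F_2}$, then since $\zlang{}$ never removes $\di$, also $\di\notin\alang{F_2}$ witnesses $\zlang{F_1}\nsubseteq\alang{F_2}$ consistently, and Proposition~\ref{prop:path-bottom}(\ref{bottom-di}) gives $\elang{F_1}\nltstrong\elang{F_2}$, hence $\elang{F_1}\not\ltbool\elang{F_2}$. Similarly, if $\cpi\in F_1$ (so $\cpi\in\alang{F_1}$) but $\cpi\notin\alang{F_2}$: since we are in $\clbot$, $\setminus\notin\alang{F_2}$ automatically (difference would force $\cap$), so Proposition~\ref{prop:path-bottom}(\ref{bottom-cpi}) applies and gives $\elang{F_1}\nltstrong\elang{F_2}$. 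If the witness is $\pi$ — so $\pi\in\alang{F_1}$ but $\pi\notin\alang{F_2}$ — then because $F_2\subseteq\{\di,\pi,\cpi,\conv{}\}$ and $\cpi\in\alang{F_2}$ would force $\pi\in\alang{F_2}$, we get $F_2\subseteq\{\conv{},\di\}$; then Proposition~\ref{prop:path-bottom}(\ref{bottom-pi}) (which is exactly the hard technical result Proposition~\ref{bottom-pi-tech}, phrased via $\elang{\pi}\ltpath\elang{F_1}$ and $\elang{F_2}\ltpath\elang{\conv{},\di}$) gives $\elang{F_1}\not\ltbool\elang{F_2}$ directly.

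The remaining, and genuinely new, case is when the only missing feature is $\conv{}$: that is, $\conv{}\in F_1$, $\conv{}\notin\alang{F_2}$, but every other feature of $F_1$ does lie in $\alang{F_2}$ — so $F_1\setminus\{\conv{}\}\subseteq\alang{F_2}$. Here is where the $\zlang{}$-hypothesis does the essential work: $\zlang{F_1}=(F_1\setminus\{\conv{}\})\cup\{\pi\}\nsubseteq\alang{F_2}$ forces $\pi\notin\alang{F_2}$. Combining $\pi\notin\alang{F_2}$ with $F_2\subseteq\{\di,\pi,\cpi,\conv{}\}$ (and $\cpi\in\alang{F_2}\Rightarrow\pi\in\alang{F_2}$) again yields $F_2\subseteq\{\di\}$ — in particular $\conv{}\notin F_2$ (consistent) and $\cpi,\pi\notin\alang{F_2}$. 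Then once more Proposition~\ref{prop:path-bottom}(\ref{bottom-pi}) applies: $\pi\in\zlang{F_1}$-style reasoning is not needed; what we need is a boolean query in $\elang{F_1}$ not expressible in $\elang{\di}$. Since $\conv{}\in F_1$, $\elang{\conv{}}\ltbool\elang{F_1}$, and the query $Q_{\mathrm{ZZZ}}$ from Proposition~\ref{bottom-pi-tech}, being expressible in $\elang{\conv{},\pi}\supseteq$... wait — more carefully: $Q_{\mathrm{ZZZ}}\in\elang{\pi}$ but we need it in $\elang{F_1}$ where $F_1\ni\conv{}$ but perhaps not $\pi$; however $Q_{\mathrm{ZZZ}}$ is also expressible in $\elang{\conv{},\pi}$ and, as noted in the proof of Proposition~\ref{bottom-pi-tech}, equivalently in $\elang{\conv{}}$ at the boolean level via $\pi_1$ of the $\conv{}$-expression being nonempty iff that expression is — so $Q_{\mathrm{ZZZ}}$ is a boolean query in $\elang{\conv{}}\ltbool\elang{F_1}$, and it is not expressible in $\elang{\conv{},\di}\supseteq\elang{\di}$, again by Proposition~\ref{bottom-pi-tech}. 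This completes the case.

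**Main obstacle.** I expect the bookkeeping in the $\conv{}$-missing case to be the subtle part: one must be confident that the $\zlang{}$ side-condition correctly forces $\pi\notin\alang{F_2}$ and hence collapses $F_2$ down to a subset of $\{\di\}$, and one must track that $Q_{\mathrm{ZZZ}}$ genuinely sits inside $\elang{F_1}$ at the boolean level when $F_1$ contains $\conv{}$ but possibly not $\pi$ — which is exactly where Proposition~\ref{prop:converse-elimination}'s observation ($e(G)\neq\emptyset$ iff $\pi_1(e)(G)\neq\emptyset$, and $\pi_1$ of a $\conv{}$-expression is already simulated) is invoked in reverse. The other cases are mechanical applications of Proposition~\ref{prop:path-bottom}. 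A final combinatorial check is needed to confirm that these four sub-cases ($x\in\{\di,\cpi,\pi,\conv{}\}$) genuinely exhaust all ways the joint hypothesis $F_1\nsubseteq\alang{F_2}\wedge\zlang{F_1}\nsubseteq\alang{F_2}$ can be realized within $\clbot$.
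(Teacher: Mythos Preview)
Your case analysis for the features $\di$, $\cpi$, and $\pi$ is correct and matches the paper's reasoning (though the paper organizes the split differently, by whether $\pi\in\alang{F_2}$). The gap is in your handling of the $\conv{}$-only-missing case.

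You correctly deduce that in this case $\pi\notin\alang{F_2}$ and hence $F_2\subseteq\{\di\}$. But then you try to invoke $Q_{\mathrm{ZZZ}}$, arguing it lies in $\elang{\conv{}}$ at the boolean level. This is false, and in fact self-refuting: Proposition~\ref{bottom-pi-tech} states precisely that $Q_{\mathrm{ZZZ}}$ is \emph{not} expressible in $\elang{\conv{},\di}$, hence a fortiori not in $\elang{\conv{}}$. Your mistake is reading the $\elang{\conv{},\pi}$-expression for $Q_{\mathrm{ZZZ}}$ as having the shape $\pi_1(e)$ with $e\in\elang{\conv{}}$; it does not. It is a \emph{composition} $\pi_1(e_1)\circ\pi_1(e_2)\circ\pi_1(e_3)$ of three projections, which effects a conjunction at a common node---exactly the branching that $\elang{\conv{}}$ cannot simulate. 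The observation ``$\pi_1(e)$ nonempty iff $e$ nonempty'' lets you strip a single outermost projection, not the three inner ones that do all the work.

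What is actually needed here is a separate, easier result (the paper's Proposition~\ref{prop:conv-di}): if $\conv{}\in\alang{F_1}$ then $\elang{F_1}\nltstrong\elang{\di}$. This is witnessed by the simple query $R^2\circ R^{-1}\circ R^2$, which \emph{is} in $\elang{\conv{}}$ (no projections), together with the pair of graphs in Figure~\ref{fig:separations}(a); one shows directly that on acyclic graphs with at least three nodes, every boolean query in $\elang{\di}$ reduces to some $R^m\neq\emptyset$, which cannot distinguish those two graphs. With this ingredient in place of your $Q_{\mathrm{ZZZ}}$ argument, the proof goes through.
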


Propositions\nobreakspace \ref{th:path-if},\nobreakspace\ref{th:bool-if-cross} and\nobreakspace  \ref {prop:bool-bottom-Hasse}
combined yield the Hasse diagram of $\ltbool$ for $\clbot$, shown in
Figure\nobreakspace \ref {fig:hasse-noint-bool}.  It is indeed readily verified that
for any two languages $\BL(F_1)$ and $\BL(F_2)$ in $\clbot$, there is
a path from $\BL(F_1)$ to $\BL(F_2)$ in
Figure\nobreakspace \ref {fig:hasse-noint-bool} if and only if $F_1 \subseteq
\alang{F_2}$ or $\zlang{F_1}\subseteq \alang{F_2}$.

Towards a proof of Proposition\nobreakspace \ref {prop:bool-bottom-Hasse}, we first establish the following.

\begin{proposition} \label{prop:conv-di} 
Let $F$ be a set of nonbasic features. If $\conv{} \in \fbar$, then
we have $\BL(F) \nltstrong \BL(\div)$.
  \begin{proof}
  % R^2 R^{-1} R^2 zigzag vs path length 2
  % computer check
  % It can be verified by the Brute-Force Algorithm of
  % Section\nobreakspace \ref {sec:brute-force-approach} that the graphs shown in
  % Figure\nobreakspace \ref {fig:separations}~(c) are not distinguishable in
  % $\BL(\div)$.
Let $\mathcal{C}$ be the class of all graphs $G$ such that $G$ is acyclic and $\adom(G)$ contains at least three elements, and let $e \in \elang{\di}$. We will show that on the class $\mathcal{C}$, the boolean query $e \neq \emptyset$ is either $\emptyset\neq \emptyset$ (always false) or equivalent to $R^m \neq \emptyset$ for some natural number $m$. Let us first show this for union-free expressions. Since $\di^i = \id\cup \di$ in $\mathcal{C}$ for $i > 1$, we may assume that $e = R^{n_1} \circ \di \circ R^{n_2} \circ \di \circ \ldots \circ \di \circ R^{n_k}$ where $k > 1$ and $n_1,\ldots,n_k$ are natural numbers greater than zero. We set $m$ to be the maximum of the $n_l$ for $1 \leq l \leq k$. Let $G$ be an arbitrary graph in $\mathcal{C}$. Clearly, if $e(G) \neq \emptyset$ then $R^{m}(G) \neq \emptyset$. For the other direction, assume $(x,y)\in R^{m}(G)$. Since $G$ is acyclic, $x\neq y$, so $(y,x)\in \di(G)$. Hence $(x,x)\in R^m\circ \di(G)$. For any $l\leq m$, we also have $(x,x)\in R^l\circ \di(G)$. We conclude that $(x,x)\in R^{n_1} \circ \di \circ \ldots\circ R^{n_k}\circ \di(G)$. In particular, $e(G)$ is nonempty as desired. 

For the claim to hold with union, it suffices to show it for a union of two union-free expressions. Indeed, the form $R^m$ is union-free! So, consider an expression $e$ of the form $R^{m_1} \cup R^{m_2}$. Then $e \neq \emptyset$ is equivalent to $R^{\min(m_1,m_2)}\neq \emptyset$, which proves the claim.

Now consider graphs $G_1$ and $G_2$ in Figure~\ref{fig:separations}(a). These graphs belong to $\mathcal{C}$, and are clearly indistinguishable by any expression of the form $R^{m}\neq \emptyset$. The graphs, however, are distinguishable by the boolean query $R^{2}\circ R^{-1}\circ R^2\neq \emptyset$.
\end{proof}
\end{proposition}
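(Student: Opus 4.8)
The plan is to use the hypothesis to reduce to a single converse query on a single pair of finite graphs, and then to classify exactly which Boolean queries $\BL(\div)$ can express on a conveniently restricted class of graphs. Since none of the closure rules defining $\fbar$ ever introduces $\conv{}$, the assumption $\conv{} \in \fbar$ is the same as $\conv{} \in F$, so $\BL(\conv{}) \ltpath \BL(F)$; it is therefore enough to exhibit two \emph{finite} graphs $G_1,G_2$ that some expression of $\BL(\conv{})$ distinguishes at the Boolean level while no expression of $\BL(\div)$ does. I will take $G_1$ and $G_2$ inside the class $\mathcal{C}$ of all graphs that are acyclic and have at least three active-domain elements.

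The technical core is a normal form: for every $e \in \BL(\div)$, the Boolean query ``$e \neq \emptyset$'', restricted to $\mathcal{C}$, is either identically false or equivalent to ``$R^m \neq \emptyset$'' for some $m \in \mathbb{N}$, and on an acyclic graph the latter just says that there is a directed path of length $m$. First I would push every union to the top of $e$, reducing to union-free expressions; a union-free expression either mentions $\emptyset$ (hence is empty) or, after grouping maximal runs of an edge letter into a power $R^n$ and using that on $\mathcal{C}$ one has $\di \circ \di = \id \cup \di = $ the full relation (this is where ``at least three elements'' enters), reduces on $\mathcal{C}$ to an expression of the shape $R^{n_1} \circ \di \circ R^{n_2} \circ \di \circ \cdots \circ \di \circ R^{n_k}$ with $k \ge 1$ and every $n_i \ge 1$. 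I would then show that such an expression is nonempty on $G \in \mathcal{C}$ precisely when $R^m(G) \neq \emptyset$, where $m := \max_i n_i$. Finally, a disjunction of queries ``$R^{m_i} \neq \emptyset$'' is equivalent on $\mathcal{C}$ to ``$R^{\min_i m_i} \neq \emptyset$'', which propagates the classification back through the union structure.

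The step I expect to be the main obstacle is that nonemptiness equivalence. The forward implication is easy: if $R^{n_1} \circ \di \circ \cdots \circ \di \circ R^{n_k}$ is nonempty on $G$, then each forward factor $R^{n_i}$ is nonempty, so $R^m(G) \neq \emptyset$. For the converse, given a directed path of length $m = \max_i n_i$ in $G$, take its first vertex $x$; acyclicity forces every later vertex of the path to differ from $x$, so for each $\ell$ with $1 \le \ell \le m$ one gets $(x,x) \in R^{\ell} \circ \di(G)$ (walk $\ell$ steps to some vertex $\neq x$, then jump back to $x$ through $\di$) and also $(x,v) \in R^{n_k}(G)$ for the vertex $v$ reached after $n_k$ steps. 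Chaining the self-loops $(x,x) \in R^{n_i} \circ \di(G)$ for $i = 1,\dots,k-1$ and appending $(x,v) \in R^{n_k}(G)$ yields $(x,v) \in e(G)$. The delicate bookkeeping is to check that this threading is legitimate for every interleaving of the $R^{n_i}$ and $\di$ factors — including the two boundary factors — and to absorb runs of several consecutive $\di$'s via the $\di \circ \di$ collapse.

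Given the normal form, the separation follows from the right pair of graphs. I would take $G_1$ to be a small acyclic ``zigzag'' graph of longest-path length $2$ that nevertheless contains a vertex with an incoming path of length $2$ one of whose in-neighbours starts an outgoing path of length $2$, and $G_2$ to be the disjoint union of two directed paths of length $2$; these are the graphs of Figure~\ref{fig:separations}(a). Both lie in $\mathcal{C}$ and realize exactly the directed-path lengths $0,1,2$, so ``$R^m \neq \emptyset$'' holds on $G_1$ iff it holds on $G_2$ for every $m$; hence, by the normal form, $G_1$ and $G_2$ are indistinguishable in $\BL(\div)$. They are distinguished at the Boolean level in $\BL(\conv{})$, and hence in $\BL(F)$, by $R^2 \circ R^{-1} \circ R^2 \neq \emptyset$, which holds on $G_1$ through the zigzag but fails on $G_2$ for want of the required branching. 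This establishes $\BL(F) \nltstrong \BL(\div)$.
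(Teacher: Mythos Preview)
Your proposal is correct and follows essentially the same approach as the paper's proof: the same class $\mathcal{C}$ of acyclic graphs with at least three nodes, the same normal-form reduction of Boolean $\BL(\div)$-queries on $\mathcal{C}$ to ``$R^m \neq \emptyset$'' via $\di\circ\di = \id\cup\di$ and the self-loop trick $(x,x)\in R^\ell\circ\di(G)$, the same $\min$ collapse for unions, and the same separating pair from Figure~\ref{fig:separations}(a) with the distinguishing query $R^2\circ R^{-1}\circ R^2$. If anything, you are slightly more careful than the paper about the boundary $\di$-factors and the reduction from $\conv{}\in\fbar$ to $\conv{}\in F$.
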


%\par\eject
%\begin{figure}[!h]
\begin{figure}[htb]
\begin{center}
 \resizebox{0.20\textwidth}{!}{
\begin{picture}(0,0)%
\includegraphics{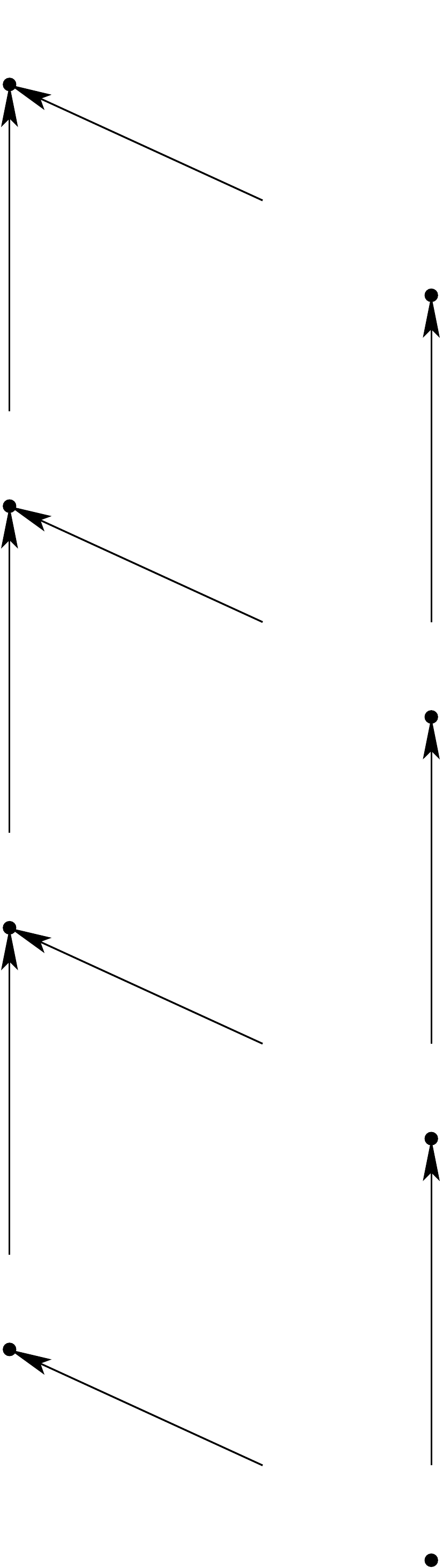}%
\end{picture}%
\setlength{\unitlength}{4144sp}%
\begingroup\makeatletter\ifx\SetFigFont\undefined%
\gdef\SetFigFont#1#2#3#4#5{%
  \reset@font\fontsize{#1}{#2pt}%
  \fontfamily{#3}\fontseries{#4}\fontshape{#5}%
  \selectfont}%
\fi\endgroup%
\begin{picture}(3764,13372)(7119,-10022)
\put(7201,-601){\makebox(0,0)[b]{\smash{{\SetFigFont{25}{30.0}{\familydefault}{\mddefault}{\updefault}{\color[rgb]{0,0,0}$\BL(\,\framebox[1.2\width]{\hhh${}^{-1},\textit{di},\pi$}\,)\makebox[0pt][l]{$\mathstrut=\BL(\,\framebox[1.3\width]{\hhh$\textit{di},\pi$}\,)$}$}%
}}}}
\put(10801,1199){\makebox(0,0)[b]{\smash{{\SetFigFont{25}{30.0}{\familydefault}{\mddefault}{\updefault}{\color[rgb]{0,0,0}$\BL(\,\framebox[1.3\width]{\hhh${}^{-1},\overline\pi$}\,,\pi)\makebox[0pt][l]{$\mathstrut=\BL(\,\framebox[1.7\width]{\hhh$\overline\pi$}\,,\pi)$}$}%
}}}}
\put(7201,2999){\makebox(0,0)[b]{\smash{{\SetFigFont{25}{30.0}{\familydefault}{\mddefault}{\updefault}{\color[rgb]{0,0,0}$\BL(\,\framebox[1.2\width]{\hhh${}^{-1},\textit{di},\overline\pi$}\,,\pi)\makebox[0pt][l]{$\mathstrut=\BL(\,\framebox[1.3\width]{\hhh$\textit{di},\overline\pi$}\,,\pi)$}$}%
}}}}
\put(10801,-2401){\makebox(0,0)[b]{\smash{{\SetFigFont{25}{30.0}{\familydefault}{\mddefault}{\updefault}{\color[rgb]{0,0,0}$\BL(\,\framebox[1.3\width]{\hhh${}^{-1},\pi$}\,)\makebox[0pt][l]{$\mathstrut=\BL(\,\framebox[1.7\width]{\hhh$\pi$}\,)$}$}%
}}}}
\put(7201,-7801){\makebox(0,0)[b]{\smash{{\SetFigFont{25}{30.0}{\familydefault}{\mddefault}{\updefault}{\color[rgb]{0,0,0}$\BL(\,\framebox[1.7\width]{\hhh$\textit{di}$}\,)$}%
}}}}
\put(7201,-4201){\makebox(0,0)[b]{\smash{{\SetFigFont{25}{30.0}{\familydefault}{\mddefault}{\updefault}{\color[rgb]{0,0,0}$\BL(\,\framebox[1.3\width]{\hhh${}^{-1},\textit{di}$}\,)$}%
}}}}
\put(10801,-9601){\makebox(0,0)[b]{\smash{{\SetFigFont{25}{30.0}{\familydefault}{\mddefault}{\updefault}{\color[rgb]{0,0,0}$\BL$}%
}}}}
\put(10801,-6001){\makebox(0,0)[b]{\smash{{\SetFigFont{25}{30.0}{\familydefault}{\mddefault}{\updefault}{\color[rgb]{0,0,0}$\BL(\,\framebox[1.7\width]{\hhh${}^{-1}$}\,)$}%
}}}}
\end{picture}%
}
\caption{The Hasse diagram of $\ltbool$ for $\clbot$. For each
  language, the boxed features are a minimal set of nonbasic features
  defining the language, while the other features can be derived from
  them in the sense of Theorem\nobreakspace \ref {th:path-inclusion} (using the
  appropriate interdependencies).}
\label{fig:hasse-noint-bool}
\end{center}
\end{figure}

As detailed below, Propositions\nobreakspace \ref {prop:path-bottom} and\nobreakspace  \ref {prop:conv-di} are now subsequently used to show that for every
pair $F_1$ and $F_2$ of sets of nonbasic features for which $F_1
\not\subseteq \ftilde_2$, that $\BL(F_1) \not\ltbool \BL(F_2)$, in the
same way as in Section\nobreakspace \ref {sec:languages-bottom} and
\ref{sec:languages-cap}.

The remainder of the proof of Proposition\nobreakspace \ref {prop:bool-bottom-Hasse}
is again a combinatorial analysis to verify that the above-mentioned
propositions cover all relevant cases. 

\begin{proof}[Proof of Proposition\nobreakspace \ref {prop:bool-bottom-Hasse}.]
      First, note that $F_1 \cup F_2 \subseteq \{{}^{-1},\cpi,\pi,\di\}$ since $\elang{F_1}$ and $\elang{F_2}$ are in $\mathcal{C}$. We will consider two cases: $\pi \in \alang{F_2}$ and $\pi \not\in \alang{F_2}$. First we will consider $\pi \in \alang{F_2}$. Since $\zlang{F_1}\nsubseteq \alang{F_2}$, there must be another feature, not equal to ${}^{-1}$ or $\pi$ present. If this feature is $\di$, then Proposition\nobreakspace \ref {prop:path-bottom}(\ref{bottom-di}) proves the result. On the other hand, if this feature is $\cpi$ then Proposition\nobreakspace \ref {prop:path-bottom}(\ref{bottom-cpi}) proves the result.
      
      Now consider the case where $\pi \not \in \alang{F_2}$. Here, $\cpi \not\in F_2$ and thus $F_1 \not\subseteq \alang{F_2}\subseteq \{\di,\conv{}\}$. Hence one of $\conv{},\cpi,\pi$ or $\di$ is present in $F_1$ but missing in $\alang{F_2}$. If that feature is $\conv{}$, then $\alang{F_2}\subseteq \{\di\}$, and hence Proposition\nobreakspace \ref {prop:conv-di} proves the result. On the other hand, if that feature is $\cpi,\di$ or $\pi$, the result follows directly from Proposition\nobreakspace \ref {prop:path-bottom}.
  \end{proof}

% Concretely,
% Proposition\nobreakspace \ref {prop:path-bottom}~(\ref{bottom-di}) deals with 16 of
% these pairs, Proposition\nobreakspace \ref {prop:path-bottom}~(\ref{bottom-cpi})
% deals with another 9 pairs,
% \Cref{prop:zigzag} deals with
% another 6 pairs, and, finally, Proposition\nobreakspace \ref {prop:conv-di} deals
% with the remaining 3 pairs.

\subsection{Languages with $\cap$}
\label{sec:languages-cap-bool}
In this subsection, we show the ``only if'' direction of
Theorem\nobreakspace \ref {th:bool-inclusion}, restricted to $\clcap$, the class of
languages with $\cap$.

\begin{proposition} \label{prop:bool-cap-Hasse} Let $\BL(F_1)$ and
  $\BL(F_2)$ be in $\clcap$.  If $F_1 \not \subseteq \alang{F_2}$ and $\zlang{F_1}\nsubseteq \alang{F_2}$, then 
  $\BL(F_1) \nltstrong \BL(F_2)$.
\end{proposition}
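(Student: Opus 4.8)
The plan is to reproduce the case analysis behind the path-query statement Proposition~\ref{prop:path-int-Hasse}, now reading every conclusion as \emph{strong boolean} separation, and to add one new bisimulation-based ingredient for the single case that the path analysis handles only at the path level. First, since $\BL(F_1)\in\clcap$ we have $\cap\in\alang{F_1}$, so $\zlang{F_1}=F_1$ by definition, and the hypothesis collapses to $F_1\not\subseteq\alang{F_2}$. Pick any nonbasic feature $x\in F_1\setminus\alang{F_2}$; since $\cap\in\alang{F_2}$ as well, $x\in\{\di,\conv{},\setminus,\pi,\cpi\}$, and it suffices to handle these five possibilities.

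Four of them follow at once from results already in hand. If $x=\di$, then $\di\in\alang{F_1}\setminus\alang{F_2}$ and Proposition~\ref{prop:path-bottom}(\ref{bottom-di}) yields $\BL(F_1)\nltstrong\BL(F_2)$; if $x=\setminus$, Proposition~\ref{prop:path-int}(\ref{int-diff}) does so. If $x=\pi$, then from the interdependencies $\cap+\di\Rightarrow\pi$, $\cap+\conv{}\Rightarrow\pi$, $\cpi\Rightarrow\pi$, together with $\cap\in\alang{F_2}$ and $\pi\notin\alang{F_2}$, we also get $\di,\conv{},\cpi\notin\alang{F_2}$, i.e.\ $F_2\subseteq\{\cap,\setminus\}$, so Proposition~\ref{prop:path-int}(\ref{int-pi}) applies (with $\pi\in\alang{F_1}$ since $\pi\in F_1$). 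If $x=\cpi$: when $\setminus\notin\alang{F_2}$ we invoke Proposition~\ref{prop:path-bottom}(\ref{bottom-cpi}) directly; when $\setminus\in\alang{F_2}$, the rule $\setminus+\pi\Rightarrow\cpi$ forces $\pi\notin\alang{F_2}$, whence again $F_2\subseteq\{\cap,\setminus\}$, and since $\cpi\in F_1$ gives $\pi\in\alang{F_1}$, Proposition~\ref{prop:path-int}(\ref{int-pi}) applies once more.

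The genuinely new work is the case $x=\conv{}$, because Proposition~\ref{prop:path-bottom}(\ref{bottom-conv}) only provides strong separation at the \emph{path} level. Since $\conv{}$ is never produced by the closure rules, $\conv{}\notin F_2$, so $F_2\subseteq\overline{\{\setminus,\di\}}$ and hence $\BL(F_2)\ltpath\BL(\setminus,\div)$; on the other side $\conv{},\cap\in\alang{F_1}$, so $\BL(\conv{},\cap)\ltpath\BL(F_1)$. It therefore suffices to prove $\BL(\conv{},\cap)\nltstrong\BL(\setminus,\div)$: to exhibit two finite graphs $G_1,G_2$ that are distinguished by some boolean query of $\BL(\conv{},\cap)$ but are indistinguishable in $\BL(\setminus,\div)$. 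For the indistinguishability I would use the arrow-logic bisimulation of Section~\ref{sec:bisim-invar}: it is enough to check that every pair in $\adom(G_1)^2$ is bisimilar, in the sense of Definition~\ref{def-diff-bisimilar}, to some pair in $\adom(G_2)^2$ and conversely — a property witnessed uniformly on finite graphs by the greatest such bisimulation (the limit of the relations $\simeq_k$) and testable in polynomial time as noted in Section~\ref{sec:bisim-invar}. Proposition~\ref{theo-diff-bisimilar} then gives $e(G_1)=\emptyset\Leftrightarrow e(G_2)=\emptyset$ for every $e\in\BL(\setminus,\div)$, hence for every $e\in\BL(F_2)$, which is exactly what $\nltstrong$ requires.

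The main obstacle is designing this pair $G_1,G_2$: they must be separated by an intersection-of-converse pattern — something built, say, from $R\cap\conv R$ or a mixed forward/backward composition — that $\BL(\setminus,\div)$ cannot detect, while remaining bisimilar under Definition~\ref{def-diff-bisimilar}. I expect this to require a careful hand construction, or a computer search using the polynomial-time bisimulation test; it is the one point in the argument where the brute-force and homomorphism tools of Section~\ref{sec:boolean-separation} do not obviously apply. With such a pair secured, a routine combinatorial verification — mirroring the bookkeeping in the proof of Proposition~\ref{prop:path-int-Hasse} — confirms that the five cases above exhaust all $F_1,F_2$ with $\BL(F_1),\BL(F_2)\in\clcap$ and $F_1\not\subseteq\alang{F_2}$, completing the proof.
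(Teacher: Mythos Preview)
Your plan is correct and matches the paper's argument exactly: the paper states that the proof of Proposition~\ref{prop:bool-cap-Hasse} ``proceeds as the proof of Proposition~\ref{prop:path-int-Hasse}, except that Proposition~\ref{prop:bool-converse-cap} is used instead of Proposition~\ref{prop:path-bottom}(\ref{bottom-conv}),'' which is precisely your case split with the $\conv{}$ case handled separately via bisimulation. The construction you flag as the main obstacle is supplied in the paper as Proposition~\ref{prop:bool-converse-cap}: the two graphs are those of Figure~\ref{fig:separations}(c), the distinguishing $\BL(\conv{},\cap)$ query is $(R^2\circ\conv{R}\circ R)\cap R$, and indistinguishability in $\BL(\setminus,\div)$ is established by a polynomial-time bisimulation check exactly as you anticipated.
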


Notice that since $\cap \in \alang{F_1}$, $\zlang{F_1}= F_1$. Hence,
Theorem\nobreakspace \ref {th:path-inclusion} and
Proposition\nobreakspace \ref {prop:bool-cap-Hasse} combined show that $\ltbool$
coincides with $\ltpath$ on $\clcap$.  As a result, the Hasse diagram
of $\ltbool$ for $\clcap$ is the same as the Hasse diagram of
$\ltpath$ for $\clcap$ shown in Figure\nobreakspace \ref {fig:hasse-int}. Note that,
in addition, all separations are strong.

Towards a proof of Proposition\nobreakspace \ref {prop:bool-cap-Hasse}, we first
establish the following.
\begin{proposition} \label{prop:bool-converse-cap}
  Let $F_1$ and $F_2$ be sets of nonbasic features. If $\conv{} \in
  F_1$, $\cap \in F_1$, and $\conv{} \not\in F_2$, then $\BL(F_1)
  \nltstrong \BL(F_2)$.
\end{proposition}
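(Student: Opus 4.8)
The plan is to reduce the statement to the single concrete separation $\BL(\conv{},\cap) \nltstrong \BL(\setminus,\div)$ and then establish the latter by the bisimulation method of Section~\ref{sec:bisim-invar}; this is precisely the separation flagged in the Introduction as needing ``a specialized application of invariance under bisimulation known from arrow logics''. For the reduction: since $\conv{},\cap\in F_1$ we have $\{\conv{},\cap\}\subseteq\fbar_1$, so $\BL(\conv{},\cap)\ltpath\BL(F_1)$ by Proposition~\ref{th:path-if}, and hence any boolean query that witnesses the separation in $\BL(\conv{},\cap)$ is also a boolean query of $\BL(F_1)$. On the other side, since $\conv{}\notin F_2$ the nonbasic features of $F_2$ lie among $\{\cap,\setminus,\pi,\cpi,\div\}$, which is exactly $\overline{\{\setminus,\div\}}$; thus $F_2\subseteq\overline{\{\setminus,\div\}}$ and $\BL(F_2)\ltpath\BL(\setminus,\div)$ by Proposition~\ref{th:path-if}, so every pair of finite graphs that is indistinguishable in $\BL(\setminus,\div)$ is a fortiori indistinguishable in $\BL(F_2)$. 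It therefore suffices to exhibit two finite graphs $G_1,G_2$ over a single label that are distinguishable in $\BL(\conv{},\cap)$ but indistinguishable in $\BL(\setminus,\div)$.

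For the positive side I would give a concrete small pair $G_1,G_2$ together with a boolean query $q\in\BL(\conv{},\cap)$ that uses $\conv{}$ and $\cap$ together in an essential way, and then check directly that $q(G_1)$ is true and $q(G_2)$ is false. Here $\cap$ really is needed: by Proposition~\ref{prop:converse-elimination} a query using $\conv{}$ but not $\cap$ can always be rewritten to one without converse (which then lands inside $\BL(\pi)\ltbool\BL(\setminus,\div)$), so the pattern that $q$ detects must be one in which the intersection ties together a forward traversal and a backward traversal so that converse cannot be pushed down and eliminated.

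For the negative side I would apply Propositions~\ref{theo-diff-bisimilar} and~\ref{prop-bisimilar-noparameter}. It suffices to verify, at the level of pairs, that every $(a_1,b_1)\in\adom(G_1)^2$ is $\simeq_k$-bisimilar to some pair of $\adom(G_2)^2$ and, symmetrically, that every pair of $\adom(G_2)^2$ is $\simeq_k$-bisimilar to some pair of $\adom(G_1)^2$, for every $k$: by Proposition~\ref{theo-diff-bisimilar} this yields $e(G_1)\neq\emptyset\Leftrightarrow e(G_2)\neq\emptyset$ for every $e\in\BL(\setminus,\div)$, i.e.\ indistinguishability in the required sense. Since $G_1$ and $G_2$ are finite, the descending chain $\simeq_0\supseteq\simeq_1\supseteq\cdots$ on pairs stabilizes after finitely many steps, so ``for every $k$'' reduces to the stabilization level (equivalently, to full bisimilarity $\simeq_\infty=\bigcap_k\simeq_k$), which is a finite computation; following the methodology announced in the Introduction, I would carry it out by machine.

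The main obstacle is the design of $G_1$ and $G_2$. They must simultaneously carry an asymmetry that the $\cap$-guarded converse query $q$ detects, and yet be mutually $\simeq_k$-bisimilar at the level of pairs for all $k$, i.e.\ the asymmetry must be invisible to the converse-free arrow bisimulation of Definition~\ref{def-diff-bisimilar}. This is delicate because $\simeq_k$ already sees reverse edges via the reversed pair, and because $\BL(\setminus,\div)$ can capture cycles and in/out-degree conditions by composition together with intersection with $\id$; so the pattern underlying $q$ has to be one that genuinely re-orients an edge without ever closing a loop. Finding such a pattern together with two small host graphs, and then verifying the pairwise bisimulation condition on them, is the crux of the argument.
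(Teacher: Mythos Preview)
Your plan is correct and is essentially the paper's own approach: the paper uses the graphs of Figure~\ref{fig:separations}(c) and the query $(R^2\circ\conv{R}\circ R)\cap R$, and then verifies the pairwise bisimilarity condition of Proposition~\ref{prop-bisimilar-noparameter} by machine (citing \cite{rabisim}), exactly along the lines you sketch. Your reduction to the single case $\BL(\conv{},\cap)$ versus $\BL(\setminus,\div)$ is a bit more explicit than the paper's write-up, but amounts to the same thing; note also that the paper only states the one direction of the pairwise condition needed to invoke Proposition~\ref{prop-bisimilar-noparameter}, whereas you (correctly, given Definition~\ref{def:strong}) ask for both.
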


\begin{proof}
  % (R^2 R^{-1} R) \cap R
  % infinite bisimulation
  The graphs $G_1$ and $G_2$ shown in
  Figure\nobreakspace \ref {fig:separations}~(c), top and bottom, are distinguished
  by the boolean query $q$ expressed by $(R^2 \circ \conv{R} \circ R)
  \cap R$. On these graphs, the Brute-Force Algorithm of
  Section\nobreakspace \ref {sec:brute-force-approach} does not terminate in a
  reasonable time. It can be verified in polynomial
  %Using a variant of the well-known algorithm by
  %Paige and Tarjan~\cite{PaigeT87}, it can be verified in polynomial\todo{zo houden? Of mijn variant van wat er in de bisimulation paper staat erin zetten?}
  time, however, that for each pair $(a_1,b_1) \in \adom(G_1)^2$,
  there exists $(a_2,b_2) \in \adom(G_2)^2$ such that $(G_1, a_1, b_1)
  \simeq_k (G_2, a_2, b_2)$ for any depth $k$ \cite{rabisim}.
From
  Proposition\nobreakspace \ref {prop-bisimilar-noparameter}, it follows that $q$ is
  not expressible in $\BL(F_2)$.
\end{proof}

The remainder of the proof of Proposition\nobreakspace \ref {prop:bool-cap-Hasse}
proceeds as the proof of Proposition\nobreakspace \ref {prop:path-int-Hasse}, except
that Proposition\nobreakspace \ref {prop:bool-converse-cap} is used instead of
Proposition\nobreakspace \ref {prop:path-bottom}~(\ref{bottom-conv}).

\subsection{Cross-relationships between subdiagrams}
\label{sec:cross-rel-bool}
To finish the proof of Theorem\nobreakspace \ref {th:bool-inclusion}, we finally
show the ``only if'' direction for the case where $\BL(F_1)$ and
$\BL(F_2)$ belong to different classes.

\begin{proposition} \label{prop:bool-cross-Hasse} Let $\BL(F_1)$ and
  $\BL(F_2)$ be languages such that one language belongs to $\clbot$,
  and the other language belongs to $\clcap$. If $F_1 \not \subseteq \alang{F_2}$ and $\zlang{F_1}\nsubseteq \alang{F_2}$, then $\BL(F_1) \not \ltbool \BL(F_2)$.
\end{proposition}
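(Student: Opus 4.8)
The plan is to mirror the structure of the proof of Proposition~\ref{prop:path-cross-Hasse}, replacing each path-separation ingredient by its boolean counterpart and, crucially, handling the collapse $\BL(\conv{}) \ltbool \BL(\pi)$ that distinguishes $\ltbool$ from $\ltpath$. So the first step is to split on which language lies in which class.

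\textbf{Case 1: $\BL(F_1) \in \clcap$ and $\BL(F_2) \in \clbot$.} Then $\cap \in \alang{F_1}$ and $\cap \notin \alang{F_2}$, and since $\cap \in \alang{F_1}$ we have $\zlang{F_1} = F_1$, so the hypothesis $\zlang{F_1}\nsubseteq\alang{F_2}$ adds nothing. I would invoke Proposition~\ref{prop:path-cross-comp}, which already gives $\BL(F_1)\nltstrong\BL(F_2)$, hence in particular $\BL(F_1)\not\ltbool\BL(F_2)$. This case is immediate.

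\textbf{Case 2: $\BL(F_1)\in\clbot$ and $\BL(F_2)\in\clcap$.} Here $F_1 \subseteq \{\di,\pi,\cpi,\conv{}\}$ and $\cap\in\alang{F_2}$. Since $\cap,\setminus\notin F_1$ (as $\BL(F_1)\in\clbot$, in fact $F_1$ is already closed), the hypotheses say there is a feature $x\in F_1$ with $x\notin\alang{F_2}$, and moreover — this is the new subtlety — if $x=\conv{}$ is the \emph{only} such offending feature, then $\zlang{F_1} = (F_1\setminus\{\conv{}\})\cup\{\pi\}$ must also fail to be contained in $\alang{F_2}$, which forces $\pi\notin\alang{F_2}$. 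The plan is to enumerate the candidate offending features $x\in\{\di,\pi,\cpi,\conv{}\}$:
if $x=\di$, apply Proposition~\ref{prop:path-bottom}(\ref{bottom-di});
if $x=\pi$, then as in the path proof $\cpi\notin F_2$ and $\alang{F_2}\cap\{\conv{},\di\}=\emptyset$ (since $\cap\in\alang{F_2}$ already), so $F_2\subseteq\{\setminus,\cap\}$ and Proposition~\ref{prop:path-int}(\ref{int-pi}) applies;
if $x=\cpi$, then either $\setminus\notin\alang{F_2}$ and Proposition~\ref{prop:path-bottom}(\ref{bottom-cpi}) applies, or $\pi\notin\alang{F_2}$ and we reduce to the $x=\pi$ case since $\pi\in\alang{F_1}$;
the genuinely new subcase is $x=\conv{}$, where I would first note that if some \emph{other} feature among $\di,\pi,\cpi$ is also absent from $\alang{F_2}$ we are done by the previous bullets, so we may assume $\{\di,\cpi\}\subseteq\alang{F_2}$ or rather that the only obstruction is $\conv{}$; then the $\zlang{}$ hypothesis forces $\pi\notin\alang{F_2}$, hence $F_2\cap\{\di,\conv{},\cpi,\pi\}=\emptyset$, so $F_2\subseteq\{\cap,\setminus\}$, and I would apply Proposition~\ref{prop:bool-converse-cap} with the observation that $\conv{}\in F_1$, $\cap\notin F_1$ (so the hypothesis of that proposition which asks $\cap\in F_1$ does \emph{not} directly apply) — so instead I must use a different separation: since $F_2\subseteq\{\cap,\setminus\}$ I would appeal to Proposition~\ref{prop:path-bottom}(\ref{bottom-conv}) giving strong \emph{path} separation, but that is insufficient at the boolean level, so the correct tool is to combine the counterexample graphs witnessing $\BL(\conv{})\not\ltbool\BL(\setminus)$ — which must be extracted, e.g., from the bisimulation machinery of Section~\ref{sec:bisim-invar} applied to a small converse query such as $R\circ\conv R$ or from the ZZZ construction when $\pi$-power is at stake.

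\textbf{Main obstacle.} The delicate point is exactly the $x=\conv{}$ subcase of Case~2: ordinary path separation via $\conv{}$ (Proposition~\ref{prop:path-bottom}(\ref{bottom-conv})) does \emph{not} transfer to booleans, because $\conv{}$ collapses into $\pi$ at the boolean level. One must therefore show that $\conv{}$ still buys boolean power over $\alang{F_2}$ precisely when $\pi\notin\alang{F_2}$, i.e.\ when $F_2\subseteq\{\cap,\setminus\}$; I expect this to require either a dedicated pair of graphs checked by the Brute-Force Algorithm of Section~\ref{sec:brute-force-approach} or an appeal to bisimulation invariance for $\BL(\setminus)$ (a special case of Proposition~\ref{prop-bisimilar-noparameter}), witnessing that some boolean query like ``there is an $R$-path of length $2$ whose endpoints are also joined by a single backward $R$-edge'' — expressible with $\conv{}$ — cannot be matched by any $\BL(\setminus)$-expression. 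Assembling this one additional separation, and then verifying the combinatorial enumeration above exhausts all pairs $(F_1,F_2)$ with the two negated containments, completes the proof.
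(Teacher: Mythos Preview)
Your overall plan is correct and matches the paper's proof almost step for step: Case~1 via Proposition~\ref{prop:path-cross-comp}; Case~2 by enumerating the offending feature $x\in\{\di,\pi,\cpi,\conv{}\}$, with the $\di$, $\pi$, $\cpi$ subcases handled exactly as in the proof of Proposition~\ref{prop:path-int-Hasse}; and the reduction in the $\conv{}$ subcase, using the $\zlang{F_1}$ hypothesis to force $\pi\notin\alang{F_2}$ and hence $F_2\subseteq\{\cap,\setminus\}$, is precisely the paper's argument.

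The one gap you correctly flag as the ``Main obstacle'' --- a boolean separation of $\conv{}$ from $\BL(\setminus,\cap)$ --- is closed in the paper by a dedicated auxiliary result, Proposition~\ref{prop:conv-cross}, proved immediately before the main argument. Its proof is more elementary than either of the two approaches you suggest (brute-force or bisimulation): it reuses the two graphs $G_1,G_2$ of Figure~\ref{fig:separations}(a) and shows by a direct induction that every expression $e\in\BL(\setminus)$ evaluates on each $G_i$ to $\bigcup_{j\in Z} R^j(G_i)$ for the \emph{same} $Z\subseteq\{0,1,2\}$, making the graphs indistinguishable in $\BL(\setminus)$ (hence in $\BL(\setminus,\cap)$); the distinguishing query is $R^2\circ\conv{R}\circ R^2$. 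So you had the structure right, you just hadn't located the concrete witnessing pair --- which, incidentally, is the same pair already used for Proposition~\ref{prop:path-int}(\ref{int-pi}), only with a $\conv{}$-query instead of a $\pi$-query. Neither Proposition~\ref{prop:bool-converse-cap} nor the bisimulation machinery is needed here.
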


Towards a proof of Proposition\nobreakspace \ref {prop:bool-cross-Hasse}, we first
establish the following.

\begin{proposition} \label{prop:conv-cross} Let $F_1$ be a set of
  nonbasic features. If $\conv{} \in \fbar_1$, and $F_2 \subseteq \{
  \setminus, \cap\}$, then $\BL(F_1) \nltstrong \BL(F_2)$.
\end{proposition}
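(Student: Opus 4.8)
The plan is to produce a single pair of finite graphs witnessing the strong separation, reusing material already set up earlier in the paper. The first step is a reduction to the extreme case. Since the closure rules defining $\fbar$ (in Section~\ref{sec:path-queries}) never introduce $\conv{}$, the hypothesis $\conv{}\in\fbar_1$ is equivalent to $\conv{}\in F_1$, so $\BL(\conv{})\ltpath\BL(F_1)$; and since $\cap$ is definable from $\setminus$, every $F_2\subseteq\{\setminus,\cap\}$ satisfies $F_2\subseteq\overline{\{\setminus\}}$, hence $\BL(F_2)\ltpath\BL(\setminus)$ by Proposition~\ref{th:path-if}. Thus it suffices to exhibit finite graphs $G_1,G_2$ and a boolean query expressible in $\BL(\conv{})$ that distinguishes them, while no expression of $\BL(\setminus)$ does.

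The second step is to take for $G_1,G_2$ the pair of graphs of Figure~\ref{fig:separations}(a), which already appears in the proofs of Propositions~\ref{prop:conv-di} and~\ref{prop:path-int}(\ref{int-pi}). On the one hand, as noted in the proof of Proposition~\ref{prop:conv-di}, the boolean query expressed by $R^2\circ\conv R\circ R^2$ is true on one of these graphs and false on the other; this expression lies in $\BL(\conv{})$, hence the query is expressible in $\BL(F_1)$. On the other hand, as noted in the proof of Proposition~\ref{prop:path-int}(\ref{int-pi}), the same two graphs are indistinguishable at the boolean level in $\BL(\setminus)$: the key point is that, by structural induction, every expression of the basic algebra $\BL$ links in $G_1$ a pair of nodes by a directed path of length $\ell$ exactly when it does so in $G_2$ (the two graphs realize the same set of path lengths), and this is enough to conclude that the value of any $\BL(\setminus)$ expression is empty on $G_1$ if and only if it is empty on $G_2$. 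Combining the two observations yields $\BL(F_1)\nltstrong\BL(F_2)$.

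The only part that needs genuine care is the indistinguishability claim for $\BL(\setminus)$ — that is, making precise the ``same path lengths'' invariant for the concrete graphs of Figure~\ref{fig:separations}(a) and checking that it is preserved under composition, union, intersection and set difference. This is exactly the induction already carried out for Proposition~\ref{prop:path-int}(\ref{int-pi}), so in the write-up I would invoke it rather than repeat it; the remaining steps (the two language reductions, and verifying that $R^2\circ\conv R\circ R^2$ separates the two graphs) are routine and checkable by hand. The bisimulation machinery of Section~\ref{sec:bisim-invar} is not needed here, although it would provide an alternative proof of the indistinguishability claim via Proposition~\ref{theo-diff-bisimilar} if a fully self-contained argument were preferred.
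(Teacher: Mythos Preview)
Your proposal is correct and follows essentially the same approach as the paper: the same pair of graphs from Figure~\ref{fig:separations}(a), the same distinguishing expression $R^2\circ\conv R\circ R^2$, and the same indistinguishability of $G_1,G_2$ in $\BL(\setminus)$. The paper's own write-up restates the indistinguishability argument in a slightly sharper form (for every $e\in\BL(\setminus)$ there is a set $Z\subseteq\{0,1,2\}$ with $e(G_i)=\bigcup_{j\in Z}R^j(G_i)$), but this is exactly the invariant you refer to via Proposition~\ref{prop:path-int}(\ref{int-pi}).
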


\begin{proof}
% r2 r-1 r2 zigzag, line length 2
  % It can be verified by the Brute-Force Algorithm of 
  % Section\nobreakspace \ref {sec:brute-force-approach} that the graphs shown in
  % Figure\nobreakspace \ref {fig:separations}~(b) are not distinguishable in
  % $\BL(F_2)$. 
  Consider the graphs $G_1$ and $G_2$ displayed in Figure\nobreakspace \ref {fig:separations}~(a) and define $R^0(G_i)$ to equal $\id(G_i)$ for $i = 1,2$. First, notice that $\id(G_i)$, $R(G_i)$,and $R^2(G_i)$ are pairwise disjoint for $i = 1,2$. Utilizing this, it can be proven by straightforward induction that for every $e \in \elang{\setminus}$ there exists $Z \subseteq \{0,1,2\}$ such that $e(G_1) = \cup_{i \in Z} R^i(G_1)$ and $e(G_2) = \cup_{i \in Z} R^i(G_2)$. This clearly implies that $G_1$ and $G_2$ are indistinguishable in $\elang{\setminus}$. whence they are also indistinguishable in $\elang{F_2}$ as well since $\elang{F_2}\ltbool \elang{\setminus}$. The graphs, however, are distinguishable
  by the boolean query expressed by $R^2 \circ \conv{R} \circ R^2$.
\end{proof}

As detailed below, Propositions\nobreakspace \ref {prop:path-bottom},  \ref {prop:path-int},  \ref {prop:path-cross-comp} and\nobreakspace  \ref {prop:conv-cross}
are now subsequently used to show that for every pair $F_1$ and $F_2$
of sets of nonbasic features for which $F_1 \not \subseteq \alang{F_2}$ and $\zlang{F_1}\nsubseteq \alang{F_2}$,
that $\BL(F_1) \not\ltbool \BL(F_2)$, in the same way as in
Sections\nobreakspace \ref {sec:languages-bottom-bool} and\nobreakspace  \ref {sec:languages-cap-bool}.

The remainder of the proof of Proposition\nobreakspace \ref {prop:bool-cross-Hasse}
is again a combinatorial analysis to verify that the above-mentioned
propositions cover all relevant cases.  

\begin{proof}[Proof of Proposition\nobreakspace \ref {prop:bool-cross-Hasse}]
  If $F_1 \in \clcap$ and $F_2 \in \clbot$, then $\cap \in \alang{F_1}$ and 
  $\cap\not\in \alang{F_2}$. Hence Proposition\nobreakspace \ref {prop:path-cross-comp} directly implies our result. 

  Conversely, if $F_1 \in \clbot$ and $F_2\in \clcap$, then $x \in \{\di,\pi,\cpi,\conv{}\}$ is present in $F_1$, but lacking in $\alang{F_2}$. We will now consider every such $x$.

   If $x \in \{\di,\pi,\cpi\}$ then the proof proceeds as the proof of Proposition\nobreakspace \ref {prop:path-int-Hasse}.
   
  If $x = \conv{}$, then $\zlang{F_1} = (F_1\setminus \{\conv{}\}) \cup \{\pi\}$ since $F_1 \in \clbot$. 
  Furthermore, by hypothesis, there is a feature $x$ present in $\zlang{F_1}$ 
  which is not present in $\alang{F_2}$. Notice that $x \neq \conv{}$. 
  If $x \neq \pi$, then there exists a feature in $F_1$ other than $\conv{}$ 
  which is missing in $\alang{F_2}$, hence the result follows from the 
  previous case. On the other hand, if $x = \pi$, then 
  $F_2 \cap \{ \di,\pi,\cpi,\conv{} \} = \emptyset$. Hence 
  $F_2 \subseteq \{\setminus,\cap\}$, and thus the result 
  follows directly from Proposition\nobreakspace \ref {prop:conv-cross}.

\end{proof}% First, note that there are 156
% ordered pairs of languages $\BL(F_1)$ and $\BL(F_2)$ such that one
% language belongs to $\clbot$, and the other language belongs to
% $\clcap$, and such that $F_1 \not \subseteq \ftilde_2$.  Concretely,
% Proposition\nobreakspace \ref {prop:path-cross-comp} deals with 112 of these pairs,
% Proposition\nobreakspace \ref {prop:path-bottom} (\ref{bottom-di}) deals with
% another 32 pairs, Proposition\nobreakspace \ref {prop:path-bottom}
% (\ref{bottom-cpi}) deals with another 7 pairs,
% Proposition\nobreakspace \ref {prop:path-int} (\ref{int-pi}) deals with another 3
% pairs, and, finally, Proposition\nobreakspace \ref {prop:conv-cross} deals with the
% 2 remaining pairs.

Propositions\nobreakspace  \ref{th:path-if},\nobreakspace\ref{th:bool-if-cross},\nobreakspace  \ref {prop:bool-bottom-Hasse}, \ref {prop:bool-cap-Hasse} and\nobreakspace  \ref {prop:bool-cross-Hasse}, together
prove Theorem\nobreakspace \ref {th:bool-inclusion}.

Hence, the Hasse diagram of $\ltbool$ can be obtained from the
subdiagrams for $\clbot$, and $\clcap$ by simply adding arrows from $\BL$ to $\elang{\cap}$, $\elang{\di,\pi}$ to $\elang{\cap,\di,\pi}$, $\elang{\pi}$ to $\elang{\cap,\pi}$, $\elang{\cpi,\pi}$ to $\elang{\cap,\cpi,\pi}$ and $\elang{\di,\cpi,\pi}$ to $\elang{\cap,\di,\cpi,\pi}$. 
So, all paths between the subdiagrams are induced by these arrows, the 5 equations from the beginning of
Section\nobreakspace \ref {sec:path-queries}, and
Proposition\nobreakspace \ref {prop:converse-elimination}.

\section{Further research}
\label{sec:further-research}
There are alternative modalities for expressing boolean queries apart
from interpreting the nonemptiness of an expression as the value true and
emptiness as the value false.  For example,
one possibility is to consider a boolean query
$q$ expressible if there are two expressions $e_1$ and $e_2$ such that
$e_1(G) \subseteq e_2(G)$ if, and only if, $q(G)$ is true, for all $G$.
For some of our languages, such alternative modalities would not make
a difference, but it would for others. Looking into these alternative
modalities is an interesting topic for further research.

In the present paper, we have been focusing on expressive power,
but, of course, it is also interesting to investigate the
decidability of satisfiability or containment of expressions.
Much is already known.  From the undecidability of FO$^3$, it
follows that the most powerful language is undecidable, and the
same holds even without converse.  From the decidability of ICPDL
\cite{icpdl_jsl}, all languages without set difference have a
decidable satisfiability problem, although this is not yet known
for satisfiability restricted to finite relations.  An
interesting question is the decidability of satisfiability or
validity of the languages with set difference, but without the
diversity relation.  Recently, it has been shown that finite
satisfiability for the quite weak fragment $\BL(\setminus)$
without $\id$, formed by the operators union, composition, set
difference and nothing else, over a single binary relation, is
still undecidable \cite{tony_da_arxiv}.

Another natural question is whether the notion of arrow logic
bisimulation, that we use as a tool to prove some nonexpressibility
results, can actually be adapted to obtain characterizations of
indistinguishability in the various languages, as is the
case for modal logic \cite{gorankotto}.  We have in fact done this
for all languages with intersection
\cite{rabisim}.  A further question then is whether van
Benthem-style expressive completeness results \cite{otto_durham} can be
established.

Finally, there are still other interesting operators on binary
relations that can be considered.  A good example is
residuation \cite{pratt_relcalc}, a derived operator of the calculus of
relations, and interesting to consider separately, as we have done for
projection and coprojection. Residuation is interesting from a
database perspective because it corresponds to the set containment
join \cite{mamoulis_setjoin}.

\section*{Acknowledgment}

We thank the anonymous referees for their constructive feedback.
We thank Balder ten Cate and Maarten Marx for helpful information
on the question of succinctness of FO$^3$ compared to the algebra
$\mathcal N(\div,\conv{},-)$.

\newcommand{\etalchar}[1]{$^{#1}$}

\end{document}